\newtheorem{theorem}{Theorem}[section]
\newtheorem{lemma}[theorem]{Lemma}
\newtheorem{definition}[theorem]{Definition}
\newtheorem{claim}{Claim}
\newtheorem{subclaim}{Subclaim}
\algnewcommand\algorithmicinput{\textbf{Input:}}
\algnewcommand\INPUT{\item[\algorithmicinput]}
\algnewcommand\algorithmicoutput{\textbf{Output:}}
\algnewcommand\OUTPUT{\item[\algorithmicoutput]}
\def\wp{0.62}
\def\gap{1mm}
\def\cN{circular $\mathrm{N}$}
\def\fN{fuzzy $\mathrm{N}$}
\def\z6{$\mathrm{Z_6}$}
\def\zz{\mathrm{Z}}
\newcommand{\cG}{\mathcal{G}}
\newcommand{\cP}{\mathcal{P}}
\newcommand{\cU}{\mathcal{U}}
\newcommand{\al}{\ensuremath{\mathcal{L}}}
\title{Space complexity of list $H$-coloring revisited: the case of oriented trees\footnote{This research was supported by NSERC.}}
\author{L\'{a}szl\'{o} Egri}
\affil{Simon Fraser University}
\begin{document}

\maketitle

\begin{abstract}
	Digraphs $H$ for which the list homomorphism problem with template $H$ (LHOM($H$)) is in logspace ($\mathrm{L}$) was characterized by Egri et al.\ (SODA 2014): LHOM($H$) is in $\mathrm{L}$ if and only if $H$ does not contain a circular $\mathrm{N}$.\footnote{Assuming $\mathrm{L} \neq \mathrm{NL}$.} \emph{Undirected} graphs for which LHOM($H$) is in $\mathrm{L}$ can be characterized in terms forbidden induced subgraphs, and also via a simple inductive construction (Egri et al., STACS 2010). As a consequence, the logspace algorithm in the undirected case is simple and easy to understand. No such forbidden subgraph or inductive characterization, and no such simple and easy-to-understand algorithm is known in the case of digraphs. In this paper, \emph{in the case of oriented trees}, we refine and strengthen the results of Egri et al.\ (SODA 2014): we give a characterization of oriented trees $T$ for which LHOM($T$) is in $\mathrm{L}$ both in terms of forbidden induced subgraphs, and also via a simple inductive construction. Using this characterization, we obtain a simple and easy-to-analyze logspace algorithm for LHOM($T$). We also show how these oriented trees can be recognized in time $O(|V(T)|^3)$ (the straightforward implementation of the algorithm given in SODA 2014 runs in time  $O(|V(H)|^8)$ for oriented trees). An algebraic characterization of these trees is also provided.
\end{abstract}
\section{Introduction}

Given two digraphs $G$ and $H$, a homomorphism $\varphi : G \rightarrow H$ is a mapping $\varphi : V (G) \rightarrow V (H)$ such that $uv \in A(G)$ implies that $\varphi(u) \varphi(v) \in A(H)$, where $A(G)$ denotes the arc set of $G$. The corresponding algorithmic problem \emph{Digraph Homomorphism} asks if $G$ has a homomorphism to $H$. For example, it is easy to see that $G$ has a homomorphism into the clique $K_c$ if and only if G is $c$-colorable. Instead of digraphs, one can consider homomorphism problems in the more general context of relational structures. Feder and Vardi \cite{Feder93:monotone} observed that the standard framework for the Constraint Satisfaction Problem (CSP) can be formulated as homomorphism problems for relational structures. In fact, they showed that every such problem is equivalent to a Digraph Homomorphism problem, hence Digraph Homomorphism is as expressive as the CSP in general.

The expressive power of Digraph Homomorphism can be increased by introducing lists. Given digraphs $G$ and $H$ and a list $L(v) \subseteq V(H)$ for each $v \in V(G)$, a \emph{list homomorphism} $\varphi$ from $G$ to $H$ is a homomorphism from $G$ to $H$ such that $\varphi(v) \in L(v)$ for each $v \in V(G)$. The \emph{List Homomorphism problem with template $H$} (LHOM($H$)) is the following algorithmic problem. Given a digraph $G$ and a list $L(v)$ for each $v \in V(G)$, decide if there is a list homomorphism from $G$ to $H$.\footnote{We remark that LHOM$(H)$ is identical to CSP($\mathbb{B}$), where $\mathbb{B}$ is the relational structure that contains the binary relation that is the arc set of the digraph $H$, and a unary relation $U_S = S$ for each $S \subseteq V(H)$.} The List Homomorphism problem was introduced by Feder and Hell in 1998 \cite{FH:98:LHR}, and it has been studied since then extensively \cite{stacs_lhom,Feder/et_al:07:LHG,Feder/et_al:99:LHC,FederHH03,Gutin06:mincosthomomorphism,Hell/Rafiey:11:DLH}.

In this paper, we study List Homomorphism problems of logarithmic space complexity. Such problems with graph and digraph templates have been studied in \cite{stacs_lhom,soda_lhom,lics_lhom}. When $H$ is an undirected graph, LHOM($H$) is in logspace\footnote{These results assume that $\mathrm{L} \neq \mathrm{NL}$.} ($\mathrm{L}$) if and only if $H$ does not contain any of a certain set of graphs as an induced subgraph, or equivalently, if $H$ can be inductively constructed using two simple operations (see \cite{stacs_lhom} for details). We call these graphs \emph{skew decomposable}. Relying on this inductive construction, the logspace algorithm for LHOM($H$) is remarkably simple. When $H$ is a \emph{digraph}, LHOM($H$) is in $\mathrm{L}$ if and only if $H$ does not contain a so-called \emph{circular $\mathrm{N}$} (see \cite{soda_lhom}). We note that $H$ not containing a circular $\mathrm{N}$ is \emph{not} a forbidden induced subgraph characterization. Although the lack of a circular $\mathrm{N}$ in $H$ gives a unifying reason why LHOM($H$) is in $\mathrm{L}$, the first of the two existing logspace algorithms for LHOM($H$) is quite complicated (see \cite{soda_lhom}), and the second algorithm (\emph{a symmetric Datalog program}) requires an involved and technical analysis, which is the subject of \cite{lics_lhom}. This is in stark contrast to the simple and easy-to-analyze logspace algorithm for LHOM($H$) when $H$ is a skew decomposable undirected graph \cite{stacs_lhom}. Whether digraphs for which LHOM($H$) is in $\mathrm{L}$ can be characterized in terms of forbidden induced subgraphs, whether they enjoy an inductive construction, and whether such a construction could be used to build a simple logspace algorithm for LHOM($H$) are intriguing open problems. In this paper, we answer these questions for oriented trees in the positive. (We remark that our results re-prove the $\mathrm{L}-\mathrm{NL}$ dichotomy for LHOM($T$) when $T$ is an oriented tree.)

Graphs and digraphs $H$ for which LHOM($H$) is in $\mathrm{L}$ form a natural class, and we believe that our inductive characterization of such oriented trees could be of independent interest. In fact, the FPT-algorithm in \cite{esa_dlhom} uses the inductive characterization of skew decomposable graphs in an essential way.

\textbf{Detailed results and structure of the paper:} In Section~\ref{preliminaries}, we introduce basic concepts and define an oriented path we call \z6, and a class of oriented paths we call fuzzy $\mathrm{N}$-s. In Section~\ref{structural}, we describe a way to construct oriented trees inductively (Definition~\ref{constructible}). We proceed to prove the main combinatorial result of the paper: an oriented tree $T$ contains a $\mathrm{Z_6}$ or a fuzzy $\mathrm{N}$ as an induced subgraph if and only if $T$ can be constructed using the inductive construction in Definition~\ref{constructible} (Theorem~\ref{construction_theorem}). In Section~\ref{algorithm}, we briefly argue that if a tree $T$ contains a $\mathrm{Z_6}$ or a fuzzy $\mathrm{N}$ as an induced subgraph, then LHOM($T$) is $\mathrm{NL}$-hard. Then for oriented trees $T$ that do not contain any such induced subgraph, we provide a simple logspace algorithm relying on the aforementioned inductive characterization of $T$. In Section~\ref{equivalence}, we give an \emph{unconditional} proof that an oriented tree contains a circular $\mathrm{N}$ if and only if $T$ contains a $\mathrm{Z_6}$ or a fuzzy $\mathrm{N}$ as an induced subgraph. (Note that if we assume that $\mathrm{L} \neq \mathrm{NL}$, then there is a simpler argument.) In Section~\ref{algebra_sec}, we show that $T$ does not contain a $\mathrm{Z_6}$ or a fuzzy $\mathrm{N}$ as an induced subgraph if and only if $T$ admits a \emph{Hagemann-Mitschke chain of conservative polymorphisms of length $3$} (note that this is similar to the characterization of undirected graphs in \cite{stacs_lhom}). We also give an example of a digraph in Section~\ref{algebra_sec} that admits a Hagemann-Mitschke chain of conservative polymorphisms of length $n+1$ but not of length $n$ (Theorem~\ref{ladder_thm}). Therefore we can conclude that in this respect, general digraphs behave differently from oriented trees. In Section~\ref{faster}, we give a $O(|V(T)|^3)$ algorithm to recognize oriented trees that do not contain a $\mathrm{Z_6}$ or a fuzzy $\mathrm{N}$ as an induced subgraph. This is significantly faster than the $O(|V(T)|^8)$-time implementation of the recognition algorithm in \cite{soda_lhom} restricted oriented tree inputs.\footnote{Note that $O(|V(T)|^8)$ is the running time of the straightforward implementation of the recognition algorithm (in \cite{soda_lhom}) when inputs are assumed to be oriented trees. We also note that this algorithm runs faster on trees than on general digraphs. We made no attempt to improve the running time of this algorithm. See Appendix A.} 

For the sake of completeness, Appendix A contains both an inductive construction and an ``explicit'' characterization of oriented paths that contain no \z6\ or \fN\ as an induced subgraph.

We summarize the main results of this paper in Theorem~\ref{main}. Note that some parts of this theorem come from \cite{soda_lhom}, as explained below.
\begin{theorem}\label{main}
	Let $T$ be an oriented tree. Then the following conditions are equivalent:
	\begin{enumerate}
		\item $T$ contains no induced subgraph that is a \z6\ or a \fN;
		\item $T$ can be constructed inductively as in Definition~\ref{constructible};
		\item $T$ contains no circular $\mathrm{N}$;
		\item $T$ admits a chain of conservative Hagemann-Mitschke polymorphisms of length $3$;		
		\item $T$ admits a chain of conservative Hagemann-Mitschke polymorphisms of length $n$, for some $n \geq 1$.
	\end{enumerate}
	If the above conditions hold, then $\mathrm{LHOM(T)}$ is in $\mathrm{L}$. Otherwise $\mathrm{LHOM(T)}$ is $\mathrm{NL}$-hard.
\end{theorem}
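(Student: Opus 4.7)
The plan is to collect the equivalences from the technical sections of the paper, which makes this theorem a roadmap rather than a single unified argument. The equivalence (1)$\Leftrightarrow$(2) is precisely Theorem~\ref{construction_theorem}, the main combinatorial result of Section~\ref{structural}. The equivalence (1)$\Leftrightarrow$(3) is proved unconditionally in Section~\ref{equivalence}. The equivalence (1)$\Leftrightarrow$(4) is proved in Section~\ref{algebra_sec}. The implication (4)$\Rightarrow$(5) is trivial by taking $n = 3$. Hence the only implication that does not appear already as a separately labelled statement is (5)$\Rightarrow$(1) (equivalently (5)$\Rightarrow$(4)).

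For (5)$\Rightarrow$(1), I would use conservativity to pass to induced subgraphs. If $f$ is a conservative polymorphism of $T$, then $f(x_1,\ldots,x_k) \in \{x_1,\ldots,x_k\}$, so restricting each argument to the vertex set of an induced subgraph $T' \subseteq T$ still yields a polymorphism of $T'$, and the Hagemann-Mitschke identities are preserved pointwise. Consequently, if $T$ admits a chain of conservative Hagemann-Mitschke polymorphisms of length $n$, then so does every induced subgraph of $T$. It therefore suffices to show that neither \z6\ nor any \fN\ admits a conservative Hagemann-Mitschke chain of any finite length. I expect this to be the main obstacle, since Section~\ref{algebra_sec} directly addresses only length $3$. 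The natural route is to exhibit in each forbidden template a pp-definable ``linked'' pair on two adjacent vertices whose existence would force a Hagemann-Mitschke identity to fail, uniformly for every $n$; this is the standard universal-algebraic way of ruling out bounded-width obstructions in conservative algebras, and it matches the known $\mathrm{NL}$-hardness of $\mathrm{LHOM}$ on these templates.

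For the complexity part of the theorem, no detour through (3)--(5) is needed. Section~\ref{algorithm} shows directly that if (1) holds then the inductive construction from (2) yields a simple logspace algorithm for $\mathrm{LHOM}(T)$, and that if (1) fails then the induced \z6\ or \fN\ supports a direct $\mathrm{NL}$-hardness reduction. Together, these two arguments give the complexity dichotomy without invoking the circular $\mathrm{N}$ characterization of \cite{soda_lhom}, which also explains why the equivalence (1)$\Leftrightarrow$(3) in Section~\ref{equivalence} can be, and is, stated unconditionally.
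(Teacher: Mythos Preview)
Your decomposition is correct but organized differently from the paper's. The paper does not prove the pairwise equivalences $(1)\Leftrightarrow(2)$, $(1)\Leftrightarrow(3)$, $(1)\Leftrightarrow(4)$ independently and then close $(5)\Rightarrow(1)$ separately; instead it runs a single cycle $(1)\Rightarrow(2)\Rightarrow(4)\Rightarrow(5)\Rightarrow(3)\Rightarrow(1)$, where $(1)\Rightarrow(2)$ is Lemma~\ref{d1}, $(2)\Rightarrow(4)$ is Lemma~\ref{HM_chain_defined}, $(4)\Rightarrow(5)$ is trivial, $(3)\Rightarrow(1)$ is Lemma~\ref{z->cN}, and the step $(5)\Rightarrow(3)$ is simply \emph{quoted} from \cite{soda_lhom}. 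That quoted result is precisely ``a digraph admitting a conservative HM-chain of any length contains no circular $\mathrm{N}$'', and it is the piece you are trying to reinvent in your $(5)\Rightarrow(1)$ argument. Your restriction-to-induced-subgraphs step is fine (conservativity does allow you to pass to $\mathrm{Z}_6$ or a fuzzy $\mathrm{N}$), but the endgame you sketch---showing those specific templates admit no HM-chain of any length via a ``pp-definable linked pair''---is exactly the content of the \cite{soda_lhom} result, since by Lemma~\ref{z->cN} both $\mathrm{Z}_6$ and any fuzzy $\mathrm{N}$ contain a circular $\mathrm{N}$. So your route is not wrong, but it replaces a one-line citation by a reproof of a known general fact; the paper's cycle avoids this entirely.

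On the complexity addendum: the paper does \emph{not} avoid \cite{soda_lhom}. The logspace upper bound under $(2)$ is indeed the new inductive algorithm of Section~\ref{algorithm}, but for $\mathrm{NL}$-hardness when $(1)$ fails the main text still routes through ``$\mathrm{Z}_6$ or fuzzy $\mathrm{N}$ $\Rightarrow$ circular $\mathrm{N}$ $\Rightarrow$ $\mathrm{NL}$-hard by \cite{soda_lhom}''. The direct pp-definition of the order relation over $\mathrm{Z}_6$ or a fuzzy $\mathrm{N}$ that you allude to is mentioned only in a footnote as an alternative, not as the official argument.
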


The outline of the proof of this theorem is the following:\footnote{In Appendix B, we give direct proofs that (2) $\Rightarrow$ (1) and (1) $\Rightarrow$ (3), and we also give a not direct but short proof that (4) $\Rightarrow$ (1).}
\begin{itemize}
\item (1) $\Rightarrow$ (2) is the content of Lemma~\ref{d1}.
\item (2) $\Rightarrow$ (4) is the content of Lemma~\ref{HM_chain_defined}.
\item (4) $\Rightarrow$ (5) is trivial.
\item (5) $\Rightarrow$ (3) is by \cite{soda_lhom}.
\item (3) $\Rightarrow$ (1) is the content of Lemma~\ref{z->cN}.
\end{itemize}
It is shown in \cite{soda_lhom} that if $T$ contains no \cN\ then LHOM($T$) is in $\mathrm{L}$, and otherwise LHOM($T$) is $\mathrm{NL}$-hard. However, as discussed above, the merit of the logspace algorithm in this paper is its simple inductive nature.
\section{Preliminaries}\label{preliminaries}

\subsection{Digraphs and related concepts}
Let $G$ be a digraph. (All digraphs in this paper are finite.) An arc of $G$ from vertex $a$ to vertex $b$ is denoted by $ab$. We call $a$ and $b$ the \emph{endpoints} of $ab$. If we want to be more specific, we call $a$ the \emph{tail} and $b$ the \emph{head} of $ab$. If $v \in V(G)$ we call $u$ an \emph{inneighbour} (\emph{outneighbours}) of $v$ if $uv \in A(G)$ ($vu \in A(G)$). The \emph{indegree (outdegree)} of a vertex is the number of its inneighbours (outneighbours). A digraph $G$ is \emph{connected} if the undirected graph obtained from $G$ by replacing arcs with undirected edges (the \emph{underlying undirected graph}) is connected. For a disconnected digraph $G$, a \emph{component} of $G$ is a maximal subgraph that is connected.

If $V(G)$ can be partitioned into non-empty sets called \emph{vertex levels} $L_0,\dots,L_n$ such that for each arc $ab$, $a \in L_i$ and $b \in L_{i+1}$ for some $i < n$, then we call $G$ \emph{leveled}. Observe that if a connected digraph is leveled, then the partition $L_0,\dots,L_n$ is unique. We call $L_0$ the \emph{bottom} vertex level and $L_n$ the \emph{top} vertex level. For a vertex $v \in V$, we say that \emph{$v$ is in the bottom (top) level if $v \in L_0$ ($v \in L_n$)}. Given a vertex $v$ of $G$, we use $\ell(v)$ to denote the index such that $v \in L_{\ell(v)}$. For $0 \leq i \leq n - 1$ , we denote the set of arcs of $G$ with one endpoint in $L_i$ and the other one in $L_{i+1}$ with $\al_i$. We call $\al_i$ the \emph{$i$-th arc level} of $G$. We define \emph{bottom} and \emph{top arc levels} in the natural way. If we need to be explicit, we write $L^G_i$ ($\al_i^G$) instead of $L_i$ ($\al_i$) to mean the $i$-th vertex (arc) level of digraph $G$. Note that we can think of $\al_i$ as a digraph, and we will often do so without explicitly mentioning. Also note that if $u$ is a vertex in $L_i$ or $L_{i+1}$ such that $u$ is not the endpoint of an arc in $\al_i$ then $u$ \emph{does not} belong to the digraph $\al_i$. Given an arc $a$ of $G$, we use $\ell(a)$ to denote the index such that $a \in \al_{\ell(a)}$. (Note that using $\ell$ for both vertex and arc levels will cause no confusion.)

Given a digraph $G$, $r(G)$ denotes the digraph obtained from $G$ by replacing every arc $ab$ with $ba$. An \emph{oriented walk} $W$ is a sequence of vertices $a_1 a_2 \dots a_m$, where precisely one of $a_i a_{i+1}$ or $a_{i+1} a_i$ is an arc. Arc $a_i a_{i+1}$ is called a \emph{forward} arc, and $a_{i+1} a_i$ a \emph{backward} arc. Let $W = a_1 a_2 \dots a_m$ be an oriented walk. An \emph{oriented path} is a \emph{simple} oriented walk, i.e., each vertex in the walk appears only once. Suppose that $e$ is an arc of $W$ with endpoints $a_j$ and $a_{j+1}$, and $e'$ is an arc of $W$ with endpoints $a_{k-1}$ and $a_k$ (note that both $e$ or $e'$ could be backward or forward). Then $W(a_j,a_k)$ (a walk from vertex $a_j$ to $a_k$), $W(e,a_k)$ (a walk from arc $e$ to vertex $a_k$), $W(a_j,e')$ (a walk from a vertex $a_j$ to arc $e'$) and $W(e,e')$ (a walk from arc $e$ to arc $e'$) all denote the subwalk $a_ja_{j+1}\dots a_k$ of $W$. For an oriented path $W = a_1 a_2 \dots a_m$, there is a natural total order $\preceq$ on its vertices, i.e., $a_i \preceq a_j$ if and only if $i \leq j$. This order helps us refer to parts of $W$: the \emph{first} and \emph{last} vertices of $W$ are $a_1$ and $a_n$, respectively. A vertex $a_i$ of $W$ is \emph{before} (\emph{after}) $a_j$ if $i \preceq j$ ($j \preceq i$). We use $\bar{W}$ to denote the path that is isomorphic to $W$, but the order associated with the path is reversed, i.e., $\bar{W} = a_m a_{m-1} \dots a_1$. If $P=a_1\dots a_n$ and $Q = b_1 \dots b_m$ are two oriented paths, then $PQ$ is the \emph{concatenation} of $P$ and $Q$, i.e., the oriented path $P=a_1\dots a_nb_2 \dots b_m$, where we identify the last vertex of $P$, $a_n$, and the first vertex of $Q$, $b_1$ (i.e., the arc on vertices $a_n$ and $b_2$ is $a_nb_2$ if $b_1b_2$ is an arc of $Q$, and it is $b_2a_n$ if $b_2b_1$ is an arc of $Q$). The \emph{height} of an oriented walk $W$, denoted by \emph{$height(W)$}, is the number of different vertex levels in which $W$ contains at least one vertex minus $1$. If $\ell(a_1) < \ell(a_n)$, then we say $W$ is an \emph{upward} walk, and if $\ell(a_1) > \ell(a_n)$, then we say that $P$ is a \emph{downward} walk. (When $\ell(a_1) = \ell(a_n)$, the walk is neither upward nor downward.) The \emph{net length} of an oriented path is the number of forward arcs minus the number of backward arcs in the walk, and it is denoted by $net(W)$.

An \emph{oriented tree} is a digraph such that the underlying undirected graph is a tree. Observe that an oriented tree $T$ is always leveled. Furthermore, let $a$ be a vertex or an arc of $T$, and let $b$ also be a vertex or an arc of $T$. Then observe that since $T$ is a tree, the oriented path $P(a,b)$ is unique.

In what follows, when we say that digraph $X$ is \emph{of the form} $Y$, where $Y$ is digraph, we mean that there is an isomorphism between $X$ and $Y$. Similarly, saying that digraph $X$ \emph{is} a $Y$ means $X$ is isomorphic to $Y$.

\begin{definition}\label{Z_6}
	$\mathrm{Z}_i^s$, where $1 \leq i$ and $s \in \{f=0,f=1,l=0,l=1\}$, is used to denote oriented paths of the following form. $\mathrm{Z}_i^s$ is of the form $a_1\dots a_i$, where if $a_ja_{j+1}$ is a forward (backward) arc, then $a_{j+1}a_{j+2}$ is a backward (forward arc) arc. Observe that it is always the case that $height(\mathrm{Z}_i^s)=1$, and therefore $\mathrm{Z}_i^s$ has two vertex levels, a bottom level $L_0$ and a top level $L_1$. The superscripts $f=0$ and $f=1$ stand for the first vertex $a_1$ being in $L_0$ and $L_1$, respectively. Similarly, the superscripts $l=0$ and $l=1$ stand for the last vertex $a_i$ being in $L_0$ and $L_1$, respectively. $\mathrm{Z}_i$ stands for either $\mathrm{Z}_i^{f=0}$ or $\mathrm{Z}_i^{f=1}$. Observe that $\mathrm{Z}_1$ is a single vertex. $\mathrm{Z}$ stands for $\mathrm{Z}_i$ for some $i$. A $\mathrm{Z}_6^{f=0}$ can be seen in Figure~\ref{f_constr}.
\end{definition}

\begin{definition}\label{def_fuzzyN}
	Let $n$ be a positive integer. We say that a digraph $P$ is a \emph{fuzzy path} if $P$ or $\bar{P}$ is of the form $P_1 P_2 \dots P_n$, where
	\begin{itemize}
		\item If $n = 1$, then $P_1$ is $\mathrm{Z}_i$, where $1\leq i \leq 5$.
		\item If $n \geq 2$, then $P_1$ is $\mathrm{Z}_i^{l=1}$ for some $i \leq 5$, and $P_n = \mathrm{Z}_j^{f=0}$ for some $j \leq 5$.
		\item For each $1 < i < n$, $P_i$ is either of the form $\mathrm{Z}_2^{f=0}$, or $\mathrm{Z}_4^{f=0}$.
	\end{itemize}
	A fuzzy path $P$ is \emph{minimal}, if it has only one vertex in both $L_0$ and $L_{height(P)}$.
	
	An oriented path $P$ is a \emph{fuzzy $\mathrm{N}$} if $P$ is of the form $P_1T\bar{P}_2BP_3$, where $P_1$, $P_2$ and $P_3$ are minimal fuzzy paths of the same height, and height at least $2$, $T$ is of the form $\mathrm{Z}_1$ or $\mathrm{Z}_3^{f=1}$, and $B$ is of the form $\mathrm{Z}_1$ or $\mathrm{Z}_3^{f=0}$.  A fuzzy $\mathrm{N}$ is illustrated in Figure~\ref{f_constr}.
\end{definition}

\begin{figure}[htb]
	\begin{center}
		\includegraphics[scale=\wp]{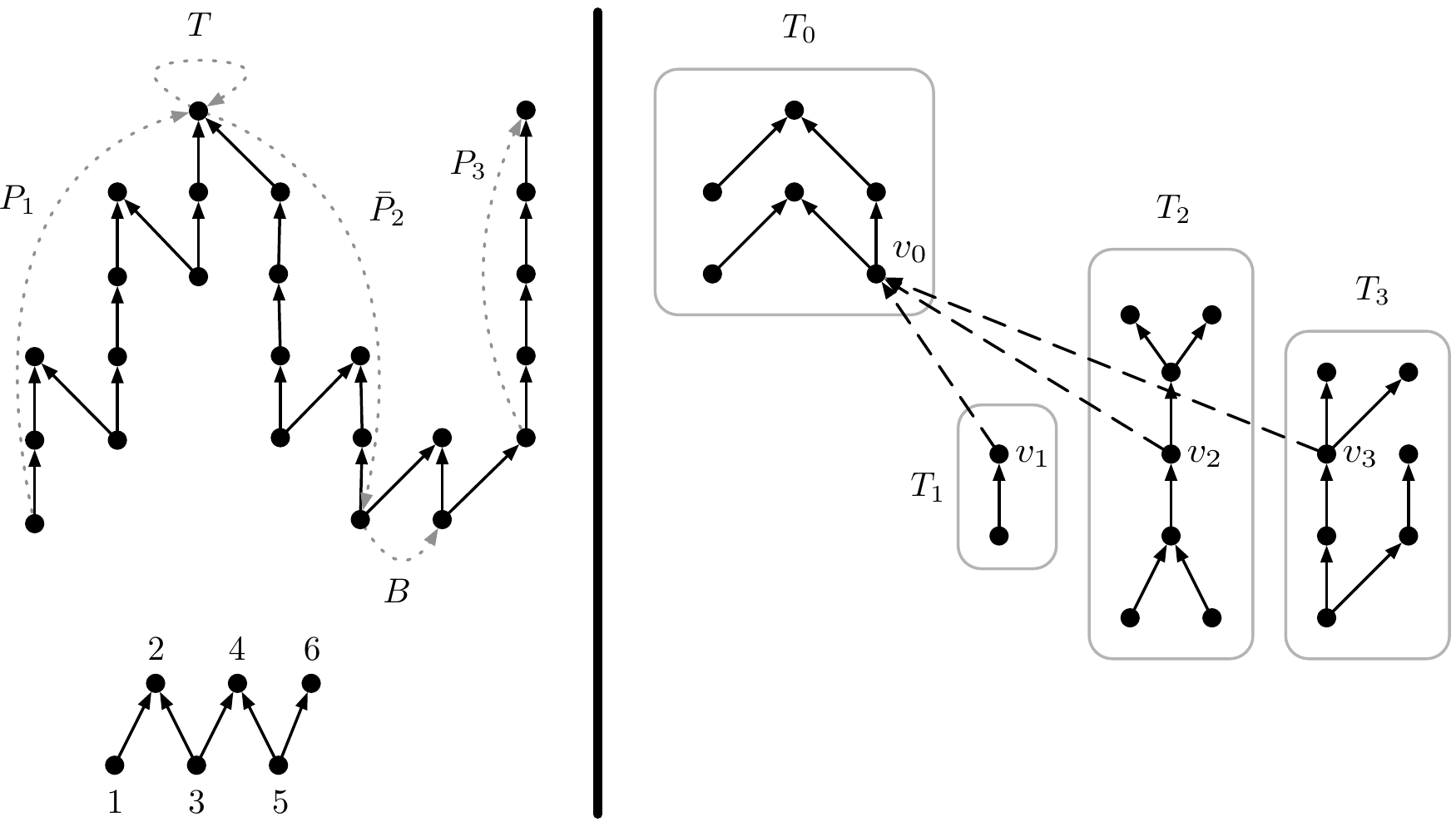}
	\end{center}
	\caption{A fuzzy $\mathrm{N}$ in which $T$ is a single vertex (top left), and a $\mathrm{Z}_6^{f=0}$ (bottom left). The up-join of $T_0,T_1,T_2,T_3$ (right).}\label{f_constr}
\end{figure}

\subsection{Homomorphisms and polymorphisms}

The List Homomorphism problem LHOM($H$) has already been defined in the Introduction. Let $G$ and $H$ be leveled connected digraphs. Notice that any (list)-homomorphism $h$ from $G$ to $H$ must be \emph{level-preserving}:
\begin{itemize}
	\item if $L^G$ is a vertex level of $G$, then $h(L^G) \subseteq L^H$, where $L^H$ is some vertex level of $H$, and
	\item if $u \in L^G_i$ and $v \in L_{i'}^G$, where $L_i^G$ and $L_{i'}^G$ are some vertex levels of $G$ ($0 \leq i,i' \leq height(G)$), then $\ell(h(u)) - \ell(h(v)) = i - i'$.
\end{itemize}
This level-preserving property of homomorphisms will be used implicitly.

\begin{definition}\label{def-HM} Let $H$ be a digraph. An operation $f : V(H)^m \rightarrow V(H)$
	is a polymorphism of $H$ if $f(v_{11},v_{12},\dots,v_{1m})f(v_{21},v_{22},\dots,v_{2m}) \in A(H)$ whenever
	$v_{11}v_{21},v_{12}v_{22},\dots,v_{1m}v_{2m} \in A(H)$. Operation $f$ is \emph{conservative} if $f(v_1,\dots,v_m) \in \{v_1,\dots,v_m\}$. A sequence $f_1, \dots, f_k$ of ternary operations is called a \emph{Hagemann-Mitschke chain of length $k$} if it satisfies the identities
	\begin{itemize}
		\item $x = f_1(x,y,y)$
		\item $f_i(x,x,y) = f_{i+1}(x,y,y)$ for all $i=1, \dots, k-1$
		\item $f_k(x,x,y)=y.$
	\end{itemize}
	We say that $H$ {\em admits} an HM-chain $f_1,f_2,\dots,f_k$ if each $f_i$ is a polymorphism of $H$.
\end{definition} 
\section{A structural characterization}\label{structural}

In this section, we define an inductive construction of oriented trees, and show that oriented trees that can be constructed this way are precisely the oriented trees that do not contain a $\mathrm{Z_6}$ or fuzzy $\mathrm{N}$ as an induced subgraph (Theorem~\ref{construction_theorem}).

\begin{definition}[Inductive construction]\label{constructible}
	\label{L_construction}
	Let $T_0,T_1,\dots,T_n$ be oriented trees.
	\begin{enumerate}
		\item Let $v_0$ be a vertex in the bottom vertex level of $T_0$;
		\item Let $v_i$ be a vertex of $T_i$, for each $1 \leq i \leq n$, either in the top (bottom) vertex level of $T_i$, or such that $v_i$ is the only vertex in $L_{\ell(v_i)}^{T_i}$ with out-degree (in-degree) greater than $0$.
	\end{enumerate}
	The \emph{up-join (down-join)} of $T_0,T_1,\dots,T_n$ is the oriented tree obtained by taking the disjoint union of $T_i$ for $1 \leq i \leq n$, and adding all arcs $v_iv_0$ ($v_0v_i$) for each $1 \leq i \leq n$. An example of this construction is given in Figure~\ref{f_constr}.
	
	We call $T_0$ the \emph{central tree}, $v_0$ the \emph{central vertex}, and $v_i$, where $1 \leq i \leq n$, the \emph{join vertices}. When we specify a list of trees and we take their up-join (down-join) the first tree in the list is always meant to be the central tree.
	
	If $G$ is an oriented tree with a single vertex, we say that $G$ is \emph{constructible}. Inductively, if $T_0,T_1,\dots,T_n$ are constructible, then their up-join and down-join (for some central and join vertices satisfying the conditions above) are also \emph{constructible}.
\end{definition}

The main structural result of this section is the following theorem.
\begin{theorem}\label{construction_theorem}
	An oriented tree is constructible if and only if $T$ contains neither $\mathrm{Z_6}$, nor a fuzzy $\mathrm{N}$ as an induced subgraph.
\end{theorem}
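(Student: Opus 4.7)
The plan is to prove both implications of the iff. The forward direction (constructible implies no forbidden induced subgraph) proceeds by induction on the number of joins in the construction, while the backward direction (no forbidden induced subgraph implies constructible) proceeds by induction on $|V(T)|$, by locating an appropriate central vertex and decomposing $T$ around it.

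For the forward direction, the base case of a single vertex is immediate. For the inductive step, suppose $T$ is the up-join of constructible trees $T_0,T_1,\dots,T_n$ with central vertex $v_0$ and join vertices $v_1,\dots,v_n$ (the down-join case is analogous). Assume inductively that no $T_i$ contains an induced \z6\ or \fN. Suppose toward contradiction that $T$ contains an induced path $S$ that is a \z6\ or \fN. Since $v_0$ is a cut vertex separating $T_0 \setminus \{v_0\}$ from each $T_i \setminus \{v_i\}$ with $i \geq 1$, either $S$ lies entirely in one $T_i$ (impossible by induction), or $S$ passes through $v_0$. In the latter case, the two sub-paths of $S$ on either side of $v_0$ each live in a single $T_i$; call these $T_j$ and $T_{j'}$. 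Using the join-vertex condition on $v_j$ and $v_{j'}$ (in the top level of its $T_i$, or unique in its level with positive out-degree), one can replace the short segment of $S$ passing through $v_0$ with a segment supplied by the structure of $T_j$ or $T_{j'}$ itself; this produces an induced \z6\ or \fN\ lying inside a single $T_i$, contradicting the inductive hypothesis.

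For the backward direction, the base case is trivial. For the inductive step, choose a central vertex $v_0$ extremally, for instance a vertex of in-degree at least $1$ lying at an appropriate extremal level of $T$ (or at an extremal end of a longest directed path in $T$), so that all inneighbours of $v_0$ behave suitably. Let $v_1,\dots,v_n$ be the inneighbours of $v_0$, let $T_i$ be the component of $T \setminus \{v_0\}$ containing $v_i$ for $i \geq 1$, and let $T_0$ be the component of $T \setminus \{v_1,\dots,v_n\}$ containing $v_0$. Since every inneighbour of $v_0$ has been excised, $v_0$ is automatically in the bottom vertex level of $T_0$. The crucial verification is that each $v_i$ satisfies the join-vertex condition in $T_i$. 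Suppose some $v_i$ violates both alternatives: then $T_i$ contains some vertex $w \neq v_i$ at the same level as $v_i$ with positive out-degree. Tracing $w$ upward together with the continuations of the path above $v_i$ and around $v_0$ — and, when $n \geq 2$, also using another inneighbour $v_{i'}$ — one constructs either an induced \z6\ spanning the two levels around $v_i$, or an induced \fN\ whose three minimal fuzzy paths are anchored by upward continuations of $v_i$, $w$, and of a third vertex reached via $v_0$. This contradicts the hypothesis, so the chosen $v_i$'s are indeed valid join vertices; applying induction to each $T_i$ and to $T_0$ completes the construction.

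The main obstacle is the backward direction, and in particular the extremal choice of $v_0$: a naive choice can leave some $v_i$ that fails both alternatives of the join condition, so the extremality must be exploited with care (one may need to iteratively refine the choice of $v_0$, or re-root the decomposition, until every $v_i$ passes). The height-$\geq 2$ requirement and the minimality condition on the fuzzy-path components in the definition of \fN\ are exactly what make such extremal arguments go through: they rule out precisely the pathological local configurations that would otherwise prevent a valid inductive decomposition, so their absence in $T$ guarantees that a suitable central vertex exists.
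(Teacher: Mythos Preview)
Your forward direction is on the right track and in spirit matches the paper's direct argument (Lemma~\ref{d2} in Appendix~B): induct on the construction and argue that a forbidden subgraph crossing the join would force one inside some $T_i$. The details you sketch (``replace the short segment of $S$ passing through $v_0$'') are not quite how the paper argues it, but the overall shape is correct and can be made to work.

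The backward direction, however, has a genuine gap, and you have essentially identified it yourself in your final paragraph without filling it. The heart of the paper's proof is a structural lemma (Lemma~\ref{technical_lemma}): if $T$ contains no \z6\ and no \fN, then \emph{some arc level $\al_\alpha$ of $T$ has at most one connected component}. This is the real work of the section; its proof builds a sequence of auxiliary fuzzy paths $F_0,\dots,F_q$ branching off a fixed fuzzy ``spine'' $S$, classifies them into types, and repeatedly extracts a \fN\ via Lemma~\ref{fuzzy_N_present} from any hypothetical obstruction. Once such an arc level is in hand, its unique component is an in- or out-spider (Lemma~\ref{h1}), and the root of that spider is the correct central vertex; the paper then takes $T_0$ to be the \emph{up-component} of $T$ at $v_0$ and the remaining $T_i$ to be the components of $T\setminus T_0$.

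Your ``choose $v_0$ extremally'' is not a substitute for this lemma, and your decomposition as written is incorrect. You assert that since every inneighbour of $v_0$ has been excised, $v_0$ lies in the bottom level of your $T_0$ (the component of $T\setminus\{v_1,\dots,v_n\}$ containing $v_0$). This is false: from $v_0$ one can go up to an outneighbour $w$, across to another inneighbour $u\neq v_0$ of $w$, and then down below level $\ell(v_0)$ --- none of these vertices is among the $v_i$, so they all remain in your $T_0$. What rules this out in the paper is precisely the one-component property of $\al_\alpha$: it forces every vertex of the up-component at level $\ell(v_0)$, other than $v_0$ itself, to have in-degree zero in $T$. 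Your proposal contains no mechanism that guarantees this, and the suggestion to ``iteratively refine'' or ``re-root'' is not an argument. Until you either prove an analogue of Lemma~\ref{technical_lemma} or give a concrete extremal criterion together with a proof that it yields a valid up-/down-join decomposition, the backward direction does not go through.
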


We need the following lemma a number of times to prove the existence of fuzzy $\mathrm{N}$'s.

\begin{lemma}\label{fuzzy_N_present}
	Let $P$ be an oriented path. Then $P$ contains vertices  $b_1,t_1,b_2,t_2$ with the following properties:
	\begin{enumerate}
		\item when we traverse $P$ from first vertex to last vertex, we encounter $b_1,t_1,b_2,t_2$ in this order,
		\item $b_1$ and $b_2$ are in $L_x$ for some $x$, and $t_1$ and $t_2$ are in $L_y$ for some $y$, where $y \geq x + 2$,
		\item no vertex of $P(b_1,t_2)$ is in level $L_{x-1}$ or level $L_{y+1}$,
	\end{enumerate}
	if and only if $P$ contains a fuzzy $\mathrm{N}$.
\end{lemma}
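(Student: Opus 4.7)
The lemma is an equivalence and I would prove the two directions separately.

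\emph{Easy direction ($\Leftarrow$).} Suppose $P$ contains an induced fuzzy $\mathrm{N}$, say $Q = P_1 T \bar{P}_2 B P_3$. Because $Q$ is a connected oriented path, any induced copy of it inside $P$ is a contiguous subpath $P(u,v)$. Let $L_x$ and $L_y$ be the bottom and top levels of $Q$; these are some two vertex levels of $P$ with $y - x = \text{height}(Q) \geq 2$. I take $b_1$ to be the unique $L_x$-vertex of $P_1$ (unique by minimality), $t_1$ the last vertex of $P_1$ (the $L_y$-vertex at the $P_1$-$T$ junction), $b_2$ the last vertex of $\bar{P}_2$ (the $L_x$-vertex at the $\bar{P}_2$-$B$ junction), and $t_2$ the last vertex of $P_3$ (in $L_y$). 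Property (1) follows from the left-to-right layout of $Q$; property (2) is immediate; property (3) holds because $P(b_1,t_2) \subseteq Q$ lies entirely in $\{L_x,\ldots,L_y\}$.

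\emph{Hard direction ($\Rightarrow$).} Given $b_1,t_1,b_2,t_2$ satisfying (1)-(3) with $h := y - x \geq 2$, I construct a fuzzy $\mathrm{N}$ inside $P(b_1,t_2)$. First I \emph{normalize} the chosen vertices: replace $b_1$ by the last $L_x$-vertex of $P$ before $t_1$, then $t_1$ by the first $L_y$-vertex after this new $b_1$, and symmetrically for $b_2,t_2$. After normalization the three sub-walks $U_1 = P(b_1,t_1)$, $D = P(t_1,b_2)$, $U_2 = P(b_2,t_2)$ each touch $L_x$ only at their low endpoint(s) and $L_y$ only at their high endpoint(s), and still stay inside $[L_x,L_y]$ thanks to (3). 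Next, inside each of these three tight sub-walks I carve out a contiguous minimal fuzzy path of height $h$, to serve as $P_1$, $\bar{P}_2$, and $P_3$ respectively. What remains at the top (between the end of $P_1$ and the start of $\bar P_2$) and at the bottom (between the end of $\bar P_2$ and the start of $P_3$) lies in the two consecutive levels $L_y, L_{y-1}$ or $L_x, L_{x+1}$, and the normalization forces each such residual piece to be either a single vertex or an alternating path of length two, i.e. exactly $\mathrm{Z}_1$ or $\mathrm{Z}_3^{f=1}$ (for $T$), and $\mathrm{Z}_1$ or $\mathrm{Z}_3^{f=0}$ (for $B$). The concatenation $P_1 T \bar{P}_2 B P_3$ is then a contiguous, hence induced, subpath of $P$ of the required form.

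The main obstacle is the carving step, because an arbitrary upward walk of height $h$ from $L_x$ to $L_y$ need not itself be a fuzzy path: the definition only allows $\mathrm{Z}_2^{f=0}$ or $\mathrm{Z}_4^{f=0}$ as intermediate pieces, which caps the number of vertices any intermediate level contributes. I would prove the carving by induction on $h$: given a normalized upward walk from $L_x$ to $L_y$, use the (unique, by normalization) $L_x$-vertex $v$ and the next few vertices above it to peel off a bottom piece $P_1' \in \{\mathrm{Z}_2^{l=1},\mathrm{Z}_3^{l=1}\}$ -- the normalization together with property (3) forces the walk near $v$ to stay in $\{L_x,L_{x+1}\}$ until a first strict ascent to $L_{x+2}$, bounding the number of oscillations at the bottom. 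Recursing on the truncated upward walk of height $h-1$ contained in $[L_{x+1},L_y]$ yields the remaining pieces $P_2',\ldots,P_h'$, and a symmetric peel from the top fixes the last piece. Applying this carving in $U_1$, $D$ (reversed), and $U_2$, and assembling with the residual top/bottom pieces, produces the desired fuzzy $\mathrm{N}$.
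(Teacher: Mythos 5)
Your backward direction is fine. The forward direction, however, breaks at the carving step, and no unconditional carving argument can succeed, because the implication (1)--(3) $\Rightarrow$ fuzzy $\mathrm{N}$ is actually false without a further hypothesis. Take $P=p_1\dots p_{13}$ with vertex levels $0,1,2,1,0,1,0,1,0,1,0,1,2$, so that $p_5p_6\dots p_{11}$ is a $\mathrm{Z}_7$ oscillating between $L_0$ and $L_1$. Then $b_1=p_1$, $t_1=p_3$, $b_2=p_{11}$, $t_2=p_{13}$ satisfy (1)--(3) with $x=0$, $y=2$, but $P$ contains no fuzzy $\mathrm{N}$: an induced fuzzy $\mathrm{N}$ in a path is a contiguous subpath; here it would have to use both $L_2$-vertices $p_3$ and $p_{13}$, hence contain $p_3\dots p_{13}$, which has four $L_0$-vertices, whereas a height-$2$ fuzzy $\mathrm{N}$ contains at most three vertices of its bottom level (and the analogous count rules out the ``upside-down'' placement). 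What $P$ does contain is a $\mathrm{Z}_6$, namely $p_5\dots p_{10}$. The lemma is only ever applied in the paper to paths without an induced $\mathrm{Z}_6$, and the paper's proof explicitly uses that fact to conclude that the three sub-walks are fuzzy; your proof never invokes $\mathrm{Z}_6$-freeness, so it cannot be correct as written.

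The concretely false step is the claim that normalization ``bounds the number of oscillations.'' Normalization only forces each of $U_1$, $D$, $U_2$ to meet the extreme levels $L_x$ and $L_y$ exactly once; it says nothing about intermediate arc levels. A normalized upward walk can run $L_x\to L_{x+1}\to L_{x+2}\to L_{x+1}\to\dots\to L_{x+2}\to L_{x+3}\to\dots$, placing a $\mathrm{Z}_6$ or longer inside the arc level between $L_{x+1}$ and $L_{x+2}$; since that alternating stretch separates the unique $L_x$-vertex from the unique $L_y$-vertex, no contiguous minimal fuzzy path of full height can be carved around it (intermediate pieces of a fuzzy path are capped at $\mathrm{Z}_4$). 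Two further ingredients of the paper's proof are also missing: (i) an arc level of a sub-walk may split into several components (the walk leaves the level upward or downward and returns); the paper disposes of this by recursively replacing $(b_1,t_1,b_2,t_2)$ with a nested witness inside one of the three sub-walks, which endpoint normalization does not achieve; (ii) the residual pieces you assign to $T$ and $B$ need not be $\mathrm{Z}_1$ or $\mathrm{Z}_3$ merely because they lie in two consecutive levels --- again it is $\mathrm{Z}_6$-freeness that caps their length. With the hypothesis that $P$ has no induced $\mathrm{Z}_6$ added and normalization replaced by the recursive reduction, your argument essentially becomes the paper's.
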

\begin{proof}
	Suppose that $P$ contains vertices $b_1,t_1,b_2,t_2$ with the above properties. Suppose that there is a path $Q$ among $P(b_1,t_1)$, $\bar{P}(b_2,t_1)$ or $P(b_2,t_2)$ such that $Q$ contains vertices $b_1',t_1',b_2',t_2'$ with same properties as $b_1,t_1,b_2,t_2$ in $P$, respectively. Then we work with $Q$ instead of $P$. Repeating this argument sufficiently many times, we can assume without loss of generality that each arc level of each subpath $P(b_1,t_1)$, $\bar{P}(b_2,t_1)$ and $P(b_2,t_2)$ of $P$ contains only one component of the given subpath. Since in addition, none of $P(b_1,t_1)$, $\bar{P}(b_2,t_1)$ and $P(b_2,t_2)$ contain $\mathrm{Z_6}$ as an induced subgraph, each of $P(b_1,t_1)$, $\bar{P}(b_2,t_1)$ and $P(b_2,t_2)$ satisfies the $3$ conditions in Definition~\ref{def_fuzzyN}. Therefore these subpaths are fuzzy. By throwing away unnecessary vertices, we can also assume that $b_1$ and $t_2$ are the only vertices of $P(b_1,t_1)$ and $P(b_2,t_2)$ in $L_x$ and $L_y$, respectively.
	
	Now it is easy to see that $P(b_1,t_2)$ is a \fN: let $t_1'$ be the first vertex of $P(b_1,t_1)$ in $L_y$, and let $t_1''$ the last vertex of $\bar{P}(b_2,t_1)$ in $L_y$. Let $T = P(t_1',t_1'')$. Similarly, let $b_2'$ the first vertex of $\bar{P}(b_2,t_1)$ in $L_x$, and let $b_2''$ be the last vertex of $P(b_2,t_2)$ in $L_x$. Let $B = P(b_2',b_2'')$. Then $P(b_1,t_1') T \bar{P}(b_2',t_1'') B P(b_2'',t_2)$ is a fuzzy $\mathrm{N}$.
	
	The converse is straightforward.
\end{proof}

\begin{lemma}\label{fuzzy_if_bt}
	Let $P$ be an oriented path containing no $\mathrm{Z_6}$ or fuzzy $\mathrm{N}$ as induced subgraphs. If the first vertex of $P$ is in the bottom level of $P$, and the last vertex of $P$ is in the top level of $P$, then $P$ is a fuzzy path.
\end{lemma}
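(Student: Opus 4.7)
The plan is to induct on $h = height(P)$, writing $P = a_1 a_2 \cdots a_m$. The base cases $h \leq 1$ are immediate: when $h = 0$, $P$ is the single vertex $\mathrm{Z}_1$; when $h = 1$, $P$ lies in two levels so $P$ is some $\mathrm{Z}_i$, the no-$\mathrm{Z}_6$ hypothesis forces $i \leq 5$, and the bottom-to-top hypothesis forces $i$ even, giving $P \in \{\mathrm{Z}_2^{f=0}, \mathrm{Z}_4^{f=0}\}$, a fuzzy path with $n = 1$.

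For the inductive step $h \geq 2$, I would split $P$ at the place where it first leaves the bottom two levels. Let $v^*$ be the first vertex of $P$ with $\ell(v^*) \geq 2$ and let $u^*$ be the vertex of $P$ immediately before $v^*$; then $u^* \in L_1$ and $v^* \in L_2$. Set $P_1 := P(a_1, u^*)$ and $P_{\text{rest}} := P(u^*, a_m)$. By construction $P_1$ lies in $L_0 \cup L_1$; I plan to show $P_{\text{rest}}$ lies in $L_1 \cup \cdots \cup L_h$ and then apply induction to it.

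The main obstacle is the structural claim that after $u^*$ the path never returns to $L_0$, which I prove by contradiction using Lemma~\ref{fuzzy_N_present}. Suppose some vertex of $P$ after $u^*$ lies in $L_0$, and let $b_2$ be the last such vertex (so $b_2$ is strictly after $v^*$). Put $b_1 := a_1$, let $y := \max\{\ell(w) : w \in P(b_1, b_2)\}$ (so $y \geq 2$ because $v^* \in P(b_1, b_2)$), let $t_1$ be any vertex of $P(b_1, b_2)$ at level $L_y$, and let $t_2$ be the first vertex of $P$ after $b_2$ at level $L_y$ (such $t_2$ exists because the subpath after $b_2$ starts at $L_0$, ends at $L_h$, and changes level by $\pm 1$ per step, so it must cross $L_y$). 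The choice of $y$ ensures $P(b_1, b_2)$ stays in the strip $[L_0, L_y]$, and the minimality of $t_2$ ensures $P(b_2, t_2)$ stays strictly below $L_y$ except at its endpoint. Thus $b_1, t_1, b_2, t_2$ satisfy the three conditions of Lemma~\ref{fuzzy_N_present}, forcing a fuzzy $\mathrm{N}$ inside $P$ and contradicting the hypothesis.

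With this claim in hand, $P_{\text{rest}}$ has height exactly $h - 1$, with its first vertex $u^*$ in its bottom level and its last vertex $a_m$ in its top level, and it inherits the absence of induced $\mathrm{Z}_6$ and fuzzy $\mathrm{N}$ from $P$. By the inductive hypothesis, $P_{\text{rest}}$ is a fuzzy path $Q_1 Q_2 \cdots Q_{n'}$; since the first vertex of $P_{\text{rest}}$ lies in its bottom level, $Q_1$ must be of the form $\mathrm{Z}_2^{f=0}$ or $\mathrm{Z}_4^{f=0}$ (in either the $n' = 1$ or the $n' \geq 2$ case of Definition~\ref{def_fuzzyN}). Meanwhile $P_1$ is a path on two levels from $L_0$ to $L_1$ with no induced $\mathrm{Z}_6$, so $P_1 \in \{\mathrm{Z}_2^{f=0,l=1}, \mathrm{Z}_4^{f=0,l=1}\}$. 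Writing $P = P_1 Q_1 Q_2 \cdots Q_{n'}$ then matches Definition~\ref{def_fuzzyN} verbatim: $P_1$ is $\mathrm{Z}_i^{l=1}$ with $i \leq 5$, $Q_{n'}$ is $\mathrm{Z}_j^{f=0}$ with $j \leq 5$, and every middle piece $Q_1, \ldots, Q_{n'-1}$ is $\mathrm{Z}_2^{f=0}$ or $\mathrm{Z}_4^{f=0}$, so $P$ is a fuzzy path, completing the induction.
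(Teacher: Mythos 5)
Your proof is correct and follows essentially the same route as the paper: both arguments use Lemma~\ref{fuzzy_N_present} to rule out any descent of two or more levels (your ``never returns to $L_0$'' claim is the special case, applied once per inductive step, of the paper's global observation that no vertex at level $\geq x+2$ may precede a vertex at level $x$), and then read off the fuzzy structure. Your inductive packaging merely makes explicit the final step that the paper leaves terse, namely that the no-big-descent property together with the absence of an induced $\mathrm{Z_6}$ forces the staircase decomposition of Definition~\ref{def_fuzzyN}.
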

\begin{proof}
Let the first and last vertices of $P$ be $s$ and $t$, respectively.
Assume for contradiction that $P$ contains a vertex $t_1'$ before a vertex $b_2'$ such that $\ell(t_1) \geq \ell(b_2) + 2$. Let $t_1$ be a vertex of $P(s,t_1')$ in $L_a$ where $a$ is maximal. Let $b_2$ be a vertex of $P(t_1',t)$ in $L_b$ such that $b$ is minimal.

Since $\ell(s) \leq \ell(b_2)$ and $\ell(t_1) \leq \ell(t)$, we can find $b_1$ (before $t_1$ and $t_2$ (after $b_2$) such that $b_1,t_1,b_2,t_2$ satisfy the conditions of Lemma~\ref{fuzzy_N_present}, and therefore $P$ contains a \fN\ as induced subgraph, a contradiction. Therefore if $P = a_1\dots a_n$, then for any $i < j$, $\ell(a_i) \leq
\ell(a_j) + 1$. Since $P$ does not contain \z6\ as an induced subgraph, we conclude that $P$ must be fuzzy.
\end{proof}

Lemma~\ref{technical_lemma} is the main technical result used in the proof of Theorem~\ref{construction_theorem}.

\begin{lemma}\label{technical_lemma}
	Let $T$ be an oriented tree. Assume that $T$ does not contain a $\mathrm{Z_6}$ or a fuzzy $\mathrm{N}$ as an induced subgraph. Then there is an arc level $\al$ of $T$ that contains at most one component.
\end{lemma}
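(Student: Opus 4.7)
I will prove the contrapositive: assume every arc level of $T$ has at least two components, and exhibit an induced fuzzy $\mathrm{N}$ in $T$. By Lemma~\ref{fuzzy_N_present}, it is enough to find four vertices $b_1, t_1, b_2, t_2$ occurring in this order on a single induced path of $T$, with $b_1, b_2 \in L_x$, $t_1, t_2 \in L_y$, $y \geq x+2$, and with the subpath $P(b_1, t_2)$ avoiding both $L_{x-1}$ and $L_{y+1}$.

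The first step is to pick two distinct components $C_1, C_2$ of $\al_0$ (which exist by hypothesis) together with vertices $b_1 \in L_0 \cap C_1$ and $b_2 \in L_0 \cap C_2$. Since $b_1$ and $b_2$ lie in different components of $\al_0$, the unique $T$-path $P(b_1, b_2)$ cannot stay inside the strip $L_0 \cup L_1$ and hence reaches some peak level $L_y$ with $y \geq 2$; I take $t_1$ to be a vertex of $P(b_1, b_2)$ at this peak. This gives three of the four required vertices (with $x = 0$, so the condition on $L_{x-1}$ is vacuous), and $P(b_1, b_2)$ avoids $L_{y+1}$ by the choice of peak.

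The second step is to produce $t_2$, using the hypothesis that $\al_{y-1}$ also has at least two components. Let $D_1$ be the component of $\al_{y-1}$ containing $t_1$, and let $D_2$ be another component; take $t_2 \in L_y \cap D_2$. By choosing the entire configuration $(C_1, C_2, b_1, b_2, D_2, t_2)$ so as to minimize the peak level $y$ across all valid choices, I expect to show that the tree-path $P(b_2, t_2)$ stays within $L_0, \ldots, L_y$ and meets $P(b_1, b_2)$ only at $b_2$ (otherwise a smaller peak would be exhibited, contradicting minimality). The concatenation $P(b_1, b_2) \cdot P(b_2, t_2)$ then becomes a single induced path of $T$ on which $b_1, t_1, b_2, t_2$ appear in the correct order at the correct levels, and Lemma~\ref{fuzzy_N_present} delivers the induced fuzzy $\mathrm{N}$.

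The main obstacle I anticipate is ensuring that $b_2$ separates $b_1$ from $t_2$ in $T$, i.e.\ that $b_2$ is not a leaf, so that $P(b_1, t_2)$ actually passes through $b_2$. When $b_2$ is a leaf I would switch to the symmetric extension past $b_1$; when both $b_1, b_2$ are leaves, I expect the restricted structure of $C_1, C_2$, combined with an inductive argument on $|V(T)|$, to force (via removal of a suitable leaf) a contradiction with the hypothesis that every arc level carries $\geq 2$ components. Verifying the minimality claim for $P(b_2, t_2)$ rigorously --- especially that its peak does not exceed $L_y$ and that it avoids $P(b_1, b_2)$ except at $b_2$ --- is likely the most delicate step.
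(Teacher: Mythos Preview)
Your overall strategy---assume every arc level has at least two components and invoke Lemma~\ref{fuzzy_N_present}---is exactly the paper's approach, and your first step (the path $P(b_1,b_2)$ must peak at some $y\ge 2$) is correct. The trouble is entirely in the second step.

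The minimality device does not do what you need. You minimise the peak $y$ of $P(b_1,b_2)$ over choices of $b_1,b_2$ in distinct components of $\al_0$. But the peak of $P(b_2,t_2)$ is governed by the positions of $b_2$ and $t_2$, and $t_2\notin L_0$, so minimality of $y$ says nothing about that peak. More seriously, nothing in your construction forces $b_2$ to lie on the tree path $P(b_1,t_2)$ at all. You chose $t_2$ purely by its level and its $\al_{y-1}$-component; in a tree the branch containing $t_2$ can attach to $P(b_1,b_2)$ at \emph{any} vertex $c$, and if $c$ precedes $t_1$ (or lies strictly between $t_1$ and $b_2$), then $P(b_1,t_2)$ omits $b_2$ (and possibly $t_1$), and the $b_1,t_1,b_2,t_2$ ordering on a single path is lost. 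The leaf/non-leaf distinction you raise is a red herring: even when $b_2$ has high degree, the subtree containing $t_2$ need not be on the ``far side'' of $b_2$. Your fallback to induction on $|V(T)|$ is not developed enough to assess.

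For comparison, the paper spends most of its effort on precisely this control problem. It fixes a full-height spine $S=P(b,t)$ (necessarily fuzzy), defines an arc to be \emph{separated} if it lies in a different $\al_i$-component than $S$, and then builds a sequence of fuzzy paths $F_0,\dots,F_q$, each starting on $S$ and ending at a carefully chosen first separated arc, so that together they witness separated arcs at every level. A case analysis on the ``type'' of each $F_i$ (depending on whether its first arc is separated and on its direction) shows every $F_i$ is downward; the final $F_q$ then has a separated arc in the top arc level, which is impossible for a downward fuzzy path. The spine plus the first-separated-arc discipline are what pin down where each auxiliary path attaches and how high it reaches; your proposal lacks any analogue of this anchoring, and that is the missing idea.
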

\begin{proof}
	In this proof, $L_i$ and $\al_i$ always refer to vertex and arc levels of $T$, respectively. Similarly, the function $\ell(\cdot)$ gives the index of the vertex level or arc level of a vertex or arc of $T$, respectively. Choose two arbitrary vertices $b \in L_0$ and $t \in L_{height(T)}$. Let $S$ denote the (unique) path $P(b,t)$. Path $S$ is fixed for the rest of the proof. By Lemma~\ref{fuzzy_if_bt}, $S$ is a fuzzy. Since $S$ is a fuzzy path, there is a component $C_i$ of the digraph $\al_i$ (for all $0 \leq i \leq height(T)$), such that all arcs of $S$ in $\al_i$ belong to the component $C_i$. We say that an arc $d \in A(T)$ in $\al_i$ is \emph{separated from $S$} if $d$ does not belong to $C_i$. We can assume that there is a separated arc in $\al_i$ for each $0 \leq i \leq height(T)-1$, because otherwise we would have the desired property stated in the lemma. We use the existence of these separated arcs to obtain a contradiction.
	
	We will inductively fix a sequence of paths $F_0,\dots,F_q$ in $T$. For each $0 \leq i \leq q$, $A_i$ will denote the set of arcs of $F_i$ that are separated from $S$. To define $F_0$, we find a separated arc $a_0'$ in $\al_0$, and let $F_0'$ be the unique oriented path from $t$ (recall that $t$ is the last vertex of $S$ in $L_n$) to $a_0'$. Let $c_0$ be the last common vertex of $S$ and $F_0'$, and $a_0$ be the first separated arc of $F_0'(c_0,a_0')$ in $\al_0$. Then $F_0$ is the subpath $F_0'(c_0,a_0)$.
	
	Assuming that $F_0,\dots, F_{i-1}$ have been defined, we define $F_i$ inductively as follows. If for each $0 \leq j \leq height(T) - 1$, there is an index $j'$, such that $\al_j \cap A_{j'} \neq \emptyset$, that is, for each arc level of $T$, some $A_{j'}$ contains a separated arc of that arc level, then we set $q = i-1$, and the construction of the paths $F_0,\dots,F_q$ is completed. Otherwise, let $m$ be minimum such that there is no (separated) arc in $A_0 \cup \dots \cup A_{i-1}$ that is in $\al_m$, and let $a_i'$ be a separated arc of $T$ in $\al_m$. Let $F_i'$ be the unique oriented path from $t$ to $a_i'$. Let $c_i$ be the last common vertex of $S$ and $F_i'$, and $a_i$ be the first separated arc of $F_i'(c_i,a_i')$ in $\al_{\ell(a_i')}$. Then $F_i$ is defined as the subpath $F_i'(c_i,a_i)$.
	
	\begin{claim}\label{fuzzy_claim}
		$F_i$ is a fuzzy path for all $0 \leq i \leq q$.
	\end{claim}
	\begin{proof}
		We show first that $F_0$ is fuzzy. Since $a_0 = u_0v_0$ is the first separated arc of $F_0$ in $\al_0$, the last vertex of $F_0$ is $u_0$, and therefore it is in $L_0$. To see this, assume for contradiction that the last vertex of $F_0$ is $v_0$ (in $L_1$). Suppose the arc of $F_0$ before $a_0$ is $u_0w_0$ for some $w_0$ (since $u_0$ is in $L_0$, it has no inneighbours; $u_0w_0$ is not an arc of $S$, because if it was, then $u_0v_0$ would not be separated). Since $a_0$ is the first separated arc of $F_0$, $u_0w_0$ cannot be separated. But that is not possible, because arcs $u_0v_0$ and $u_0w_0$ have the same starting vertex, so they are in the same component of $\al_0$.
		
		Since $u_0 \in L_0$ and $t \in L_n$, $P(u_0,t)$ is fuzzy by Lemma~\ref{fuzzy_if_bt}. Since $F_0$ is a subpath of $P(u_0,t)$, $F_0$ is also a fuzzy path.
		
		Recall that $a_i$ is the first separated arc of $F_i'(c_i,a_i')$ in $\al_{\ell(a_i')}$, so $a_i$ is either in the top or the bottom arc level of $F_i$. Suppose not. Then there are arcs $e \in \al_j$ and $e' \in \al_{j'}$ of $F_i$ such that $j > \ell(a_i')$ and $j' < \ell(a_i')$. Therefore $F_i(e,e')$ contains an arc $f$ in $\al_{\ell(a_i')}$, and clearly, $f$ is separated. This would contradict that $a_i$ is the first arc of $F_i'(c_i,a_i')$ in $\al_{\ell(a_i')}$.
				
		Suppose that $a_i=uv$ is in the bottom arc level of $F_i$. Recall that $a_i$ is the first separated arc of $F_i'(c_i,a_i')$. Assume first that $a_i$ is the only arc of $F_i$ in $\al_{\ell(a_i')}$. Then the last vertex of $F_i$ is $u$, and $u$ is also the only vertex of $F_i$ in the bottom vertex level of $F_i$. Therefore $\ell(c_i) > \ell(u)$ (recall that $c_i$ is the only common vertex of $S$ and $F_i$), and since $S$ is fuzzy, $u$ is in the bottom vertex level also of $P(u,t)$ (but it is possible that $P(u,t)$ contains other vertices in its bottom vertex level). Since vertex $u$ is in the bottom vertex level of $P(u,t)$, and vertex $t$ is in its top level, it follows from Lemma~\ref{fuzzy_if_bt} that $P(u,t)$ is fuzzy. Since $F_i$ is a subpath of $P(u,t)$, $F_i$ must also be fuzzy.
		
		Assume therefore that $a_i$ is not the only arc of $F_i$ in $\al_{\ell(a_i')}$. Let $e$ be the first non-separated arc of $F_i$ in $\al_{\ell(a_i')}$. Such an arc can be only the first arc of $F_i$, and therefore one of the endpoints of $e$ is $c_i$. If $c_i$ is the head of $e$, then since $S$ is fuzzy, all vertices of the subpath $P(c_i,t)$ of $S$ are in a vertex level $L_k$, where $k \geq \ell(u)$. That is $u$ is in the bottom vertex level of $P(u,t)$. So as above, $P(u,t)$ must be fuzzy, and therefore $F_i$ is fuzzy. (However, if $F_i$ is fuzzy and $e$ and $a_i$ are in the same arc level, then $P(e,a_i)$ is a $\mathrm{Z}$, and therefore since $e$ is not separated, $a_i$ cannot be separated either. Therefore such an $F_i$ cannot exist.)
		
		If $c_i$ is the tail of $e$, then let $d$ be the vertex of $P(c_i,t)$ after $c_i$. The arc on vertices $c_i$ and $d$ must be a forward arc $c_id$, since otherwise $e$ would be separated. Since $d \in L_{\ell(u)+1}$ and $S$ is fuzzy, all vertices of $P(c_i,t)$ are in a vertex level $L_k$, where $k \geq \ell(u)$, and we proceed as above.
				
		If $a_i$ is in the top level of $F_i$, then a similar argument works using  $P(s,v)$ instead of $P(u,t)$.
	\end{proof}

	\begin{claim}\label{up_or_down}
		For each $0 \leq i \leq q$, $F_i$ is either a downward or an upward (fuzzy) path.
	\end{claim}
	\begin{proof}
		Assume that $F_i = w_0 w_1 \dots w_n$ (note that $w_0 = c_i$). Suppose for contradiction that $\ell(w_0) = \ell(w_n)$. This implies that $F_i$ must have at least two arcs. Since $F_i$ is fuzzy, it must be that $height(F_i) \leq 2$, because if the height is more than $2$, it is not possible that $\ell(w_0) = \ell(w_n)$. If $height(F_i) = 1$, then the last two arcs of $F_i$ are either $w_{n-1} w_{n-2}$ and $w_{n-1}w_n$, or $w_{n-2} w_{n-1}$ and $w_nw_{n-1}$. In both cases, if the last arc of $F_i$ is separated, then so is the second last arc. This contradicts the definition of $F_i$, namely, that its last arc is the first separated arc in that arc level.
		
		If $height(F_i) = 2$, then it still must be that the last two arcs of $F_i$ are either $w_{n-1} w_{n-2}$ and $w_{n-1}w_n$, or $w_{n-2} w_{n-1}$ and $w_nw_{n-1}$, so we can argue similarly.
	\end{proof}
	
	Let $F_i$ ($0 \leq i \leq q$) be a fuzzy downward (upward) path that contains a separated arc. Let $e$ be the first arc of $F_i$.
	\begin{itemize}
		\item We say that $F_i$ is type $1$ if $e$ is not separated;
		\item We say that $F_i$ is type $2$ if $e$ is separated and $e$ is a backward (forward) arc of $F_i$;
		\item We say that $F_i$ is type $3$ if $e$ is separated and $e$ is a forward (backward) arc of $F_i$.
	\end{itemize}
	
	\begin{claim}\label{no_type_3}
		Let $0 \leq i \leq q$ be an integer such that $F_i$ is downward (upward) and type $3$. Then $T$ contains a fuzzy $\mathrm{N}$.
	\end{claim}
	
		\begin{proof}
			Since the first arc $e$ of $F_i$ is a forward arc and $e$ is separated, $c_i$ cannot have an outneighbours in $S$. (If $S$ has an outneighbor $v_o$ at $c_i$ then, $e$ would be in the same component of $\al_{\ell(e)}$ as $c_i v_0$, contradicting that $e$ is separated.) Since $F_i$ is downward, $F_i$ must have an arc in $\al_{\ell(e) - 1}$. Let $b_1v$ be the first arc of $F_i$ in $\al_{\ell(e) - 1}$. Let $t_1$ be the outneighbor of $c_i$ in $F_i$. Let $b_2$ be the in neighbor of $c_i$ in $S(c_i,t)$, and let $t_2$ be the first vertex of $S(c_i,t)$ in $L_{\ell(t_1)}$. It is easy to check that $b_1,t_1,b_2,t_2$ satisfy the conditions of Lemma~\ref{fuzzy_N_present}, so $T$ contains a fuzzy $\mathrm{N}$. The argument is analogous when $F_i$ is an upward path.
		\end{proof}
	
	\begin{claim}\label{continuous}
			Let $1 \leq i \leq q$. Let $e$ be the first arc of $F_i$. Either all arcs of $F_i$ are separated, or arcs of $F_i$ in $\al_{\ell(e)}$ are non-separated, and all other arcs of $F_i$ are separated.
	\end{claim}
	\begin{proof}
			Assume first that $e$ is separated. If $e'$ is non-separated, then the path $F_i(c_i,e')$ must be a $\mathrm{Z}$, since by definition, $e'$ must be in the same component of $\al_{\ell(e')}$ as some arc of $S$. But since $e$ is the first arc of $F_i$, $F_i(c_i,e')$ contains $e$, so $e$ is also in the same component as $e'$, and therefore $e'$ is also non-separated. This is a contradiction. So all arcs of $F_i$ are separated.
			
			Assume that $e$ is non-separated. Let $e'$ be any arc of $F_i$. If $e' \in \al_{\ell(e)}$, then since $F_i$ is fuzzy, by Claim~\ref{fuzzy_claim}, $F_i(e,e')$ must be a $\mathrm{Z}$, and therefore $e'$ is also non-separated. If $e'$ is not in $\al_{\ell(e)}$, then the path $F_i(c_i,e')$ is not a $\mathrm{Z}$, so $e'$ cannot be connected to an arc of $S$ inside one arc level.
	\end{proof}
		
	By Claims~\ref{up_or_down}~and~\ref{no_type_3}, we can assume that $F_i$ is type $1$ or $2$ for each $0 \leq i \leq q$.
	We prove Claim~\ref{down} now.
		
	\begin{claim}\label{down}
		$F_i$ is a downward (fuzzy) path for all $0 \leq i \leq q$.
	\end{claim}
		\begin{proof}
		 The proof of Claim~\ref{fuzzy_claim} also showed that the last vertex of $F_0$ is in $L_0$. Since by Claim~\ref{up_or_down}, each $F_0$ is either a downward or an upward path, $F_0$ is a downward path.
		
		We show now that the rest of the $F_i$ are also downward. Suppose for contradiction that there is an upward path among $F_0,\dots,F_q$, and let $j$ be the smallest index such that $F_j$ is an upward path. Let $\beta$ be the index such that $F_j$ contains a separated arc in $\al_{\beta+1}$, but not in $\al_\beta$.
		
		\begin{subclaim}
			$\beta$ exists.
		\end{subclaim}
		\begin{proof}
		It is sufficient to show that $F_j$ does not contain a separated arc in $\al_0$. Suppose otherwise. If $\ell(c_j) \geq 2$, then $F_j$ cannot be an upward fuzzy path. If $\ell(c_j) = 0$, then the first arc of $F_j$ is non-separated, so all arcs of $F_j$ in $\al_0$ are non-separated by Claim~\ref{continuous}. If $\ell(c_j) = 1$ then the first arc of $F_j$ must be a backward arc, since otherwise $F_j$ could not be an upward path containing an arc in $\al_0$. If this first arc $f$ is non-separated then we use Claim~\ref{continuous} as above. If $f$ is separated, then $F_j$ is type $3$, and that leads to a contradiction by Claim~\ref{no_type_3}.
		\end{proof}		
		
		 \begin{subclaim}
		 There is an index $j' < j$ such that $F_{j'}$ contains a separated arc $g_{j'}$ in $\al_\beta$.
		 \end{subclaim}
		 \begin{proof}
		  By Claim~\ref{continuous},  if the upward fuzzy path $F_j$ does not contain a separated arc in $\al_{\beta}$ and it contains a separated arc in $\al_{\beta+1}$, then all arcs of $F_j$ must be in arc levels $\al_{\alpha}$ for some $\alpha \geq \beta$. Therefore when $F_j$ is chosen, by the definition of the sequence $F_0,\dots,F_q$, there must be an index $j' < j$ such that $F_{j'}$ contains a separated arc in $\al_{\beta}$. where.		
		\end{proof}
		
		Since $j' < j$, $F_{j'}$ is a downward path.
	
		\begin{subclaim}\label{progressive}
		Let $f$ be a separated arc of $F_i$ for some $0 \leq i \leq q$. Then for each $0 \leq \alpha \leq \ell(f)$, there is an index $i' \leq i$ such that $F_{i'}$ contains a separated arc in $\al_{\alpha}$.
		\end{subclaim}
		\begin{proof}
		This easily follows from the definition of $F_0,\dots,F_q$ and Claim~\ref{continuous}.
		\end{proof}
		
		\begin{subclaim}\label{high_enough}
			Let $f$ be the first arc of $F_{j'}$. Then there exists an $i$ such that $F_j$ contains a separated arc in $\al_i$, where $i >\ell(f)$ if $f$ is separated, and $i \geq \ell(f)$ otherwise.
		\end{subclaim}
		\begin{proof}
			Assume first $f$ is separated. Using Subclaim~\ref{progressive}, for each $\alpha \leq \ell(f)$, there is an index $k$ such that $F_k$ contains a separated arc in $\al_{\alpha}$. Therefore when $F_j$ is defined, it must contain a separated arc in an arc level $\al_i$, where $i > \ell(f)$. If $f$ is non-separated, then since $F_{j'}$ is downward, $F_{j'}$ contains a separated arc in $\al_{\ell(f) - 1}$. Now we can proceed as in the previous case.
		\end{proof}		
	
			\begin{subclaim}
				If both $F_j$ and $F_{j'}$ are type $1$, then $T$ contains a fuzzy $\mathrm{N}$.
			\end{subclaim}
			\begin{proof}
				Since the first arc of $F_j$ and $F_{j'}$ are non-separated and both $F_j$ and $F_{j'}$ contain a separated arc by assumption, both $F_j$ and $F_{j'}$ must have height at least $2$. Let $b_1$ be the first vertex of $F_{j'}$ in $L_\beta$, $t_1$ be a vertex of $F_{j'}$ in the top level of $F_{j'}$. Let $b_2$ be a vertex of $F_{j}$ in level $L_\beta$ (which exists since $F_j$ is type $1$), and $t_2$ be a vertex of $F_{j'}$ in $L_{\ell(t_1)}$ (which exists by Subclaim~\ref{high_enough}). Since both $F_j$ and $F_{j'}$ are type $1$ and $S$ is a fuzzy path, any vertex of $S(t_1,b_2)$ is in level $L_k$ for some $\beta \leq k \leq \ell(t_1)$. It follows that $b_1,t_1,b_2,t_2$ satisfy the conditions of Lemma~\ref{fuzzy_if_bt}, and therefore $T$ contains a fuzzy $\mathrm{N}$.
			\end{proof}
			
			\begin{subclaim}\label{sc22}
				If $F_j$ and $F_{j'}$ are type $2$, then $T$ contains a fuzzy $\mathrm{N}$.
			\end{subclaim}
			\begin{proof}
				The proof is illustrated in Figure~\ref{fig_sc22}. Let $b_1$ be the first vertex of $F_{j'}$ in $L_\beta$. Since $F_{j'}$ is type $2$, $c_{j'}$ cannot have inneighbours in $S$. Let $t_1$ be the outneighbours of $c_{j'}$ that is a vertex of $S(c_{j'}, c_j)$. Since $F_{j'}$ is type $2$, $c_j$ does not have outneighbours in $S$. Let $b_2$ be the inneighbour of $c_j$ in $S(c_{j'},c_j)$. Let $t_2$ be a vertex of $F_j$ in $L_{\ell(t_1)}$. Using Subclaim~\ref{high_enough}, it is easy to check that $b_1,t_1,b_2,t_2$ satisfy the conditions of Lemma~\ref{fuzzy_if_bt}, and therefore $T$ contains a fuzzy $\mathrm{N}$.
			\end{proof}
			
			The cases when one of $F_j$ and $F_{j'}$ is type $1$ and the other is type $2$ can be handled similarly to the previous two claims.
		\end{proof}
		
		\begin{figure}[htb]
			\begin{center}
				\includegraphics[scale=\wp]{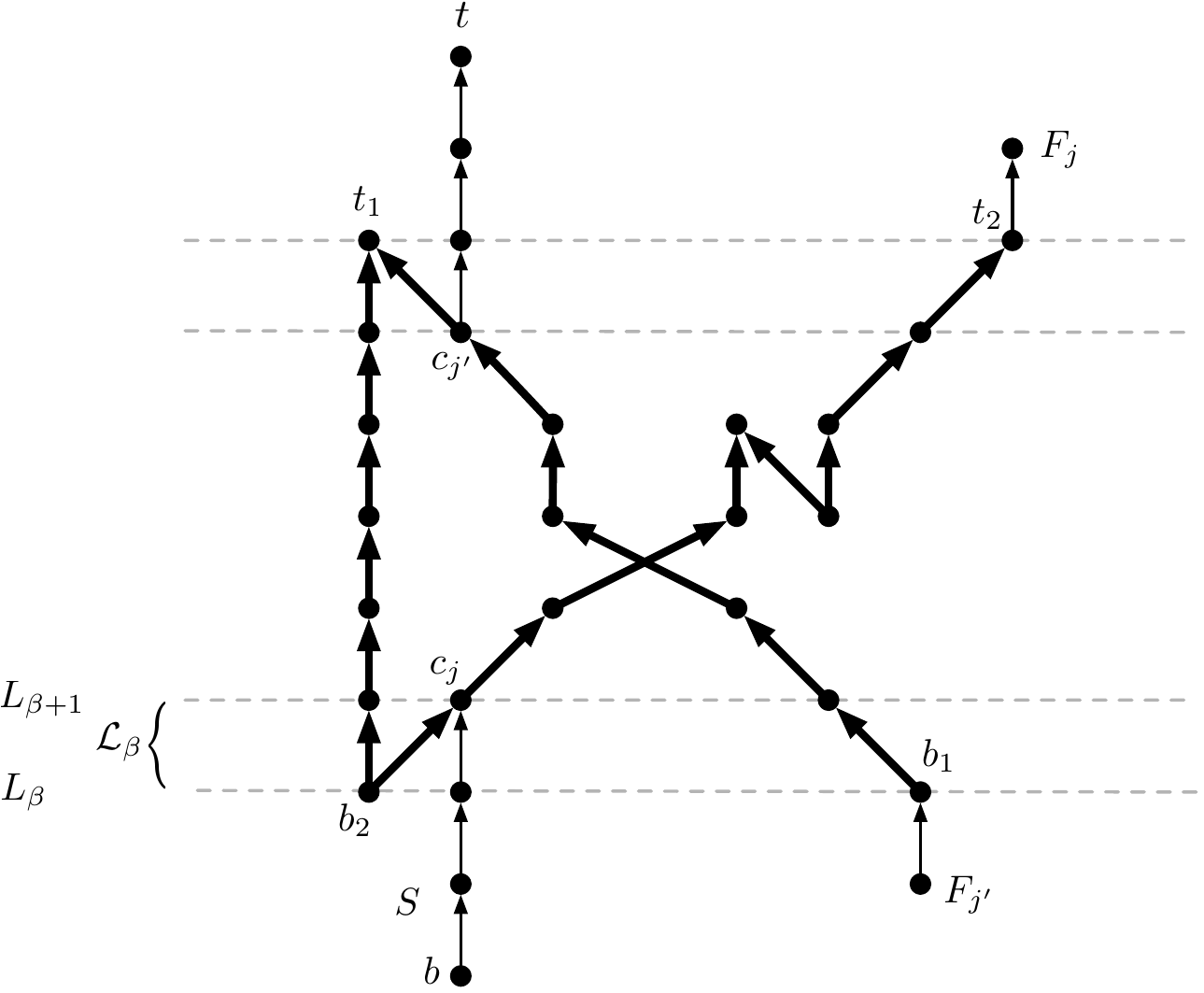}
			\end{center}
			\caption{Illustration of the proof of Subclaim~\ref{sc22}.}\label{fig_sc22}
		\end{figure}

	By Claim~\ref{down}, $F_q$ is a downward path. By definition, $F_q$ contains a separated arc $a_q$ in $\al_{height(T)-1}$. If the path $F_q(c_q,a_q)$ has height at least $2$, then it cannot be downward since $F_q$ is fuzzy. If $F(c_q,a_q)$ has height $1$, then since $a_q$ is separated, all arcs of $F(c_q,a_q)$ must be separated, including the first arc $e$ of $F_q(c_q,a_q)$. Note that $e$ must be a forward arc, since if it is a backward arc, then $c_q \in L_n$, and $S$ must contain an inneighbour of $c_q$, implying that $e$ cannot be separated. It follows that $F(c_q,a_q)$ is type $3$, and the contradiction follows from Claim~\ref{no_type_3}. The lemma is proved.
\end{proof}

Before we can proceed, we need a simple characterization of oriented trees of height $1$ containing no $\mathrm{Z_6}$ as an induced subgraph.

\begin{definition}\label{spider}
	An \emph{in-star} (\emph{out-star}) is an oriented tree with vertex set $\{v_0,v_1,v_2,\dots,v_n\}$, and arc set $\{v_1v_0, v_2v_0,\dots,v_nv_0\}$ ($\{v_0v_1, v_0v_2,\dots,v_0v_n\}$), where $n \geq 0$. Note that if $n = 0$, then the star is just vertex $v_0$. Vertex $v_0$ is called the \emph{root}, and vertices $v_i$ ($1 \leq i \leq n$) are called \emph{leaves}.
	
	Let $S_0$ be an in-star (out-star) with leaves $v_1,\dots, v_n$, and $S_1,\dots,S_n$ be out-stars (in-stars) with root vertices $r_1,\dots,r_n$, respectively. Let $S$ be the oriented tree obtained by
	\begin{itemize}
		\item taking the disjoint union of $S_i$, $0 \leq i \leq n$, and
		\item identifying the leaf vertex $v_j$ of $S_0$ with the root vertex $r_j$ of $S_j$, for each $1 \leq j \leq n$.
	\end{itemize}
	Then $S$ is called an \emph{in-spider (out-spider)}. $S_0$ is called the \emph{body}, and $S_1,\dots,S_n$ are called the \emph{legs} of $S$.
\end{definition}

\begin{lemma}\label{h1}
	Let $T$ be an oriented tree of height $1$. Suppose that $T$ does not contain $\mathrm{Z_6}$ as an induced subgraph. Then $T$ is either an in-spider or an out-spider.
\end{lemma}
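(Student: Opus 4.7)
The plan is to reduce the lemma to a statement about the diameter of the underlying undirected graph of $T$. Since $T$ has height $1$, every arc goes from $L_0$ to $L_1$, so the underlying undirected graph is a tree bipartite with parts $L_0$ and $L_1$. Because trees are chordless, any path in this undirected tree is an induced path in $T$, and (thanks to the uniform orientation) such a path corresponds to a $\mathrm{Z}_i$ for some $i$. Consequently, the assumption that $T$ has no induced $\mathrm{Z}_6$ is equivalent to the underlying tree containing no path on $6$ vertices, i.e., having diameter at most $4$.

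I would then invoke the standard fact that every tree of diameter $d$ has radius $\lceil d/2 \rceil$. Applied here, this yields a vertex $c \in V(T)$ such that every vertex of $T$ is at undirected distance at most $2$ from $c$. This $c$ lies either in $L_0$ or in $L_1$; by the in-spider/out-spider symmetry, it suffices to handle the case $c \in L_1$.

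Under this assumption, the neighbors of $c$ in the underlying tree are exactly its in-neighbors in $T$ (since all arcs go $L_0 \to L_1$); call them $v_1, \ldots, v_n$. Any vertex at undirected distance $2$ from $c$ is a neighbor of some $v_j \neq c$, hence an out-neighbor of $v_j$ in $L_1$. Take $S_0$ to be the in-star with root $c$ and leaves $v_1, \ldots, v_n$, and for each $1 \leq j \leq n$ let $S_j$ be the out-star with root $v_j$ whose leaves are the out-neighbors of $v_j$ in $T$ distinct from $c$. The radius-$2$ property guarantees that these stars cover all vertices of $T$, and the tree property forces them to share only the identified vertices $v_j = r_j$. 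Hence $T$ is exactly the in-spider built from $S_0, S_1, \ldots, S_n$ as in Definition~\ref{spider}. The case $c \in L_0$ is completely symmetric and yields an out-spider.

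The only real ingredients are the translation between induced $\mathrm{Z}$'s in $T$ and paths in the underlying undirected tree, and the elementary radius-versus-diameter bound for trees; I do not anticipate any genuine obstacle in carrying this out.
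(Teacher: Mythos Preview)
Your argument is correct and is a genuinely different (and cleaner) route than the paper's. The paper proceeds by case analysis on the longest $\mathrm{Z}_i$ that $T$ contains as an induced subpath: if $T$ contains a $\mathrm{Z}_5^{f=0}$ on vertices $a,b,c,d,e$, it argues directly that the middle vertex $c$ must be adjacent to every vertex of $L_1$ (else one finds a $\mathrm{Z}_6$), yielding an out-spider; the cases of $\mathrm{Z}_5^{f=1}$, $\mathrm{Z}_4$, $\mathrm{Z}_3$ are handled analogously. Your proof instead identifies ``no induced $\mathrm{Z}_6$'' with ``underlying tree has diameter $\leq 4$'' and then invokes Jordan's centre theorem to obtain a vertex of eccentricity $\leq 2$, from which the spider structure is read off in one stroke. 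This buys you a uniform argument with no case split, and it makes the combinatorial content (a diameter bound) transparent; the paper's hands-on approach, on the other hand, stays closer to the oriented-path language used elsewhere in the article and does not appeal to any outside fact.
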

\begin{proof}
	Let $L_0$ and $L_1$ be the vertex of levels of $T$. Assume first that $T$ contains an induced $Z_5^{f=0}$ with vertex set $V = \{a,b,c,d,e\}$ and arc set $\{ab,cb,cd,ed\}$. Then $a,c,e \in L_0$ and $b,d \in L_1$. Assume that $L_1 = \{u_1,\dots,u_n\}$. If $cu_i$ is not an arc for some $1 \leq i \leq n$, then since $T$ is connected, there must be an induced oriented path from $u_i$ to a vertex of in $V$. For example, if there is a path from $u_i$ to $e$, then there is an arc $ew$ in $T$, and thus $\{a,b,c,d,e,w\}$ induces a \z6\ in $T$. It is easy to check that $T$ contains an induced \z6\ in the remaining cases.
	
	Let $S_0$ to be the out-star with root $c$ and leaf set $L_1$. For each $1 \leq i \leq n$, let $S_i$ be the in-star with root $u_i$. Let the leaves of $S_i$ be the inneighbours of $u_i$. Clearly,  $T$ is an out-spider with body $S_0$ and legs $S_1,\dots,S_n$.	If $T$ contains a $\zz_5^{f=1}$, then an analogous argument shows that $T$ is an in-spider.
	
	Suppose now that $T$ contains an induced $\zz_4^{f=0}$ with vertex set $\{a,b,c,d\}$ and arc set $\{ab,cb,cd\}$ but not a $\zz_5$. Then $a$ has out-degree $1$ and $d$ has in-degree $1$, since otherwise $T$ would contain a $\zz_5$. Any vertex $w$ of $T$ in $L_1$ can have only $c$ as its inneighbour, and any vertex $z$ in $L_0$ can have only $b$ as its outneighbours. It follows that $T$ is both an in-spider and an out-spider.
	
	The remaining cases are also easy to analyze, e.g., when $T$ contains a $\zz_3$ but not $\zz_4$, then $T$ is up-star or a down-star.
\end{proof}

We are ready to prove one direction of Theorem~\ref{construction_theorem} after the following definition, which will also be used later.
\begin{definition}
	Let $G$ be a leveled digraph and $v$ be a vertex of $G$. The \emph{up-component of $G$ at $v$} is the subgraph of $G$ induced by the set of vertices
	\[
	U = \{u \;|\; \exists \text{ a walk $W$ from $v$ to $u$ such that for each vertex $w$ of $W$,  $\ell(w) \geq \ell(v)$}\}.
	\]
	
	Given a vertex level $L$ of $G$, the \emph{set of up-components of $G$ at level $L$} is the set of digraphs which are up-components for some vertex $v \in L$. A \emph{down-component at $v$} and \emph{set of down-components at level $L$} are defined analogously.
\end{definition}

\begin{lemma}\label{d1}
	If an oriented tree $T$ contains neither a \z6\ nor a fuzzy $\mathrm{N}$ as an induced subgraph, then $T$ is constructible.
\end{lemma}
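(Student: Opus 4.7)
I proceed by induction on $|V(T)|$, with the base case $|V(T)|=1$ handled directly by the definition of constructible. For the inductive step, apply Lemma~\ref{technical_lemma} to obtain an arc level $\al_j$ of $T$ containing a single (non-trivial) component $C$. Observe that $C$ is an induced subgraph of $T$: two vertices at levels $j$ and $j{+}1$ are joined by an arc of $T$ exactly when that arc lies in $\al_j$, and all such arcs lie in the unique component $C$. Hence $C$ inherits the absence of $\mathrm{Z}_6$, so by Lemma~\ref{h1} the subgraph $C$ is either an in-spider or an out-spider.

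Suppose first that $C$ is an out-spider with body rooted at $c \in L_j$ and body-leaves $r_1,\ldots,r_n \in L_{j+1}$. Since every outneighbour of $c$ in $T$ lies in $L_{j+1}$ and yields an arc of $\al_j$ inside $C$, the outneighbours of $c$ in $T$ are precisely $r_1,\ldots,r_n$. Deleting the $n$ arcs $cr_1,\ldots,cr_n$ disconnects $T$ into subtrees $T_0,T_1,\ldots,T_n$, with $c \in T_0$ and $r_i \in T_i$. I claim that $T$ is the down-join of $(T_0;T_1,\ldots,T_n)$ with central vertex $c$ and join vertices $r_i$. Indeed $c$ has no outneighbour remaining in $T_0$, so $c$ sits at the appropriate extremal level of $T_0$ for a down-join. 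For each $i$, any vertex $w \neq r_i$ at level $j{+}1$ in $T_i$ with positive in-degree in $T_i$ would be the head of an arc of $\al_j$, forcing $w \in C$; but the $L_{j+1}$-vertices of $C$ are precisely $\{r_1,\ldots,r_n\}$, and for $i' \neq i$ the vertex $r_{i'}$ lies in the separate component $T_{i'}$. Hence no such $w$ exists, so either $r_i$ is the unique in-degree-positive vertex at its level in $T_i$, or $r_i$ has no inneighbour in $T_i$ and lies in the bottom level of $T_i$; either way the down-join condition in Definition~\ref{constructible} is met.

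The case where $C$ is an in-spider is entirely symmetric, using the up-join with central vertex the body root $c \in L_{j+1}$, join vertices the body-leaves in $L_j$, and out-degrees in place of in-degrees; the arcs removed are all inneighbour-arcs of $c$. In both cases each $T_i$ is an induced subgraph of $T$ on strictly fewer vertices, so by the inductive hypothesis each $T_i$ is constructible; therefore $T$ itself is constructible.

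The main subtlety is the verification that every vertex $w$ at the join level of $T_i$ other than the intended join vertex has no inneighbour (resp.\ outneighbour) inside $T_i$. This step is exactly where the hypothesis that $\al_j$ has only one component is used: without it, a stray vertex of $\al_j$ outside the spider could force a second in-degree-positive vertex at the join level, breaking the join condition. The remaining observations---that $C$ is induced, that removing the $n$ arcs at $c$ yields exactly $n{+}1$ components, and that each $T_i$ inherits the forbidden-subgraph hypothesis---are routine once the spider structure of $\al_j$ and the identification of the central vertex are in place.
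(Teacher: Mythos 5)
Your proof is correct and follows essentially the same route as the paper: both use Lemma~\ref{technical_lemma} to locate an arc level $\al_j$ with a single component, apply Lemma~\ref{h1} to identify that component as a spider, split $T$ at the body root with the body leaves as join vertices, and verify the join conditions before inducting (the paper defines $T_0$ as the up-/down-component at the root rather than by deleting the body arcs, but the resulting decomposition is the same). The only weak spot is the sentence ``$c$ has no outneighbour remaining in $T_0$, so $c$ sits at the appropriate extremal level,'' which is a non sequitur as stated; the correct justification, for which you already have all the ingredients, is that a vertex of $T_0$ above level $j$ would force $T_0$ to contain an arc of $\al_j$, hence of $C$, whose head is some $r_i$ lying in the separate component $T_i$.
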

\begin{proof}
	Let $T$ be as stated in the lemma. We use Lemma~\ref{technical_lemma} to find an integer $\alpha$ such that arc level $\al_\alpha$ of $T$ contains exactly one component $R$. (If there is no such $\alpha$, then $T$ must be a single vertex, and we are done.) By Lemma~\ref{h1}, $R$ is an out-spider or an in-spider. We assume that $R$ is an in-spider. The case when $R$ is an out-spider can be handled similarly. Assume that $S_0$ is the body of $R$ having root $v_0$, and leaves $v_1,\dots,v_n$. We define $T_0$ as the up-component of $T$ at $v_0$ (so $v_0$ is in the bottom vertex level of $T_0$), and $T_1,\dots,T_{n'}$ as the components of $T \setminus T_0$. We show that $T$ is the up-join of $T_0,T_1,\dots,T_{n'}$; this will follow from these claims:
	\begin{enumerate}
		\item The only common vertex of $T_0$ and $R$ is $v_0$, and the only vertex of $T_0$ in $L_{\ell(v_0)}$ that has indegree non-zero (with respect to $T$) is $v_0$;
		\item Each $T_i$ contains precisely one vertex among $v_1,\dots,v_n$. (And by definition, no $T_{i'}$ and $T_{i''}$, $i' \neq i''$ contains the same vertex among $v_1,\dots,v_n$.) We assume w.l.o.g.\ that $v_i$ belongs to $T_i$. It also follows that $n = n'$. Furthermore, any vertex $w \neq v_i$ of $T_i$ in $L_{\ell(v_0) - 1}$ has out-degree zero.
	\end{enumerate}
	Once these are established, it follows that $T$ is the up-join of $T_1,\dots,T_{n'}$. Vertex $v_0$ is in the bottom vertex level of the tree $T_0$. Each $T_i$ contains at most one vertex of degree at least one in $L_{\ell(v_0) - 1}$, which is $v_i$, and $T$ is obtained by taking the disjoint union $T_0,T_1,\dots,T_{n'}$ and adding the arcs of $v_1v_0,\dots,v_nv_0$. We prove the two claims now.

	Suppose for contradiction that $T_0$ contains a vertex $w$ of $R$ such that $w \neq v_0$. Then $w$ is an outneighbours or $v_j$ for some $1 \leq j \leq n$ (because $R$ is an in-spider). Therefore the arcs $v_jw$, $v_jv_0$, and the path from $v_0$ to $w$ in $T_0$ forms a cycle in $T$, a contradiction. It follows that any vertex $w' \neq v_0$ of $T_0$ in $L_{\ell(v_0)}$ has indegree $0$ in $T$, since otherwise $\al_i$ would contain at least two components (contradicting our choice of $i$).
	
	We prove the second claim. Since $T$ is connected, there must be a path from $v_0$ to a vertex of $T_i$. This can only happen if $T_i$ contains at least one of the vertices $v_1,\dots,v_n$. Conversely, suppose for contradiction that there are indices $1 \leq i', i'' \leq n$ such that $v_{i'}$ and $v_{i''}$ belong to the same $T_i$ for some $1 \leq i \leq {n'}$. But then the arcs $v_{i'}v_0$, $v_{i''}v_0$ together with the oriented path in $T_i$ from $v_{i'}$ to $v_{i''}$ would form a cycle, contradicting that $T$ is an oriented tree.
	
	Furthermore, assume for contradiction that there is a vertex $w \neq v_i$ of $T_i$ (for some $1 \leq i \leq n'$) in $L_{\ell(v_0) - 1}$ that has out-degree at least one. Then the arc leaving $w$ is an arc in $\al_\alpha$ that is not part of $R$, a contradiction.
\end{proof}

\begin{proof}[Proof of Theorem~\ref{construction_theorem}.]
	One direction of the theorem follows from Lemma~\ref{d1}. The other direction either follows from the chain of implications outlined at the end of the introduction, or a direct proof is given in Appendix B, see Lemma~\ref{d2}.
\end{proof}

\section{A simple inductive algorithm for LHOM($T$)}\label{algorithm}

We note that if an oriented tree $T$ contains a \z6\ or a \fN\ as an induced subgraph, then LHOM($T$) is $\mathrm{NL}$-hard. This follows from the fact if $T$ contains a \z6\ or a \fN\ as an induced subgraph, then $T$ contains a \cN, as we will show in Theorem~\ref{circ_N_ind}. If $T$ contains a \cN\, then LHOM($T$) is NL-hard by \cite{soda_lhom}.\footnote{However, it is not hard to directly prove that if an oriented tree $T$ contains a \z6\ or a fuzzy \fN\ as an induced subgraph, then LHOM($T$) is $\mathrm{NL}$-hard. Note that we can \emph{primitive-positive define} (\emph{pp-define}) the relation $R = \{(0,0),(0,1),(1,1)\}$ over a \z6\ or a fuzzy $\mathrm{N}$. For example, the pp-definition of $R$ over \z6\ can be done in the same way as it is done for an undirected path on $6$ vertices in \cite{stacs_lhom}. The pp-definition of $R$ over a \fN\ can be done very similarly to the definition of $R$ using a \cN\ (see \cite{soda_lhom}). If $R$ can be pp-defined over $H$, then it is well known that there is a straightforward logspace reduction from the $\mathrm{NL}$-complete directed graph unreachability problem to LHOM($H$).}

In the rest of this section, we inductively construct a logspace algorithm for LHOM($T$). We begin with a high-level description given as Algorithm~\ref{high_level} below. Suppose that $T$ is the up-join of $T_0,T_1,\dots,T_n$, $v_0$ is the central vertex, and $v_1,\dots,v_n$ are the join vertices. (The case when $T$ is the down-join of some trees can be analyzed similarly.) We assume inductively that there is a logspace algorithm $A_i$ for LHOM($T_i$) for each $0 \leq i \leq n$.

\begin{algorithm}
	\caption{High-level algorithm when $T$ is the up-join of $T_0,T_1,\dots,T_n$.}\label{high_level}
	\begin{algorithmic}[1]
		\INPUT A digraph $G$. ($G$ could have many components.)
		\OUTPUT YES if there is a list homomorphism from $G$ to $T$, and NO otherwise.
		\For{each component $G'$ of $G$}
		\State Check whether $G'$ is leveled. If not, output NO.\label{is_leveled}
		\State If the height of $G'$ is larger than the height of $T$, output NO.\label{is_too_high}
		\State Find the height $h$ of $G'$, and find the vertex levels $L_0,L_1,\dots,L_h$ of $G'$.
		\For{each $0 \leq \alpha \leq h - 1$}
		\If{there is a list homomorphism $h$ from $G'$ to $T$ such that $h$ maps level $L_\alpha$ of $G'$ \indent \indent to level $L_{\ell(v_0) - 1}$ of  $T$} mark $G'$ as \textsc{good}\label{main_check}
		\EndIf
		\EndFor
		\If{there is a list homomorphism from $G'$ to $T_i$ for some $0 \leq i \leq n$} mark $G'$ as \textsc{good}\label{to_whole_tree}
		\EndIf
		\EndFor
		\If{all components $G'$ of $G$ are marked \textsc{good}} output YES.
		\Else{ output NO.}\label{really_no}
		\EndIf
	\end{algorithmic}
\end{algorithm}

First we argue the correctness of Algorithm~\ref{high_level}, and then show how to implement line~\ref{main_check}.

\begin{lemma}
	Let $T$ be the up-join of $T_i$, and $A_i$ be the corresponding logspace algorithms for LHOM($T_i$), $0 \leq i \leq n$, as described above. Let $G$ be a digraph.
	Then there is a list homomorphism from $G$ to $T$ if and only if Algorithm~\ref{high_level} on input $G$ outputs YES.
\end{lemma}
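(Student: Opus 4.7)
The plan is to verify both directions of the equivalence. For the easy direction, suppose every component $G'$ of $G$ is marked \textsc{good}. The mark is witnessed either by line~\ref{to_whole_tree} -- a list homomorphism $G' \to T_i$ (with lists $L(v) \cap V(T_i)$), which is also a list homomorphism $G' \to T$ since $T_i$ is a subgraph of $T$ -- or by line~\ref{main_check} -- a list homomorphism $G' \to T$ directly. Because the components of $G$ are pairwise vertex-disjoint and have no arcs between them, these per-component homomorphisms glue into a single list homomorphism $G \to T$.

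For the converse, fix a list homomorphism $h : G \to T$ and a component $G'$. I first rule out the two early rejections. Any closed oriented walk in $G'$ is mapped by $h$ to a closed oriented walk in the leveled $T$, and the latter has net length zero; hence $G'$ itself is leveled and line~\ref{is_leveled} does not reject. Since $h|_{G'}$ is level-preserving, $height(G') \leq height(T)$, so line~\ref{is_too_high} does not reject either.

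The core argument is a case analysis on the image $U := h(V(G'))$. If $U \subseteq V(T_i)$ for some $i \in \{0,\dots,n\}$, then $h|_{G'}$ (with lists restricted to $V(T_i)$) witnesses line~\ref{to_whole_tree}. Otherwise $U$ meets two distinct $V(T_i)$'s, and since the only arcs of $T$ crossing between different $T_i$'s are the join arcs $v_1 v_0, \dots, v_n v_0$, the connectedness of $G'$ forces both $v_0 \in U$ and $v_j \in U$ for some $j \geq 1$: pick any $w \in V(G')$ with $h(w) \in V(T_j)$ for some $j \geq 1$, take a walk in $G'$ from $w$ to a preimage $u_0$ of $v_0$, and observe that the image of this walk in $T$ must visit every vertex of the unique $h(w)$-to-$v_0$ path in the underlying tree of $T$, a path that passes through $v_j$. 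Letting $u'$ be a preimage of $v_j$ on this walk and setting $\alpha := \ell^{G'}(u')$, level-preservation gives $h(L^{G'}_\alpha) \subseteq L^T_{\ell(v_j)} = L^T_{\ell(v_0)-1}$, so line~\ref{main_check} succeeds for this $\alpha$. The main obstacle is making the case analysis airtight, in particular verifying $0 \leq \alpha \leq height(G') - 1$: the lower bound is immediate, and $\alpha = height(G')$ would put $u'$ in the top level of $G'$, forcing $U$ to lie at or below level $\ell(v_0) - 1$ in $T$, contradicting $v_0 \in U$.
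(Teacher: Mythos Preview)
Your proof is correct and follows essentially the same approach as the paper's: both argue the forward direction is immediate and handle the converse by a case split on whether the image of a component $G'$ lies entirely in one $T_i$ (caught by line~\ref{to_whole_tree}) or straddles two of them (caught by line~\ref{main_check}). Your version is in fact slightly more careful than the paper's---you explicitly verify that the witness $\alpha$ lies in the loop range $0\le\alpha\le h-1$, a point the paper glosses over---though the sentence after the colon establishing $v_j\in U$ reads a bit awkwardly since it presupposes the preimage $u_0$ of $v_0$ before $v_0\in U$ has been spelled out; the fix is simply to walk from $w$ to a vertex $w'$ with $h(w')$ in a different $T_k$, which forces the image walk through $v_0$ (and $v_j$) for the reason you already gave.
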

\begin{proof}
	Clearly, if Algorithm~\ref{high_level} outputs YES, then there is a list homomorphism from $G$ to $H$. Suppose therefore that Algorithm~\ref{high_level} outputs NO.	If Algorithm~\ref{high_level} outputs NO in line~\ref{is_leveled} or line~\ref{is_too_high}, then clearly, there can be no homomorphism from $G$ to $T$. Assume therefore that Algorithm~\ref{high_level} outputs NO in line~\ref{really_no}. Then at least one of the components $G'$ of $G$ is \emph{not} marked as \textsc{good}. Assume for contradiction that there is a list homomorphism $h$ from $G'$ to $T$. Then either there is a vertex level of $G'$ that is mapped to $L^T_{\ell(v_0)-1}$, or not. There cannot be a vertex level of $G'$ that is mapped to $L^T_{\ell(v_0)-1}$, since then in line~\ref{main_check} the algorithm would have marked $G'$ as good. Therefore $h$ must map $G'$ to $T$ in such a way that no vertex of $G'$ is mapped to a vertex in $L^T_{\ell(v_0)-1}$. That means that $h$ does not map any arc of $G'$ to any of the arcs $v_1v_0,v_2v_0,\dots,v_nv_0$. Since $G'$ is connected, this implies that $h$ must map $G'$ to $T_i$ for some $0 \leq i \leq n$. But that also cannot happen because then $G'$ would be marked \textsc{good} in line~\ref{to_whole_tree}, a contradiction.
\end{proof}

To implement Algorithm~\ref{high_level} in logspace, we need the following lemmas about digraphs.

\begin{lemma}\label{two}
	Let $G$ be a connected acyclic digraph $G$. Fix two arbitrary vertices $u,v \in V(G)$. If $G$ is not leveled, then there are two different walks $W_1$ and $W_2$, both from $u$ to $v$, such that $net(W_1) \neq net(W_2)$
\end{lemma}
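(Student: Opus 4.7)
The plan is to use the standard equivalence that a connected digraph is leveled if and only if every closed oriented walk has net length $0$. Given this, the lemma reduces to producing some closed oriented walk of nonzero net length in $G$ and then splicing it into an arbitrary walk from $u$ to $v$.

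First I would show that since $G$ is connected but not leveled, there exists an oriented closed walk $W$ in $G$ with $net(W)\neq 0$. Argue by contradiction: assume every closed oriented walk has net length zero, and fix an arbitrary root $r$. By connectedness of the underlying undirected graph, for each $v \in V(G)$ there is an oriented walk from $r$ to $v$. Define a candidate level function $\tilde\ell(v) := net(P_v)$ for any such walk $P_v$. This is well-defined, because any two choices $P_v, P_v'$ yield a closed oriented walk based at $r$ (concatenate $P_v$ with the reversal $\bar{P_v'}$) whose net length equals $net(P_v) - net(P_v')$, and this is zero by hypothesis. For any arc $ab$, extending an $r$-to-$a$ walk by the forward arc $ab$ gives $\tilde\ell(b) = \tilde\ell(a)+1$, so after translating $\tilde\ell$ to be nonnegative one obtains a valid leveling of $G$, contradicting the hypothesis.

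Next, given such $W$ based at a vertex $x$, I would use connectedness to choose oriented walks $P$ from $u$ to $x$ and $Q$ from $x$ to $v$. Then $W_1 := PQ$ and $W_2 := PWQ$ are both oriented walks from $u$ to $v$ (the concatenations are valid since consecutive pieces share the endpoint $x$), and $net(W_2) - net(W_1) = net(W) \neq 0$. In particular $W_1 \neq W_2$, which finishes the proof.

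The only mildly delicate point is the well-definedness argument in the first step, but it is a standard exercise. I note in passing that the acyclicity hypothesis on $G$ is not actually used: connectedness together with failure of leveledness already suffices.
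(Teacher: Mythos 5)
Your proof is correct and is essentially the same argument as the paper's: both prove the contrapositive by showing that equality of net lengths forces a well-defined level function $w \mapsto net(Q_w)$ (translated to be nonnegative), which levels $G$. The only cosmetic difference is that you package the argument through closed walks and then splice, whereas the paper works directly with walks from $u$; your closing remark that acyclicity is not needed is also accurate, as the paper's proof does not use it either.
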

\begin{proof}
	Suppose that every walk from $u$ to $v$ has the same net length. For each vertex $w$ of $G$, let $Q_w$ be an arbitrary walk from $u$ to $w$. Let $Q_w'$ be another arbitrary walk from $u$ to $w$ (it could be the same as $Q_w$). If $net(Q_w) \neq net(Q_w')$, then let $W$ be a walk from $w$ to $v$. Then both $Q_w W$ and $Q_w' W$ are walks from $u$ to $v$, and $net(Q_w W) \neq net(Q_w' W)$, a contradiction. Therefore $net(Q_w) = net(Q_w')$.
	
	Assign each vertex $w$ the integer $net(Q_w)$. Let $xy$ an arbitrary arc of $G$, then $net(Q_y) = net(Q_x xy)$, where $Q_x xy$ denotes the walk $Q_x$ and then making one more step from the last vertex $x$ of $Q_x$ to vertex $y$. Therefore $net(Q_y) = net(Q_x) + 1$, i.e., if $xy$ is an arc, then the integer assigned to $y$ is one larger than the integer assigned to $x$. Let $m$ be the minimum of $net(Q_w)$ over $w \in V(G)$. Note that $m \leq 0$ since $net(Q_w) = 0$. Assign each vertex $w$ to level $L_{net(Q_w) - m}$. It follows that $G$ is leveled.
\end{proof}

\begin{lemma}\label{bounded}
	Let $G$ be a connected acyclic digraph $G$. Then $G$ is leveled if and only if every walk in $G$ has bounded net length.
\end{lemma}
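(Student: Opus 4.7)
The plan is to treat the two directions separately; the easy direction is that leveled implies bounded net length, while the harder direction uses Lemma~\ref{two} and an iteration trick to produce unbounded net length when $G$ is not leveled.

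For the forward direction, suppose $G$ is leveled with vertex levels $L_0, \dots, L_n$. I would argue by induction on the length of any walk $W = a_1 a_2 \dots a_m$ in $G$ that $net(W) = \ell(a_m) - \ell(a_1)$: each forward arc $a_i a_{i+1}$ moves up one level and contributes $+1$ to $net$, while each backward arc moves down one level and contributes $-1$. Consequently $|net(W)| \leq n = height(G)$, which is the required bound.

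For the reverse direction I would prove the contrapositive: assuming $G$ is not leveled, I will exhibit walks of arbitrarily large absolute net length. Fix any two vertices $u, v$ of $G$; since $G$ is connected an oriented walk between them exists. Lemma~\ref{two} then supplies two walks $W_1$ and $W_2$ from $u$ to $v$ with $net(W_1) \neq net(W_2)$. Traversing $W_2$ in reverse yields a walk from $v$ to $u$ of net length $-net(W_2)$, so the concatenation $C$ of $W_1$ with this reversed walk is a closed walk at $u$ with $net(C) = net(W_1) - net(W_2) =: d \neq 0$. Concatenating $C$ with itself $k$ times is again a closed walk at $u$, with net length $kd$, which is unbounded as $k \to \infty$, contradicting the boundedness hypothesis.

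The one point requiring a little care, and the one I would flag as the main obstacle, is the legitimacy of reversing and end-to-end-concatenating walks; both operations are immediate from the paper's definition of oriented walk, since consecutive vertices are permitted to be joined by an arc in either direction. Once that is observed, the argument collapses to a single idea: Lemma~\ref{two} produces a closed walk of nonzero net length, and any such closed walk can be iterated to produce walks of unboundedly large net length.
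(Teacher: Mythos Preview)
Your proof is correct and follows essentially the same approach as the paper. The paper's argument is terser but identical in substance: for the forward direction it notes that any walk from $u$ to $v$ has net length $\ell(v)-\ell(u)$, and for the converse it invokes Lemma~\ref{two} to obtain $W_1,W_2$ with $net(W_1)>net(W_2)$ and then observes that $net((W_1\bar{W}_2)^{k+1})>k$, which is exactly your closed-walk iteration written compactly.
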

\begin{proof}
	If $G$ is leveled then every oriented walk from a vertex $u$ to a vertex $v$ of $G$ has net length $\ell(v) - \ell(u)$, and therefore every walk in $G$ has bounded net length. Conversely, suppose that $G$ is not leveled, and fix two vertices $u$ and $v$ of $G$. By Lemma~\ref{two} there are two different walks $W_1$ and $W_2$ from $u$ to $v$ such that $net(W_1) \neq net(W_2)$. Assume without loss of generality that $net(W_1) > net(W_2)$. Then for any positive integer $k$, $net((W_1 \bar{W}_2)^{k+1}) > k$.
\end{proof}

\begin{definition}
	Given a digraph $G$ and an integer $d$, the \emph{undirected} graph $\cG(G,d)$ is constructed as follows. (For short, we write $\cG$ for $\cG(G,d)$.)
	\begin{itemize}
		\item The vertex set of $\cG$ is defined as $V(\cG) = \{I_j(v) \;|\; v \in V(G) \text{ and } v \in V(G) \text{, and } 0 \leq j \leq d\}$
		\item The edge set of $\cG$ is defined as $E(\cG) = \{(I_j(u), I_{j+1}(v)) \;|\; uv \text{ is a forward arc of $G$, and } 0 \leq j \leq d-1\}$
	\end{itemize}
	We use $I_k$ to denote $\bigcup_{v \in V(G)} I_k(v)$, and $I_k^{\cG}$ to emphasize that $I_k$ is defined with respect to $\cG$.
\end{definition}
Clearly, there is a logspace algorithm that takes $G$ and $d$ as inputs and outputs $\cG(G,d)$. The following lemma is a simple observation.

\begin{lemma}\label{walks}
	Let $G$ be a digraph, and consider $\cG = \cG(G,d)$. There is a walk from a vertex $I_0(u) \in I_0^{\cG}$ to a vertex $I_k(v) \in I_k^{\cG}$ if and only if there is a  walk $W = a_0 \dots a_n$ in $G$ such that $a_0 = u$ and $a_n = v$ such that $net(W) = k$, and for each $0 \leq i \leq n$, $\ell(a_i) \geq \ell(a_0)$ (i.e., $a_0$ is in the bottom level of $W$).
\end{lemma}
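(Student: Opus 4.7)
The plan is to treat $\cG$ as the ``net-length unfolding'' of $G$: the second coordinate $j$ of a vertex $I_j(w)$ records the running value of net length along the prefix of the walk, so that a walk in $\cG$ and the corresponding walk in $G$ carry the same underlying sequence of vertices, with $j$ tracking the current net length. The equivalence in the lemma is then essentially the statement of this correspondence.

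The workhorse observation, immediate from the definition of $\cG$, is the following translation rule: an (undirected) edge $\{I_j(u), I_{j+1}(v)\}$ of $\cG$ exists precisely when $uv$ is a forward arc of $G$, so traversing this edge from $I_j(u)$ to $I_{j+1}(v)$ corresponds to using $uv$ as a forward arc in the $G$-walk, while traversing it in the opposite direction corresponds to using $uv$ as a backward arc. For the ``only if'' direction, I would take an arbitrary walk $P = I_{j_0}(a_0)\, I_{j_1}(a_1) \cdots I_{j_n}(a_n)$ in $\cG$ with $j_0 = 0$, $a_0 = u$, $j_n = k$, $a_n = v$, and set $W := a_0 a_1 \cdots a_n$. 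Applying the translation rule step by step shows that $W$ is an oriented walk in $G$, and that $j_{i+1} - j_i$ equals $+1$ or $-1$ according to whether $a_i a_{i+1}$ is forward or backward in $W$; a routine induction then yields $j_i = net(W(a_0, a_i))$. In particular $net(W) = j_n = k$, and since $j_i \geq 0$ for all $i$ (because $I_{j_i}(a_i) \in V(\cG)$), the leveledness of $G$ gives $\ell(a_i) - \ell(a_0) = net(W(a_0, a_i)) \geq 0$, i.e., $\ell(a_i) \geq \ell(a_0)$.

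For the ``if'' direction, starting from $W = a_0 \cdots a_n$ in $G$ with $a_0 = u$, $a_n = v$, $net(W) = k$, and $\ell(a_i) \geq \ell(a_0)$ for all $i$, I would define $j_i := \ell(a_i) - \ell(a_0) = net(W(a_0, a_i))$. These indices are nonnegative by hypothesis, and (in the algorithmic context, where $d$ is chosen to be at least the height of $G$) at most $d$, so each $I_{j_i}(a_i)$ is a vertex of $\cG$. The same translation rule then certifies that each consecutive pair $\{I_{j_i}(a_i), I_{j_{i+1}}(a_{i+1})\}$ is an edge of $\cG$, producing the desired walk from $I_0(u)$ to $I_k(v)$. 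I do not foresee any real obstacle; the lemma is essentially a bookkeeping exercise once the translation rule between undirected $\cG$-edges and forward/backward uses of $G$-arcs is set up, the only small subtlety being to keep straight which direction of a $\cG$-edge corresponds to a forward vs.\ backward traversal of the underlying arc of $G$.
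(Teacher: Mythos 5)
Your ``only if'' direction is correct and is the natural unfolding argument: the second index of a $\cG$-vertex changes by $\pm 1$ according to whether the underlying arc of $G$ is traversed forwards or backwards, so $j_i = net(W(a_0,a_i))$, whence $net(W)=k$ and $net(W(a_0,a_i)) = j_i \ge 0$ for all $i$, which is exactly the condition that $a_0$ is in the bottom level of $W$. (For a non-leveled $G$ the symbol $\ell(a_i)$ in the statement has to be read this way, via the levels of the walk $W$ itself, as the parenthetical in the lemma indicates; your appeal to ``the leveledness of $G$'' is not available in general, but your inequality $j_i \ge 0$ already gives the intended conclusion.) The paper offers no proof of this lemma, calling it a simple observation, so there is nothing to compare against on that side.

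The gap is in the ``if'' direction, at the step you yourself flag: you need every intermediate index $j_i = net(W(a_0,a_i))$ to be at most $d$, and this does not follow from the hypotheses of the lemma. The conditions $net(W)=k$ and $net(W(a_0,a_i)) \ge 0$ bound only the endpoints and the lower excursions of the running net length, not its upper excursions, and the biconditional as literally stated fails without such a bound: take $G$ with arcs $uw_1$, $w_1w_2$, $vw_2$ and $d=1$; the walk $u\,w_1\,w_2\,v$ has net length $1$ with $u$ in its bottom level, yet $I_1(v)$ is an isolated vertex of $\cG(G,1)$, so there is no walk from $I_0(u)$ to $I_1(v)$. Your parenthetical escape hatch --- ``in the algorithmic context, where $d$ is chosen to be at least the height of $G$'' --- is precisely what is \emph{not} available where the lemma is used: in Lemma~\ref{GG} the converse direction is invoked exactly when $G$ is unleveled or has height exceeding the parameter, so the bound $j_i \le \ell(a_i)-\ell(a_0) \le height(G) \le d$ cannot be assumed there. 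The repair is to first replace $W$ by its prefix ending at the first vertex whose running net length equals $k$: since the running net length changes by $\pm 1$ per step, on that prefix it stays in $[0,k] \subseteq [0,d]$, and then your construction of the $\cG$-walk goes through verbatim. This changes the terminal vertex $v$, which is harmless in the application (the endpoints are existentially quantified in Lemma~\ref{GG}), but it means the lemma should really either carry the extra hypothesis $net(W(a_0,a_i)) \le d$ for all $i$, or assert only the existence of a walk from $I_0(u)$ into $I_k^{\cG}$.
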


\begin{lemma}\label{GG}
	Let $G$ be a connected digraph, and $\cG = \cG(G,d+1)$. Then $G$ is leveled and has height at most $d$ if and only if for each vertex $I_0(u) \in I_0^{\cG}$ and $I_{d+1}(v) \in I_{d+1}^{\cG}$ there is no walk from $I_0(u)$ to $I_{d+1}(v)$ in $\cG(G,d+1)$.
\end{lemma}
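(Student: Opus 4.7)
The plan is to prove both directions of Lemma~\ref{GG} using the correspondence (essentially the content of Lemma~\ref{walks}) between walks in the undirected layered graph $\cG = \cG(G,d+1)$ and oriented walks in $G$ with controlled cumulative net length. Concretely, by the construction of $\cG$, a walk in $\cG$ from $I_0(u)$ to $I_{d+1}(v)$ is the unfolding of some oriented walk $W = a_0 a_1 \ldots a_m$ in $G$ from $u$ to $v$ whose partial net lengths $N(i) := net(a_0 a_1 \ldots a_i)$ stay in $[0, d+1]$ and whose final value $N(m)$ equals $d+1$; conversely, every such walk in $G$ gives rise to a walk in $\cG$.

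For the forward direction, assume $G$ is leveled with $height(G) \leq d$ and suppose for contradiction that a walk in $\cG$ from $I_0(u)$ to $I_{d+1}(v)$ exists. Unfolding yields an oriented walk $W$ from $u$ to $v$ in $G$ with $net(W) = d+1$. Since $G$ is leveled, $net(W) = \ell(v) - \ell(u) \leq height(G) \leq d$, contradicting $net(W) = d+1$.

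For the backward direction I argue the contrapositive: assuming $G$ is either not leveled, or leveled with $height(G) \geq d+1$, I exhibit a walk in $\cG$ from some $I_0(u)$ to some $I_{d+1}(v)$. First I produce an oriented walk $W$ in $G$ with $net(W) \geq d+1$: if $G$ is leveled with $height(G) \geq d+1$, pick $u \in L_0$ and $v \in L_{d+1}$ and take any walk from $u$ to $v$ guaranteed by connectivity, which then satisfies $net(W) = d+1$; if $G$ is not leveled, Lemma~\ref{bounded} supplies a walk $W$ with $net(W)$ arbitrarily large. Writing $W = a_0 \ldots a_m$ and letting $N(i)$ denote its cumulative net length, we have $N(0) = 0$, $N(m) \geq d+1$, and $|N(i+1) - N(i)| = 1$. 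Let $i^* = \min\{i : N(i) = d+1\}$ (which exists since $N$ changes by $\pm 1$ and eventually reaches $d+1$) and $i^{**} = \max\{i \leq i^* : N(i) = 0\}$ (which exists since $N(0) = 0$). The subwalk $W' = a_{i^{**}} a_{i^{**}+1} \ldots a_{i^*}$ will be the desired object, unfolding to a walk in $\cG$ from $I_0(a_{i^{**}})$ to $I_{d+1}(a_{i^*})$.

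The main step is verifying that the cumulative net length of $W'$ (shifted so it starts at $0$) stays in $[0, d+1]$. Minimality of $i^*$ gives $N(k) \leq d$ for $0 \leq k < i^*$, while $N(i^*) = d+1$, so $N(k) \leq d+1$ throughout $[i^{**}, i^*]$. For the lower bound, maximality of $i^{**}$ forces $N(k) \neq 0$ for $i^{**} < k \leq i^*$; since $N$ moves by $\pm 1$ and $N(i^*) = d+1 > 0$, the intermediate value property rules out $N(k) < 0$ on this range, so $N(k) \geq 1$ there. After shifting by $N(i^{**}) = 0$, the cumulative net length of $W'$ lies in $[0, d+1]$ and ends at $d+1$, delivering the claimed walk in $\cG$.
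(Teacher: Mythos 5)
Your proof is correct and follows essentially the same route as the paper: the forward direction by contradiction via net length, and the converse by producing a walk of net length $d+1$ that starts at its bottom level and lifting it into $\cG$ (the content of Lemma~\ref{walks}). Your explicit truncation at the first index where the cumulative net length reaches $d+1$ and the last preceding index where it returns to $0$ just spells out the range condition that the paper delegates to Lemma~\ref{walks} without detailed verification.
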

\begin{proof}
	Assume that $G$ is leveled and has height at most $d$. Suppose for contradiction that for some vertices $I_0(u) \in I_0^{\cG}$ and $I_{d+1}(v) \in I_{d+1}^{\cG}$ there is a walk $I_0(a_0) \dots I_{d+1}(a_n)$ in $\cG(G,d+1)$. Then the walk $a_0\dots a_n$ has net length $d+1$, so $G$ cannot have height at most $d$.
	
	Conversely, assume that $G$ is not leveled or has height at least $d+1$. In both cases (in the former case using Lemma~\ref{bounded}), there is a walk $W = a_0 \dots a_n$ in $G$ such that $a_0 = u$ and $a_n = v$ such that $net(W) = d+1$, and for each $0 \leq i \leq n$, $\ell(a_i) \geq \ell(a_0)$ (i.e., $a_0$ is in the bottom level of $W$). Therefore by Lemma~\ref{walks}, there is a walk from $I_0(u)$ to $I_{d+1}(v)$ in $\cG(G,d+1)$.
\end{proof}

\begin{lemma}\label{component_in_L}
	Let $G$ be a leveled digraph of height at most $h$ (a constant), and $v \in V(G)$. Then there is a logspace algorithm that outputs the up-component (down-component) of $G$ at $v$.
\end{lemma}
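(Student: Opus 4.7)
The plan is to reduce the computation of the up-component to an instance of undirected $st$-connectivity on the auxiliary undirected graph $\cG(G,h)$ defined just before the lemma, and then appeal to Reingold's theorem that $\mathrm{USTCON}\in\mathrm{L}$.

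First I would observe that a vertex $u$ belongs to the up-component of $G$ at $v$ if and only if there is an oriented walk $W=a_0a_1\cdots a_n$ in $G$ with $a_0=v$, $a_n=u$ and $\ell(a_i)\ge\ell(v)$ for every $i$. For any such walk $net(W)=\ell(u)-\ell(v)\in\{0,1,\dots,h\}$, since $G$ is leveled of height at most $h$ and the walk stays at levels $\ge\ell(v)$. Applying Lemma~\ref{walks} (with $v$ playing the role of the starting vertex in its own bottom level), the existence of such a $W$ is equivalent to the existence of some $k\in\{0,1,\dots,h\}$ for which $I_0(v)$ and $I_k(u)$ lie in the same undirected connected component of $\cG(G,h)$.

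Given this equivalence, the logspace algorithm is straightforward. Build $\cG(G,h)$ on the fly, which is already noted to be logspace-computable just before Lemma~\ref{walks}. For each $u\in V(G)$ and each $k\in\{0,1,\dots,h\}$, invoke Reingold's logspace USTCON algorithm to decide whether $I_0(v)$ and $I_k(u)$ are connected in $\cG(G,h)$; declare $u$ to be in the up-component iff at least one such $k$ works. Since $h$ is a constant, this OR over $k$ fits in logspace. Then iterate over all arcs $ab$ of $G$ and output $ab$ exactly when both endpoints have been declared to be in the up-component. For the down-component, apply the same procedure to the reverse digraph $r(G)$.

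The main technical ingredient is Reingold's theorem: a naive depth-first or breadth-first traversal of the up-component does not fit in logspace, since the component may contain arbitrarily many vertices per level even though its height is bounded. The $\cG(G,h)$ construction is what turns the directed, level-constrained walk problem into an instance of plain USTCON, after which Reingold delivers the desired bound.
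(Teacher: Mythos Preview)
Your proposal is correct and follows essentially the same approach as the paper: construct $\cG(G,h)$, use Reingold's algorithm to test whether $I_0(v)$ is connected to some $I_j(u)$, output those $u$ as the vertex set $U$, and then output the arcs of $G$ with both endpoints in $U$. Your write-up is in fact more careful than the paper's, since you explicitly invoke Lemma~\ref{walks} to justify the equivalence between membership in the up-component and connectivity in $\cG(G,h)$, and you spell out the down-component case via $r(G)$ rather than just ``similarly.''
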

\begin{proof}
	We produce $\cG = \cG(G,h)$. We output every vertex $u$ such that there is an undirected path (using Reingold's algorithm \cite{reingold}) from $I_0(v)$ to $I_j(u)$ in $\cG$ for some $j$. These vertices form $U$. Now we output every arc $ab$ such that $a,b \in U$. The down-component at $v$ can be produced in a similar way.
\end{proof}

We are ready to specify the main subroutine (Algorithm~\ref{level_alg}) of Algorithm~\ref{high_level}. We denote the disjoint union of the trees $T_1,\dots,T_n$ with $T'$. Inductively, we assume that there is a logspace algorithm $A_i$ for each of LHOM($T_i$), $0 \leq i \leq n$. We can easily combine the algorithms for LHOM($T_i$), $1 \leq i \leq n$, to obtain a logspace algorithm $A'$ for LHOM($T'$) (we use Reingold's logspace algorithm for undirected reachability \cite{reingold} to output the components $G'$ of $G$, and then test whether there is a list homomorphism from each $G'$ to one of the $T_i$-s.)
\begin{algorithm}
	\caption{Check if there is a list homomorphism $h : G \rightarrow T$ such that $h(L_\alpha^G) \subseteq L_{\ell(v_0)-1}^T$.}\label{level_alg}
	\begin{algorithmic}[1]
		\INPUT A leveled digraph $G$, and an integer $0 \leq \alpha \leq height(G) - 1$.
		\OUTPUT YES if there is a list homomorphism $h : G \rightarrow T$ such that $h(L_\alpha^G) \subseteq L_{\ell(v_0)-1}^T$ and NO otherwise.
		\State Let $\cU_{all}$ be the set of up-components of $G$ at level $L_{\alpha+1}^G$.
		\State Using $A_0$, check for each $U \in \cU$ if there is a list homomorphism $h$ from $U$ to $T_0$ such that for each $v \in V(U) \cap L_{\alpha+1}^G$ that has at least one inneighbour when considered as a vertex of $G$, $h(v) = v_0$ (this can be enforced by setting the list of $v$ to $\{v_0\}$). Let $\cU$ be the set of those $U \in \cU_{all}$ for which such a list homomorphism exists. \label{top_dominant}
		\State Let $G'$ be the subgraph of $G$ induced by the vertices $V(G) \setminus V(\cU)$, where a vertex $v$ belongs to $V(\cU)$ if it is a vertex of some up-component in $\cU$.
		\State Using $A'$, check if there is a list homomorphism $h$ from $G'$ to $T'$ such that vertices of $G'$ in level $L_\alpha^G$ are mapped to vertices of $T'$ in $L_{\ell(v_0) -1}^T$. If no such list homomorphism exists, output NO.\label{reject}
		\State Otherwise output YES.
	\end{algorithmic}
\end{algorithm}

The following lemma proves the correctness of Algorithm~\ref{level_alg}.
\begin{lemma}
	Suppose that $T$ is the up-join of $T_0,T_1,\dots,T_n$, $v_0$ is the central vertex of $T_0$, and $v_i$ is the join vertex of $T_i$, $1 \leq i \leq n$. Let $G$ be a leveled digraph, and $0 \leq \alpha \leq height(G) - 1$ be an integer.  Then there is a list homomorphism $h : G \rightarrow T$ such that level $L_\alpha$ of $G$ is mapped to level $L_{\ell(v_0)-1}$ of $T$, (i.e., $h(L_\alpha^G) \subseteq L_{\ell(v_0)-1}^T$) if and only if Algorithm~\ref{level_alg} outputs YES on input $G,\alpha$.
\end{lemma}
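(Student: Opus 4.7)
The plan is to prove the stated equivalence in both directions, viewing the algorithm as decomposing any candidate list homomorphism $h:G\to T$ along the join cut at level $\ell(v_0)-1$.

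For the forward direction, suppose $h:G\to T$ is a list homomorphism with $h(L_\alpha^G)\subseteq L_{\ell(v_0)-1}^T$. The portion of $T$ at levels $\geq\ell(v_0)$ splits into up-components that are either $T_0$ itself (containing $v_0$) or a subtree of some $T_i$ hanging above a join vertex $v_i$ that has out-neighbours inside $T_i$. Since each $U\in\cU_{all}$ is connected and $h$ is level-preserving, $h(U)$ lies inside a single such up-component of $T$. Let $\cU_0\subseteq\cU_{all}$ collect those $U$ mapped into $V(T_0)$. For $U\in\cU_0$ the restriction $h|_U$ is a list homomorphism $U\to T_0$, and whenever $v\in V(U)\cap L_{\alpha+1}^G$ has an inneighbour $u$ in $G$, the arc $uv$ must map to an arc entering $V(T_0)$ from $L_{\ell(v_0)-1}^T$; the only such arcs are the join arcs $v_iv_0$, so $h(v)=v_0$. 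Thus $h|_U$ witnesses $U\in\cU$, proving $\cU_0\subseteq\cU$. The restriction $h|_{V(G')}$ now has image disjoint from $V(T_0)$ (vertices at levels $\leq\alpha$ cannot land in $T_0$, and the remaining vertices of $G'$ lie in up-components outside $\cU_0$), so it is a list homomorphism $G'\to T'$ satisfying the Step~4 level condition, and the algorithm outputs YES.

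For the converse, I glue the Step~2 witnesses $\{h_U\}_{U\in\cU}$ and the Step~4 witness $h'$ into $h:V(G)\to V(T)$ by setting $h(x)=h_U(x)$ on each $V(U)$ and $h(x)=h'(x)$ elsewhere. Arcs lying entirely in some $V(U)$ or entirely in $V(G')$ are handled by $h_U$ or $h'$ respectively (the latter using the inclusion $T'\subseteq T$). The only nontrivial case is a cross arc $uv$ with $u\in V(G')\cap L_\alpha^G$ and $v\in V(U)\cap L_{\alpha+1}^G$ for some $U\in\cU$: here $v$ has inneighbour $u$ in $G$, so Step~2 forces $h(v)=v_0$, and consistency of $h$ on $uv$ reduces to showing $h'(u)$ is an inneighbour of $v_0$ in $T$, i.e., $h'(u)\in\{v_1,\dots,v_n\}$.

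This last point is the main obstacle, since $L_{\ell(v_0)-1}^T$ may contain vertices other than the $v_i$'s (sinks at the join level of some $T_i$), and the Step~4 level condition alone does not exclude them. I intend to handle it by a standard arc-consistency preprocessing: one may assume without loss of generality that the input $(G,L)$ is arc-consistent, an assumption enforceable in logspace. Once Step~2 tightens $L(v)$ to $\{v_0\}$, arc-consistency on the pair $uv$ propagates to $L(u)\subseteq\{v_1,\dots,v_n\}$, a restriction preserved by $A'$ in Step~4, which therefore forces $h'(u)=v_i$ for some $i$. Consequently $uv$ maps to $v_iv_0\in A(T)$, and the glued map $h$ is a bona fide list homomorphism $G\to T$ with $h(L_\alpha^G)\subseteq L_{\ell(v_0)-1}^T$.
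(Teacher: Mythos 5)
Your argument is sound and is in fact \emph{more} careful than the paper's own proof, which disposes of the converse direction in one sentence (``since all arcs $v_1v_0,\dots,v_nv_0$ are present, the glued map is a list homomorphism'') without ever explaining why the Step~4 witness $h'$ sends the tail $u$ of a cross arc to an in-neighbour of $v_0$. The obstacle you isolated is genuine: $L_{\ell(v_0)-1}^T$ can contain vertices of the $T_i$'s other than the join vertices (by Definition~\ref{constructible} these extra vertices necessarily have out-degree $0$ in $T$), and the level condition in Step~4 does not exclude them. A minimal witness: let $T_0=\{v_0\}$ and $T_1$ have vertices $b,v_1,w$ with arcs $bv_1,bw$, so the up-join $T$ has arcs $bv_1,bw,v_1v_0$; take $G$ to be the single arc $uv$ with $\alpha=0$, $L(u)=\{w\}$, $L(v)=\{v_0\}$. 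Then $\{v\}\in\cU$, $G'=\{u\}$, Step~4 succeeds with $h'(u)=w$, and the algorithm answers YES, yet $wv_0$ is not an arc and no list homomorphism with the required property exists. So as literally stated the algorithm (and hence the lemma) needs the amendment you supply; you correctly identified that this is the only failure point, since your case analysis of arcs (within a $U$, within $G'$, and the cross arcs $uv$ with $u\in L_\alpha^G\cap V(G')$, $v\in L_{\alpha+1}^G\cap V(U)$) is exhaustive.

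Your repair is the right one, though it is worth stating it slightly more sharply than ``assume the input is arc-consistent'': arc-consistency of the \emph{original} instance is not enough, and what is needed is exactly the one round of propagation you describe after Step~2 pins the boundary vertices to $v_0$, namely replacing $L(u)$ by $L(u)\cap\{v_1,\dots,v_n\}$ for every $u\in V(G')\cap L_\alpha^G$ having an out-neighbour in some $U\in\cU$. This restriction is harmless for your forward direction: by the up-join conditions, \emph{every} vertex of $T$ in $L_{\ell(v_0)-1}^T$ with positive out-degree is one of $v_1,\dots,v_n$, so any genuine homomorphism $h$ already maps such a $u$ into $\{v_1,\dots,v_n\}$, and $h|_{G'}$ respects the tightened lists. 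One further small point: your claim that every vertex of $G'$ above level $\alpha$ lies in some member of $\cU_{all}$ uses connectivity of $G$ together with $L_\alpha^G\neq\emptyset$ (Algorithm~\ref{high_level} only ever calls the subroutine on connected components, so this is fine, but the lemma as stated allows disconnected $G$, where a whole component living above level $\alpha+1$ would escape both $\cU_{all}$ and your argument); it is worth recording that hypothesis explicitly.
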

\begin{proof}
	Assume that Algorithm~\ref{level_alg} outputs YES. (The proof is aided by Figure~\ref{alg_2}.) Then there is a list homomorphism $h'$ from $G'$ to $T'$, and a list homomorphism $h_U$ from each up-component $U \in \cU$ to $T_0$. Since all arcs $v_1v_0,v_2v_0,\dots,v_nv_0$ are present, the map that $h(v)$ defined as
	$h'(v)$ if $v \in V(G')$, and as $h_U(v)$ if $v \in V(U)$ is a list homomorphism from $G$ to $T$.
	
	Conversely, assume that there is a list homomorphism $g$ from $G$ to $T$. If there is a list homomorphism from an up-component $U$  to $T_0$, then we can assume that $g$ maps $U$ to $T_0$. After these up-components are removed from $G$ to obtain $G'$, $g|_{G'}$ is a list homomorphism from $G'$ to $T'$. Therefore the algorithm accepts.
	
	\begin{figure}[htb]
		\begin{center}
			\includegraphics[scale=\wp]{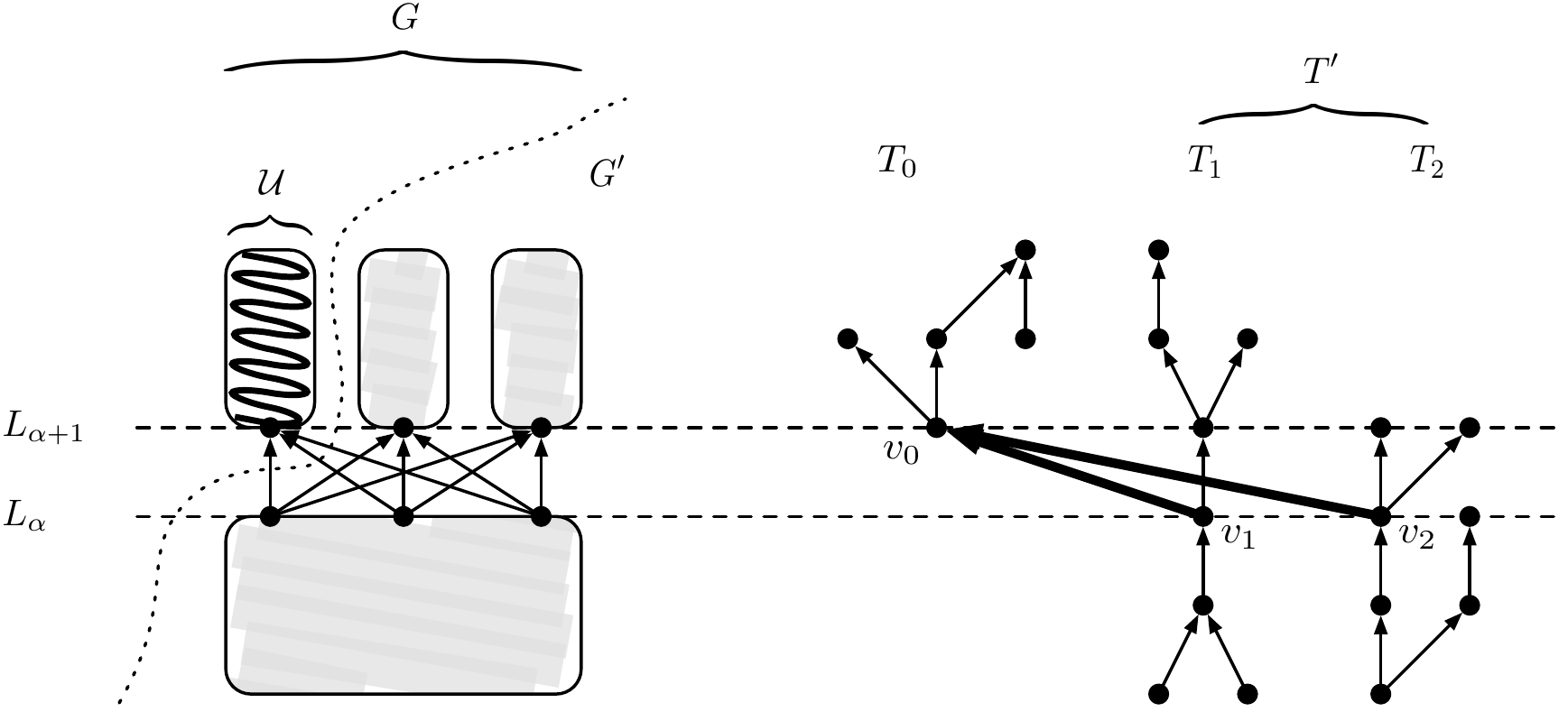}
		\end{center}
		\caption{Illustration of the correctness proof of Algorithm~\ref{level_alg}.}\label{alg_2}
	\end{figure}
\end{proof}

It is routine to implement Algorithms~\ref{high_level} and \ref{level_alg} so that they use only logarithmic space. This is done using the basic trick that if $A_1$ and $A_2$ are logspace algorithms, and $A_3$ is the algorithm that first runs $A_1$ on the input, then feeds the output of $A_1$ to $A_2$, and then outputs the output of $A_2$, then $A_3$ can be assumed to be a logspace algorithm. The various reachability tests the algorithms use can be implemented using Reingold's logspace algorithm for undirected reachability \cite{reingold}.

\section{Circular N, $\mathrm{Z_6}$, and fuzzy N}\label{equivalence}

We prove that if $H$ is a digraph, then $H$ contains a fuzzy $\mathrm{N}$ or $\mathrm{Z_6}$ as an induced subgraph if and only if $H$ contains a circular $\mathrm{N}$. Note that assuming that $L \neq NL$, there is a simpler proof using already proved results.\footnote{Assume for contradiction that $T$ contains a circular \cN\ but no $\mathrm{Z_6}$ or fuzzy $\mathrm{N}$ as an induced subgraph. If $T$ contains no $\mathrm{Z_6}$ or fuzzy $\mathrm{N}$, then we have a logspace algorithm for LHOM($T$) by the previous results of this paper. Since $T$ contains a circular $\mathrm{N}$, LHOM($H$) is $\mathrm{NL}$-hard (\cite{soda_lhom}), so our logspace algorithm works for an $\mathrm{NL}$-hard problem, and therefore $NL = L$, a contradiction.} However, we wish to prove this without the assumption that $L \neq NL$, as stated in the following theorem.

\begin{theorem}\label{circ_N_ind}
	An oriented tree $T$ contains a $\mathrm{Z_6}$ or a fuzzy $\mathrm{N}$ as induced subgraphs if and only if $T$ contains a circular $\mathrm{N}$.
\end{theorem}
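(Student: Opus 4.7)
The plan is to deduce this equivalence from lemmas available (or forthcoming) elsewhere in the paper, treating the two directions separately.

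For the direction ``$T$ contains $\mathrm{Z}_6$ or a fuzzy $\mathrm{N}$ $\Rightarrow$ $T$ contains a circular $\mathrm{N}$'', I would simply invoke Lemma~\ref{z->cN}, which is where the explicit combinatorial construction of a circular $\mathrm{N}$ from an induced $\mathrm{Z}_6$ or fuzzy $\mathrm{N}$ will be carried out. In the chain of implications underlying Theorem~\ref{main} this is precisely the step (3) $\Rightarrow$ (1), stated here in contrapositive form.

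For the reverse direction ``$T$ contains a circular $\mathrm{N}$ $\Rightarrow$ $T$ contains $\mathrm{Z}_6$ or a fuzzy $\mathrm{N}$'', I would argue by contrapositive, following the chain (1) $\Rightarrow$ (2) $\Rightarrow$ (4) $\Rightarrow$ (5) $\Rightarrow$ (3). Assume $T$ contains neither $\mathrm{Z}_6$ nor a fuzzy $\mathrm{N}$ as an induced subgraph. Then, by Lemma~\ref{d1}, $T$ is constructible. By Lemma~\ref{HM_chain_defined} (to be established in Section~\ref{algebra_sec}), a constructible tree admits a Hagemann--Mitschke chain of conservative polymorphisms of length $3$, and in particular admits such a chain of some length $n \geq 1$. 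Finally, by the relevant result of \cite{soda_lhom}, a digraph that admits a conservative HM-chain of any length cannot contain a circular $\mathrm{N}$. Each ingredient is purely combinatorial or algebraic, so the proof remains unconditional and avoids the complexity-theoretic shortcut sketched in the footnote.

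The main obstacle does not lie in assembling these implications but in the individual lemmas, most notably Lemma~\ref{z->cN}. There one must produce, from an induced $\mathrm{Z}_6$ or from a decomposition $P_1 T \bar{P}_2 B P_3$ of a fuzzy $\mathrm{N}$, an explicit quadruple of vertices witnessing a circular $\mathrm{N}$, verifying both the existence of certain walks/paths between the witnesses and the non-existence of others. The acyclicity of $T$ together with the level-preserving property of homomorphisms in leveled digraphs will drive the non-existence side of that construction. Once Lemmas~\ref{d1}, \ref{HM_chain_defined}, and \ref{z->cN} are in hand, the proof of Theorem~\ref{circ_N_ind} itself is a short bookkeeping argument combining the two halves above.
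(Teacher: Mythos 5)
Your proposal is correct and matches the paper's own proof: one direction is exactly Lemma~\ref{z->cN}, and for the converse the paper likewise offers the unconditional chain (1) $\Rightarrow$ (2) $\Rightarrow$ (4) $\Rightarrow$ (5) $\Rightarrow$ (3) via Lemma~\ref{d1}, Lemma~\ref{HM_chain_defined}, and the result of \cite{soda_lhom} that a conservative HM-chain excludes a circular $\mathrm{N}$ (the paper also notes an alternative direct argument in Appendix~B, Lemma~\ref{cN->z}, but that is optional).
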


We begin with recalling some definitions from \cite{soda_lhom}.
Let $H$ be a digraph. We define two walks $X = x_0 x_1 \dots x_n$ and $Y = y_0 y_1 \dots y_n$ in $H$ to be {\em congruent} if they follow the same pattern of forward and backward arcs, i.e., $x_ix_{i+1}$ is a forward arc if and only if $y_iy_{i+1}$ is a forward arc. Suppose $X, Y$ and $Z = z_0 z_1 \dots z_n$ are congruent walks. We say that $x_iy_{i+1}$ is a {\em faithful arc from $X$ to $Y$} if it is an arc of $H$ in the same direction (forward or backward) as $x_ix_{i+1}$. We say that $X$ {\em avoids} $Y$ in $H$ if there is no faithful arc from $X$ to $Y$ in $H$. Observe that two walks of length zero also avoid each other.

We say that $Z$ {\em protects} $Y$ from $X$ if the existence of faithful arcs $x_iz_{i+1}$ and $z_jy_{j+1}$ in $H$ implies that $j \leq i$. In other words, $Z$ protects $Y$ from $X$ if and only if there exists a subscript $s$ such that $x_0, x_1, \dots, x_s$ avoids $z_0, z_1, \dots, z_s$ and $z_{s+1}, z_{s+1}, \dots, z_n$ avoids $y_{s+1}, y_{s+2}, \dots, y_n$.

\begin{definition}\label{def-cN}
	Let $x, x', y, y'$ be vertices of a digraph $H$. An {\em extended $N$ from $x,x'$ to $y,y'$} in $H$ consists of congruent walks $X$ (from $x$ to $x'$), $Y$ (from $y$ to $y'$), and $Z$ (from $y$ to $x'$), such that $X$ avoids $Y$ and $Z$ protects $Y$ from $X$. A {\em circular $\mathrm{N}$} is an extended $N$ in which $x = x'$ and $y = y'$.
\end{definition}

\begin{lemma}\label{z->cN}
	If an oriented tree $T$ contains a $\mathrm{Z_6}$ or a fuzzy $\mathrm{N}$ as induced subgraphs, then $T$ contains a circular $\mathrm{N}$.
\end{lemma}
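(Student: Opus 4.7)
The plan is to construct, in each case, three congruent walks $X, Y, Z$ in the tree satisfying Definition~\ref{def-cN}. I write $\mathbb{T}$ for the oriented tree throughout, to avoid a name clash with the top connector $T$ of the \fN.

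For the \z6\ case, assume WLOG the $f=0$ orientation with vertices $a_1\!\to\!a_2\!\leftarrow\!a_3\!\to\!a_4\!\leftarrow\!a_5\!\to\!a_6$. Set $x=a_1$, $y=a_5$ and take
\[
X = a_1a_2a_1a_2a_1,\qquad Y = a_5a_6a_5a_6a_5,\qquad Z = a_5a_4a_3a_2a_1,
\]
all of common pattern FBFB. Inducedness of the \z6\ forces the only arcs of $\mathbb{T}$ among $\{a_1,\dots,a_6\}$ to be the five arcs of the \z6; a direct check then shows $X$ avoids $Y$ (the only faithful candidates are $a_1a_6$ (F) and $a_2a_5$ (B), neither an arc of $\mathbb{T}$), and that the subscript $s=1$ witnesses the protection condition (the prefix candidate $x_0z_1=a_1a_4$ is not an arc, and the suffix candidates $z_2y_3=a_3a_6$ and $z_3y_4=a_2a_5$ are likewise not arcs). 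The $f=1$ orientation is entirely symmetric, with common pattern BFBF.

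For the \fN\ case, let the induced \fN\ in $\mathbb{T}$ be $P = P_1\,T\,\bar{P}_2\,B\,P_3$. Let $x$ be the first vertex of $P_1$ and $y$ the first vertex of $P_3$, both in the bottom level. The natural candidates are the closed walks $X_0 = P_1\bar{P}_1$ at $x$ and $Y_0 = P_3\bar{P}_3$ at $y$, together with the unique oriented tree path $Z_0 = \bar{B}\,P_2\,\bar{T}\,\bar{P}_1$ from $y$ to $x$. Since $P$ is induced in $\mathbb{T}$ and $V(P_1)$, $V(P_3)$ are vertex-disjoint non-adjacent segments of $P$, no arc of $\mathbb{T}$ joins $V(P_1)$ to $V(P_3)$, hence $X_0$ avoids $Y_0$; choosing the split subscript $s$ just before the position where $Z_0$ crosses from $\bar{B}\,P_2$ into $\bar{T}\,\bar{P}_1$, the same induced-arc argument delivers the protection condition.

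The chief obstacle is \emph{congruence}: the candidate walks have lengths $2|P_1|$, $2|P_3|$, $|P_1|+|T|+|P_2|+|B|$, and their F/B patterns depend on the internal $\zz_2/\zz_4$ structure of the three minimal fuzzy paths, which share a common height $h \geq 2$ but need not agree in length or pattern. I plan to overcome this by padding each candidate to a common length $N$ and a common pattern, by inserting local $\zz_2/\zz_4$-oscillations at each of the $h$ levels inside each $P_i$ together with $\zz_3$-oscillations from $T$ and $B$ whenever they are of the form $\zz_3$. Because the three $P_i$ share the same height, these paddings can be synchronized level-by-level to produce genuinely congruent walks $X, Y, Z$ whose supports remain, respectively, inside $V(P_1)$, $V(P_3)$, and $V(Z_0)$; once this length/pattern matching is in place, the avoidance and protection statements of the previous paragraph carry over verbatim to the padded walks, finishing the proof.
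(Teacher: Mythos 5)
Your proposal is correct and follows essentially the same route as the paper: the $\mathrm{Z_6}$ walks you give are exactly the paper's ($X=01010$, $Y=45454$, $Z=43210$), and for the \fN\ the paper realizes your "synchronized level-by-level padding" by taking a single universal zigzag $Q=Q'\bar{Q'}$ (each level a $\zz_4^{f=0}$) and mapping it homomorphically into $P_1$, $P_3$, and $P_2$-then-$\bar{P}_1$, which yields congruence for free and the same avoidance/protection arguments. The only point to watch when you place the split $s$ is that when $T$ (resp.\ $B$) is a single vertex, $P_1$ and $P_2$ (resp.\ $P_2$ and $P_3$) share their top (resp.\ bottom) endpoint, so $s$ must sit strictly before the shared top vertex is reached — consistent with your "just before the crossing" choice.
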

\begin{proof}
	We first show how to define $X$,$Y$ and $Z$ when $G$ is an induced a $\mathrm{Z_6}$ with vertex set $\{0,1,2,3,4,5\}$ and arcs $\{01,21,23,43,45\}$.  Set $X = 01010$, $Y =45454$, and $Z = 43210$. It is trivial to check that $X$ avoids $Y$ and $Z$ protects $Y$ from $X$.
	
	Assume that the induced subgraph $H$ is a fuzzy $\mathrm{N}$ that can be expressed as $P_1T\bar{P}_2DP_3$. Suppose that the first vertices of $P_1$ and $P_3$ are $x_0$ and $y_0$, respectively, and the last vertices of $P_1$ and $P_3$ are $x_0'$ and $y_0'$, respectively. Let $h = height(P_1)$.
	
	We define an oriented path $Q$ as follows. Let $Q' = Q_1' Q_2' \dots Q_h'$, where each $Q_i'$ is a $Z_4^{f=0}$. Then $Q$ is define as $Q = Q' \bar{Q'}$. Let $q$ denote the first and $q^*$ denote the last vertex of $Q$. Notice that because $P_1$, $P_2$ and $P_3$ are minimal fuzzy paths and $T$ and $B$ are $Z_3^{f=1}$ and $Z_3^{f=0}$, there exist three homomorphisms $h_1$, $h_2$ and $h_3$ as follows.
	\begin{itemize}
		\item $h_1$ maps $Q$ to $P_1$ such that $h_1(q) = h_1(q^*) =x_0$.
		\item $h_3$ maps $Q$ to $P_3$ such that $h_3(q) = h_3(q^*) =y_0$.
		\item Recall that $Q=Q'\bar{Q'}$. Homomorphism $h_2$ maps $Q'$ to $P_2$ and $\bar{Q'}$ to $\bar{P}_1$ such that $h_2(q) = y_0$ and $h_2(q^*) =x_0$.
	\end{itemize}
	Suppose that $Q = q_0q_1 \dots q_n$, and set $X = h_1(q_0) \dots h_1(q_n)$, $Y = h_3(q_0) \dots h_3(q_n)$, and $Z = h_2(q_0) \dots h_2(q_n)$. Since $X$ contains only vertices of $P_1$ and $Y$ contains only vertices of $P_3$, there is no arc a vertex of $X$ and a vertex of $Y$, and therefore $X$ avoids $Y$. It is also straightforward to verify that $Z$ protects $Y$ from $X$.
\end{proof}

\begin{proof}[Proof of Theorem~\ref{circ_N_ind}]
		One direction of the theorem follows from Lemma~\ref{z->cN}. The other direction either follows from the chain of implications outlined at the end of the introduction, or a direct proof is given in Appendix B, see Lemma~\ref{cN->z}.
\end{proof}

\section{An algebraic characterization}\label{algebra_sec}

In this section, we prove that an oriented tree that does not contain a \z6\ or a \fN\ as an induced subgraph admits a chain of Hagemann-Mitschke polymorphisms of length $3$ (Theorem~\ref{HM_3}). It is quite easy to see that this is not the case for general digraphs. For the sake of completeness, we give an explicit digraph that enjoys a Hagemann-Mitschke chain of conservative polymorphisms of length $n$ but not length $n-1$. This is the content of Theorem~\ref{ladder_thm}.

\subsection{Hagemann-Mitschke chain}

\begin{theorem}\label{HM_3}
	Let $T$ be an oriented tree. Then $T$ has conservative polymorphisms $f_1$, $f_2$ and $f_3$ that form a HM-chain if and only if $T$ does not contain a $\mathrm{Z_6}$ or a fuzzy $\mathrm{N}$ as an induced subgraph.
\end{theorem}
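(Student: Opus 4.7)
The $(\Rightarrow)$ direction is immediate from the chain of implications underlying Theorem~\ref{main}: a conservative HM-chain of length $3$ is in particular an HM-chain of length $n$ for some $n\geq 1$, so (4) $\Rightarrow$ (5) is trivial; (5) $\Rightarrow$ (3) is the polymorphism/obstruction result of \cite{soda_lhom}; and (3) $\Rightarrow$ (1) is the contrapositive of Lemma~\ref{z->cN}. So the content of the theorem is the $(\Leftarrow)$ direction.

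For $(\Leftarrow)$, I would first invoke Lemma~\ref{d1} to conclude that $T$ is constructible in the sense of Definition~\ref{constructible}, and then induct on the construction. The base case of a single vertex $v$ is trivial: take $f_1=f_2=f_3\equiv v$. For the inductive step, suppose $T$ is the up-join of constructible trees $T_0,T_1,\dots,T_n$ with central vertex $v_0$ and join vertices $v_1,\dots,v_n$ (the down-join case is symmetric), and that each $T_i$ admits a conservative HM-chain $f_1^{(i)},f_2^{(i)},f_3^{(i)}$ by induction. The plan is to glue these chains using a \emph{selector} rule: when all three inputs lie in a common $T_i$, set $f_j(a,b,c):=f_j^{(i)}(a,b,c)$; when the inputs are split among several subtrees, select one of $\{a,b,c\}$ by a fixed priority scheme that favours inputs in the central tree $T_0$, so the output is determined purely by the \emph{positions} of the inputs among the $T_i$'s and is the same for $j=1,2,3$ on such mixed triples. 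Conservativity is automatic.

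The HM identities $x=f_1(x,y,y)$, $f_i(x,x,y)=f_{i+1}(x,y,y)$, $f_3(x,x,y)=y$ only involve two distinct values $x,y$. If $x,y$ lie in a common $T_i$, the identities are inherited from the inductive chain on $T_i$; if they lie in different subtrees, the selector rule forces $f_1,f_2,f_3$ to return the same one of $\{x,y\}$ on each relevant pair, so the chain equations collapse to tautologies of the form $u=u$. What then remains is the polymorphism property itself: for arcs $u_kw_k\in A(T)$, $k=1,2,3$, one must verify that $f_j(u_1,u_2,u_3)\,f_j(w_1,w_2,w_3)\in A(T)$.

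The main obstacle is precisely this polymorphism check on arcs that cross between the $T_i$'s, i.e., on the newly added arcs $v_iv_0$, and on mixed configurations where the three tails or three heads straddle several $T_i$'s. Here the strong condition on join vertices in Definition~\ref{constructible} — that $v_i$ is either in the top/bottom level of $T_i$ or the \emph{only} vertex at its level with nonzero out-degree (resp.\ in-degree) — is what makes the argument go through: it forces the possible mixed input configurations at the critical levels $\ell(v_0){-}1$ and $\ell(v_0)$ to be so constrained that any reasonable priority rule (favour $T_0$, then break ties by input index) yields an output pair that is either a legitimate cross-arc $v_{i'}v_0$ or an internal arc of a single $T_k$. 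Setting up the selector and verifying this case analysis is the technical core of the argument — this is the role of Lemma~\ref{HM_chain_defined} referenced in the outline — but once it is carried out, the inductive step is complete and the theorem follows.
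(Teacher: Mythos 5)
Your overall architecture coincides with the paper's: the forward direction is discharged through the implication chain (4) $\Rightarrow$ (5) $\Rightarrow$ (3) $\Rightarrow$ (1) of Theorem~\ref{main}, and the reverse direction goes through Lemma~\ref{d1} and an induction on the construction of Definition~\ref{constructible}, which is exactly Lemma~\ref{HM_chain_defined}. However, your sketch of the inductive step has a genuine flaw. You claim that on ``mixed'' triples (coordinates straddling several subtrees) the three operations can be taken to \emph{agree}, with the common value given by a single position-based priority rule, so that the chain identities ``collapse to tautologies.'' This is impossible. Take $x\neq y$ in different subtrees, so that $(x,x,y)$ and $(x,y,y)$ are both mixed. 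The outer identities force $f_1(x,y,y)=x$ and $f_3(x,x,y)=y$. Agreement of the $f_j$ on mixed triples then gives $f_1(x,x,y)=f_3(x,x,y)=y$ and $f_2(x,y,y)=f_1(x,y,y)=x$, so the middle identity $f_1(x,x,y)=f_2(x,y,y)$ reads $y=x$, a contradiction. The middle operation must genuinely differ from the outer two on mixed triples and cannot be determined by coordinate positions alone: it has to interpolate between the first-projection behaviour of $f_1$ and the last-projection behaviour of $f_3$. This is precisely where the work lies in the paper's proof: already for the disjoint union $T_1\sqcup\dots\sqcup T_n$ one must set $g_2(x,y,z)=z$ when the first two coordinates lie in a common component and $g_2(x,y,z)=x$ otherwise, and for the join step $F_2$ is defined by a five-case analysis referring back to $F_1$ and $F_3$.

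The remainder of your plan is sound in outline: you correctly identify the arc-preservation check across the new arcs $v_iv_0$ as the technical core, and correctly point to the join-vertex condition of Definition~\ref{constructible} (that $v_i$ is the unique vertex of its level in $T_i$ with positive out-degree) as the reason mixed configurations at levels $\ell(v_0)-1$ and $\ell(v_0)$ are manageable. But because the operations you actually propose fail the Hagemann--Mitschke identities, the proposal as written does not yield a proof; repairing it requires redesigning $f_2$ on mixed triples along the lines above, after which the case analysis of Lemma~\ref{HM_chain_defined} goes through.
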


The following lemma proves one direction of the theorem. (The basic idea of the proof of this lemma is inspired by the proof of Lemma~18 in \cite{stacs_lhom}.)
\begin{lemma}\label{HM_chain_defined}
	Let $T$ be an oriented tree. If $T$ can be constructed using Definition~\ref{constructible}, then $T$ has conservative polymorphisms $f_1$, $f_2$ and $f_3$ that form a HM-chain.
\end{lemma}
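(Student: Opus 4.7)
The plan is to induct on the number of up-join / down-join operations used to build $T$ according to Definition~\ref{constructible}. The base case of a single vertex is immediate: the unique constant map $f_1 = f_2 = f_3$ is conservative and trivially satisfies the four HM identities. For the inductive step, suppose $T$ is the up-join of $T_0, T_1, \ldots, T_n$ with central vertex $v_0$ and join vertices $v_1, \ldots, v_n$; the down-join case is symmetric (pass to $r(T)$). By the inductive hypothesis each $T_i$ admits a conservative HM-chain $g_1^{(i)}, g_2^{(i)}, g_3^{(i)}$. For $u \in V(T)$ let $\lambda(u)$ be the unique index $i$ with $u \in V(T_i)$.

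The definition of $f_j$ on $V(T)^3$ splits on whether the three labels $\lambda(a), \lambda(b), \lambda(c)$ all agree. When they do, set $f_j(a,b,c) := g_j^{(\lambda(a))}(a,b,c)$; conservativism and the HM identities then follow from the inductive chain inside $T_{\lambda(a)}$. When the labels disagree, the HM identities, specialised to two elements $x, y$ with $\lambda(x) \neq \lambda(y)$, essentially determine $f_j$ on tuples $(x,y,y)$ and $(x,x,y)$: the boundary conditions $f_1(x,y,y) = x$ and $f_3(x,x,y) = y$ combined with the chaining equations $f_1(x,x,y) = f_2(x,y,y)$ and $f_2(x,x,y) = f_3(x,y,y)$ admit the uniform solution $f_1(x,x,y) = f_2(x,y,y) = y$ and $f_2(x,x,y) = f_3(x,y,y) = x$. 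For triples $(a,b,c)$ with three distinct labels, I extend $f_j$ to a conservative choice by a fixed priority rule (preferring the central subtree $T_0$, then by the index $i$), chosen so that the values agree with the two-element rule whenever two of the three labels coincide. The HM identities can then be verified immediately by the two-case split above.

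The main obstacle, and the technical heart of the inductive step, is verifying that each $f_j$ is indeed a polymorphism, i.e., preserves arcs of $T$. When a pair of triples $(a_1,b_1,c_1) \to (a_2,b_2,c_2)$ has all three pointwise arcs inside a single subtree $T_i$, preservation follows from the inductive hypothesis applied to $g_j^{(i)}$. The delicate case is when at least one pointwise arc is a glue arc $v_i v_0$, so that the triple straddles $T_i$ and $T_0$. Here I would exploit the two structural restrictions on join vertices in Definition~\ref{constructible}: either $v_i$ lies in the top vertex level of $T_i$ (so its only outneighbour in $T$ is $v_0$), or $v_i$ is the unique vertex in $L_{\ell(v_i)}^{T_i}$ with nonzero out-degree (so its outneighbour structure in $T$ is likewise tightly controlled). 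This rigidity, combined with the fact that for every glue-arc coordinate both the source vertex $v_i$ and the target vertex $v_0$ are uniquely determined by the label, forces the cross-subtree rule for $f_j$ to output matched vertices on the source and target triples that are joined by an arc of $T$.

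I expect the main work to lie in a careful but finite case analysis of how many of the three coordinates use a glue arc simultaneously ($1$, $2$, or $3$), combined with whether the remaining coordinates are inside $T_0$ or inside some $T_{i'}$. Each subcase reduces either to the inductive polymorphism property on a single $T_i$ or to the observation that the unique outneighbour/inneighbour of $v_i$ across the join is $v_0$ and vice versa; this mirrors the pattern used in Lemma~18 of \cite{stacs_lhom} for undirected skew-decomposable graphs and is the direct analogue of that proof in the oriented-tree setting.
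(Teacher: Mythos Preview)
Your high-level plan (induct on the construction, use the subtree chains on pure triples, and handle mixed triples by a priority rule) is the same as the paper's. However, the concrete cross-subtree rule you commit to is wrong, and the mistake is not a detail that the ``careful but finite case analysis'' will absorb.

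Take $n\ge 2$ and suppose $v_2$ is \emph{not} in the top level of $T_2$, so (by Definition~\ref{constructible}) $v_2$ has an outneighbour $w'$ inside $T_2$. Apply your $f_1$ coordinatewise to the arcs $v_1v_0$, $v_2v_0$, $v_2w'$. On the tails you get $f_1(v_1,v_2,v_2)=v_1$ by the forced identity $f_1(x,y,y)=x$. On the heads you get $f_1(v_0,v_0,w')=w'$ by your chosen rule $f_1(x,x,y)=y$ for $\lambda(x)\neq\lambda(y)$. But $v_1w'$ is not an arc of $T$ (the vertices lie in different $T_i$ and neither is $v_0$), so $f_1$ is not a polymorphism. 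The dual choice $f_3(x,y,y)=x$ fails symmetrically on the triple $(v_1v_0,\,v_2v_0,\,v_2w')$ for $f_3$: you are forced to $f_3(v_0,v_0,w')=w'$ by the HM identity while $f_3(v_1,v_2,v_2)=v_1$ by your rule, again giving a non-arc.

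The missing idea is that a flat label-based rule treating $T_0,\dots,T_n$ symmetrically cannot work; the central subtree $T_0$ must be privileged \emph{already in the two-label case}, not only when all three labels differ. The paper handles this by a two-stage decomposition: first build a chain $g_1,g_2,g_3$ on the disjoint union $T'=T_1\sqcup\cdots\sqcup T_n$ (here the simple projection rules $g_1=x$, $g_3=z$ work because there are no arcs between components), and only then merge $T'$ with $T_0$. In the second stage the rule for $F_1$ on mixed triples is asymmetric: it returns $x$ whenever $x\in T_0$ \emph{or} $y,z\in T_0$, and otherwise returns the $T_0$-element among $y,z$. On the heads of the counterexample this gives $F_1(v_0,v_0,w')=v_0$, and the required arc $v_1v_0$ is present. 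The verification that $F_1$ is a polymorphism then crucially uses that $g_1$ on $T'$ already returned $x$ on cross-component triples---so the two stages are genuinely coupled. Your ``priority rule preferring $T_0$'' is the right instinct, but it must govern the two-label cases (including $(x,x,y)$ with $x\in T_0$) as well, and it must be coordinated with the rule inside $T'$; the paper also first reduces to same-level triples, which removes a layer of cases you would otherwise have to treat separately.
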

\begin{proof}
	We show the existence of the claimed conservative polymorphisms using induction on the construction of $T$ given in Definition~\ref{constructible}. The defined operations will be trivially conservative, and we won't mentioned this explicitly. We will work with the up-join operation, and note that the proof works similarly for the down-join operation.
	
	We begin with breaking down the construction in Definition~\ref{constructible} into two steps. Assume that $T$ is the up-join of $T_0,T_1,\dots,T_n$ with central vertex $v_0$ and join vertices $v_1,\dots,v_n$. Taking the up-join can be thought of as taking the disjoint union $T' = T_1 \sqcup \dots \sqcup T_n$ of $T_1,\dots,T_n$, and then taking the disjoint union $T' \sqcup T_0$ and adding the arcs $v_iv_0$, $1 \leq i \leq n$.
		
	First we show that if each of $T_1,\dots,T_n$ admits a HM-chain of polymorphisms of length $3$, then so does $T_1 \sqcup \dots \sqcup T_n$.	Let $f_1^i, f_2^i, f_3^i$ be the desired polymorphisms for $T_i$, $1 \leq i \leq n$.	If $(x,y,z) \in V(T_j)^3$ for some $1 \leq j \leq n$, then for each $1 \leq s \leq 3$, let $g_s(x,y,z) = f_s^j(x,y,z)$. If $(x,y,z) \in V(T_k) \times V(T_l) \times V(T_m)$ such that $|\{k,l,m\}| > 1$, then let $g_1(x,y,z) = x$ and $g_3(x,y,z) = z$, and furthermore, $g_2(x,y,z) = z$ if $k = l$ and $g_2(x,y,z) = x$ otherwise. It is easy to check that $g_1,g_2,g_3$ form a HM-chain, and that each $g_s$ is a conservative polymorphism of $T'$.
		
	Suppose now that $f_1, f_2, f_3$ and $g_1,g_2,g_3$ are the desired polymorphisms for $T_0$ and $T' = T_1 \sqcup \dots \sqcup T_n$, respectively, and that $T$ is obtained by adding arcs $v_1v_0,\dots,v_nv_0$ to $T_0 \sqcup T'$. Let $m = height(T)$. We argue first that it is sufficient to define polymorphisms for vertices $x,y,z$ that are all in the same vertex level $L_j^T$, where $0 \leq j \leq m$. For suppose that $F_1',F_2',F_3' : L_0^3 \cup \dots \cup L_m^3 \rightarrow T$ are operations that are edge-preserving, conservative, and satisfy all required identities. Then we can extend these to full operations with the same properties:
	\begin{align*}
	F_1(x,y,z)  &= \begin{cases}
	F_1'(x,y,z) &\text{ if $x,y,z \in L_j^T$ for some $0 \leq j \leq m$,}\\
	x &\text{ otherwise.}
	\end{cases}\\
	F_3(x,y,z) &= \begin{cases}
	F_3'(x,y,z) &\text{ if $x,y,z \in L_j^T$ for some $0 \leq j \leq m$,}\\
	z &\text{ otherwise.}
	\end{cases}\\
	F_2(x,y,z) &= \begin{cases}
	F_2'(x,y,z) &\text{ if $x,y,z \in L_j^T$ for some $0 \leq j \leq m$, else}\\
	z &\text{ if $x,y \in L_j^T$ for some $0 \leq j \leq m$,}\\
	x &\text{ otherwise.}
	\end{cases}\\
	\end{align*}
	The claimed properties are easy to check. To see that the defined operations are arc-preserving, note that if $xx'$,$yy'$ and $zz'$ are arcs and $x,y,z$ are in levels $L_{i(x)}, L_{i(y)},L_{i(z)}$, respectively, then $x',y',z'$ are in levels $L_{i(x)+1}, L_{i(y)+1},L_{i(z)+1}$, respectively, so the same case in the above definition applies for both $(x,y,z)$ and $(x',y',z')$.
		
	Assume therefore that $T$ is obtained by adding arcs $v_iv_0$, $1 \leq i \leq n$ to $T_0 \sqcup T'$. Using the induction hypothesis and the above argument for disjoint union, we can assume that $T_0$ admits the desired operations $f_1,f_2,f_3$, and $T'$ admits the desired operations $g_1,g_2,g_3$. First we define $F_1$ and $F_3$ for $T$ as follows.
	
	\[
	F_1(x,y,z) = \begin{cases}
	f_1(x,y,z) &\text{ if $x,y,z \in V(T_0)$, else}\\
	g_1(x,y,z) &\text{ if $x,y,z \in V(T')$, else}\\
	x &\text{ if $x \in V(T_0)$ or $y,z \in V(T_0)$, else}\\
	u &\text{ where $u$ is leftmost of $\{y,z\} \cap V(T_0)$.}
	\end{cases}
	\]
	
	\[
	F_3(x,y,z) = \begin{cases}
	f_3(x,y,z) &\text{ if $x,y,z \in V(T_0)$, else}\\
	g_3(x,y,z) &\text{ if $x,y,z \in V(T')$, else}\\
	z &\text{ if $z \in V(T_0)$ or $x,y \in V(T_0)$, else}\\
	u &\text{ where $u$ is leftmost of $\{x,y\} \cap V(T_0)$.}
	\end{cases}
	\]
	
	Notice that $F_1$ is well defined, since if the first three cases do not apply, then in the last case, (precisely) one of $y$ and $z$ is in $V(T_0)$. We can argue similarly for $F_3$.	
	
	Observe that $F_1(x,y,y) = x$. In the first two cases, this follows from the induction hypothesis.  Otherwise, the third line of the definition sets the value of $F_1(x,y,y)$ to $x$. (The last case cannot occur.) Similarly, we can check that $F_3(x,x,y) = y$.
	
	We verify that $F_1$ is arc-preserving, and note that $F_3$ can be analyzed similarly. Assume that $xx'$, $yy'$, and $zz'$ are arcs of $T$.
	\begin{itemize}
		\item If $\ell(x) < \ell(v_0) - 1$ or $\ell(x) > \ell(v_0) - 1$, then we observe that for each $w \in \{x,y,z\}$, it holds that $w \in V(T_0) \Leftrightarrow w' \in V(T_0)$. Also recall that $V(T_0)$ and $V(T')$ partition $V(T)$. Therefore the same case of the above definition applies for both $F_1(x,y,z)$ and $F_1(x',y',z')$. It follows from this and the induction hypothesis that $F_1(x,y,z)F_1(x',y',z')$ is an arc of $T$.
		\item Assume therefore that $\ell(x) = \ell(v_0) - 1$. If $x',y',z' \in V(T')$, then again, $x,x',y,y',z,z' \in V(T')$ and we are done. We note that if any of $x',y',z'$ is a vertex in $V(T_0)$, that vertex must be $v_0$, since $v_0$ is the only vertex of $T_0$ in $L_{\ell(v_0)}^T$ that has an inneighbour in $T$. We also note that if $F_1(x',y',z') = v_0$, then since $F_1(x,y,z) \in \{v_1,\dots,v_n\}$ and $v_iv_0$ are arcs for all $1 \leq i \leq n$, we are done.
		
		If $x',y',z' \in V(T_0)$, then $x'=y'=z'=v_0$, so $F_1(x',y',z') = v_0$. If $x' \in V(T_0)$, then line $3$ of the definition gives that $F_1(x',y',z') = v_0$.
		
		Assume therefore that $x' \in V(T')$ and $y', z' \in V(T_0)$. Then we have by definition that $F_1(x',y',z') = x'$. If we show that $F_1(x,y,z) = x$, then since $xx'$ is an arc, we are done. Recall that $x,y,z \in \{v_1,\dots,v_n\}$, so $x,y,z \in V(T')$, and thus $F_1(x,y,z) = g_1(x,y,z)$. By the definition of $g_1$ above (recall that $g_1$ is over the disjoint union $T' = T_1 \sqcup \dots \sqcup T_n$), if not all of $x,y,z$ are in the same component of $T'$, then $g_1(x,y,z) = x$. Therefore $F_1(x,y,z) = x$. So we can assume that $x,y,z$ are all in the same component $T_i$ of $T'$, for some $1 \leq i \leq n$. Then since the only arc from $T_i$ to $v_0$ is $v_iv_0$ and $y'=z'=v_0$, we have that $y=z=v_i$. By the induction hypothesis $g_1(x,y,y) = x$, so $F_1(x,y,z) = x$.
		
		If $y' \in V(T_0)$ and  $z' \in V(T')$, or if $y' \in V(T')$ and $z' \in V(T_0)$, then  line $4$ of the definition sets $F_1(x',y',z') = v_0$.
	\end{itemize}
	
	We define
	\[
	F_2(x,y,z) = \begin{cases}
	F_1(x,x,z) &\text{ if $x \in V(T')$ and $y,z \in V(T_0)$, or if $x \in V(T_0)$ and $y,z \in V(T')$, else}\\
	F_3(x,z,z) &\text{ if $x,y \in V(T_0)$ and $z \in V(T')$, or if $x,y \in V(T')$ and $z \in V(T_0)$, else}\\
	f_2(x,y,z) &\text{ if $x,y,z \in V(T_0)$, else}\\
	g_2(x,y,z) &\text{ if $x,y,z \in V(T')$, else}\\
	w, &\text{ where $w$ is leftmost of $\{x,y,z\} \cap V(T_0)$.}
	\end{cases}
	\]
	Note that $F_2$ is well defined. In particular, in line $5$, at least one of $x,y,z$ must be in $V(T_0)$, since otherwise line $4$ applies.
	
	To complete the proof that $F_1$, $F_2$, and $F_3$ form a HM-chain, we show that $F_1(x,x,z) = F_2(x,z,z)$ and $F_2(x,x,z) = F_3(x,z,z)$. We focus on $F_1(x,x,z) = F_2(x,z,z)$, and note that it can be shown similarly that $F_2(x,x,z) = F_3(x,z,z)$. If line $1$ of the definition of $F_2$ applies, then we are done by definition. Note that line $2$ cannot not apply for $F_2(x,z,z)$. If line $3$ applies, then $x = z = v_0$, so $F_2(x,z,z) = v_0 = F_1(x,x,z)$. If line $4$ applies, then $F_2(x,z,z) = g_2(x,z,z) = g_1(x,x,z) = F_1(x,x,z)$, where the second equality is by the induction hypothesis, and the last equality is by the definition of $F_1$. Line $5$ cannot apply for $F_2(x,z,z)$.
	
	It remains to show that $F_2$ is arc preserving.
	\begin{itemize}
		\item Suppose that $\ell(x) < \ell(v_0) - 1$ or $\ell(x) > \ell(v_0) - 1$. As before, for each $w \in \{x,y,z\}$, $w \in V(T_0) \Leftrightarrow w' \in V(T_0)$. Therefore the same case of the above definition applies for both $F_2(x,y,z)$ and $F_2(x',y',z')$, and thus $F_2(x,y,z)F_2(x',y',z')$ is an arc of $T$.
		\item Suppose that $\ell(x) = \ell(v_0) - 1$. As before, we use the fact if $F_2(x',y',z') = v_0$, we are done.
		\begin{itemize}
			\item If \emph{line $1$} of the definition of $F_2$ applies for $F_2(x',y',z')$, then consider first when $x' \in V(T')$ and $y',z' \in V(T_0)$. Then $F_2(x',y',z') = F_1(x',x',v_0) = v_0$ (since $z' = v_0)$) by line $4$ of the definition of $F_1$.
			
			Consider therefore the case when $x' \in V(T_0)$ and $y',z' \in V(T')$. Then $F_2(x',y',z') = F_1(v_0,v_0,z') = v_0$ by line $3$ of the definition of $F_1$.
			\item If \emph{line $2$} of the definition of $F_2$ applies, then we can do a similar analysis as above.			
			\item If \emph{line $3$} applies for $F_2(x',y',z')$, then $F_2(x',y',z') = f_2(v_0,v_0,v_0)= v_0$, since $f_2$ is conservative.
			\item If \emph{line $4$} applies for $F_2(x',y',z')$. Then line $4$ applies for $F_2(x,y,z)$, so we are done by the induction hypothesis for $g_2$.
			\item If \emph{line $5$} applies, then $F_2(x',y',z') = v_0$.
		\end{itemize}
	\end{itemize}
\end{proof}

\begin{proof}[Proof of Theorem~\ref{HM_3}]
	One direction is shown in Lemma \ref{HM_chain_defined}. The other direction follows from the chain of implications outlined in the introduction, or we can use Lemma~\ref{no_HM} in the Appendix B.
\end{proof}

\subsection{A digraph with an HM-chain of length n but not n-1}

\begin{definition}\label{ladder}
	A \emph{ladder} of height $n$ is a digraph having the following arcs: $a_0a_1, a_1a_2,\dots,a_{n-1}a_n$, $b_0b_1,b_1b_2,\dots,b_{n-1}b_n$, and $b_0a_1,b_1a_2,\dots,b_{n-1}a_n$.
\end{definition}

\begin{theorem}\label{ladder_thm}
	Let $H$ be a ladder of height $n \geq 1$. Then $H$ admits an HM-chain of conservative polymorphisms of length $n+1$, but not of length $n$.
\end{theorem}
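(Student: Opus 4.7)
The plan is to handle the two directions separately, both hinging on one structural feature of the ladder: the only arc from level $i$ to level $i+1$ that crosses columns is $b_ia_{i+1}$, while there is no arc $a_i\to b_{i+1}$. Identifying $a_i\leftrightarrow 0$ and $b_i\leftrightarrow 1$, this says that each coordinate of any ladder arc is non-increasing in the Boolean order.

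\textbf{Lower bound (no HM-chain of length $\le n$).} Suppose $f_1,\dots,f_k$ is any such chain. For each level $i$ and each index $j$, record Boolean values $u_j^i$ and $w_j^i$ that encode whether $f_j(a_i,a_i,b_i)=b_i$ and $f_j(a_i,b_i,b_i)=b_i$, respectively. The HM boundary identities yield $w_1^i=0$ and $u_k^i=1$ for all $i$, and the chain identity gives $u_j^i=w_{j+1}^i$. Arc-preservation applied to the pair of triples $(a_i,b_i,b_i)$ and $(a_{i+1},a_{i+1},b_{i+1})$ (all three coordinate transitions are valid ladder arcs, one of them being the crossing arc $b_ia_{i+1}$) yields the central inequality $u_j^{i+1}\le w_j^i$; combined with the chain this gives $u_j^{i+1}\le u_{j-1}^i$ for $j\ge 2$ and $u_1^{i+1}\le w_1^i=0$. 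Iterating downward in $j$ from $u_k^i=1$ gives $u_{k-t}^i=1$ for every $0\le i\le n-t$, and setting $t=k-1$ forces $u_1^i=1$ for $0\le i\le n-k+1$. When $k\le n$ this gives $u_1^1=1$, contradicting $u_1^1=0$.

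\textbf{Upper bound (HM-chain of length $n+1$).} On triples with all three inputs at a common level $i$, identify $\{a_i,b_i\}$ with $\{0,1\}$ and define each $f_j$ by threshold rules on the six non-constant Boolean patterns: $f_j(0,0,1)=1$ iff $i<j$, $f_j(0,1,1)=1$ iff $i<j-1$, $f_j(1,1,0)=1$ iff $i<n+1-j$, and $f_j(1,0,0)=1$ iff $i<n+2-j$, together with $f_j(0,1,0)=0$ and $f_j(1,0,1)=1$. The HM identities match these thresholds by construction, and arc-preservation reduces to the inequality ``raising one input coordinate from $0$ to $1$ decreases the threshold by at most $1$,'' which is verified pattern by pattern on the Boolean cube. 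For triples whose three inputs do not all lie at the same level, define $f_j(u,v,w)$ to be the coordinate whose level is isolated (the one that differs from the levels of the other two), using the first coordinate as a tiebreaker when all three levels are distinct. Arc-preservation holds because the relative level-pattern of a triple is preserved along any arc, and the HM identities continue to hold because triples of the form $(x,y,y)$ or $(x,x,y)$ with $x\neq y$ at different levels always have exactly one coordinate at the isolated level, so the rule selects precisely the entry that $f_1(x,y,y)=x$ or $f_{n+1}(x,x,y)=y$ demand.

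\textbf{Main obstacle.} The delicate part is the mixed-level construction: the crossing arc $b_i\to a_{i+1}$ lets the two columns mingle, so a naive projection rule can send the output of $f_{n+1}$ across a non-existent edge of the ladder. The isolated-coordinate rule is tailored so that the coordinate selected by $f_{n+1}$ on $(x,x,y)$ is always the third one---precisely what the HM identity $f_{n+1}(x,x,y)=y$ requires---while still being preserved along arcs because every arc shifts all three coordinate levels by $+1$.
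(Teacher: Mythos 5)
Your lower bound is correct and is essentially the paper's argument in Boolean disguise: the key step, that $f_j(a_i,b_i,b_i)=a_i$ forces $f_j(a_{i+1},a_{i+1},b_{i+1})=a_{i+1}$ because $a_i$ has $a_{i+1}$ as its unique outneighbour, is exactly the induction in Lemma~\ref{no_HM_ladder}, and your double induction over $(j,i)$ correctly propagates it to the contradiction $u_1^1=1$ versus $u_1^1\le w_1^0=0$. The same-level part of your upper bound also checks out: your thresholds agree with the paper's table on the four patterns that occur in the HM identities, and your simpler constant values on $(0,1,0)$ and $(1,0,1)$ are harmless since those patterns appear in no identity and still satisfy the monotonicity needed for arc-preservation.

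The mixed-level extension, however, is broken. Your ``isolated coordinate'' rule is the same function for every index $j$, and on a mixed-level pair $x\ne y$ it returns $y$ on $(x,x,y)$ (third coordinate isolated) but $x$ on $(x,y,y)$ (first coordinate isolated). Hence for every intermediate index $1\le j\le n$ the chain identity $f_j(x,x,y)=f_{j+1}(x,y,y)$ reads $y=x$, which is false; e.g.\ already for $n=1$, $f_1(a_0,a_0,a_1)=a_1$ while $f_2(a_0,a_1,a_1)=a_0$. You verified only the two boundary identities $f_1(x,y,y)=x$ and $f_{n+1}(x,x,y)=y$ and overlooked the $n$ linking identities, which cannot be satisfied by any single level-pattern-based rule that is symmetric in this way. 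The fix is to make the extension asymmetric in $j$, as the paper does: let $f_1$ return $x$ when the last two arguments share a level and $z$ otherwise, and let every $f_j$ with $j\ge 2$ return $z$ on all mixed-level triples; this keeps arc-preservation (arcs shift all three levels uniformly, so the case split is stable) and makes every identity hold with value $y$ on mixed-level instances. With that replacement your proof goes through.
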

\begin{proof}
The proof follows from Lemmas~\ref{no_HM_ladder} and \ref{ladder_pol} below.
\end{proof}

\begin{lemma}\label{no_HM_ladder}
	Let $H$ be a ladder of height $n \geq 1$. Then $H$ does not admit an HM-chain of conservative polymorphisms $f_1,f_2,\dots,f_n$ of length $n$.
\end{lemma}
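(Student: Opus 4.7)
The plan is to assume, for contradiction, that a conservative HM-chain $f_1,\dots,f_n$ of length $n$ exists on the ladder $H$ of height $n$, and to exploit the single structural asymmetry of $H$: the only outneighbor of $a_{i-1}$ is $a_i$, so $a_{i-1}b_i$ is \emph{not} an arc, whereas $b_{i-1}a_i$ and $b_{i-1}b_i$ both are. By conservativity, for each $j\in\{1,\dots,n\}$ and $i\in\{0,\dots,n\}$, the values $f_j(a_i,b_i,b_i)$ and $f_j(a_i,a_i,b_i)$ lie in $\{a_i,b_i\}$; I denote them $\alpha_{j,i}$ and $\beta_{j,i}$, respectively. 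The HM identities translate to $\alpha_{1,i}=a_i$, $\beta_{n,i}=b_i$, and $\beta_{j,i}=\alpha_{j+1,i}$ for $1\le j\le n-1$.

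Next I would derive the key implication $(\star)$: for $j\in\{1,\dots,n\}$ and $i\ge 1$, if $\beta_{j,i}=b_i$ then $\alpha_{j,i-1}=b_{i-1}$. To see this, apply $f_j$ to the componentwise triple $(a_{i-1},b_{i-1},b_{i-1})\to(a_i,a_i,b_i)$, which consists of the three arcs $a_{i-1}a_i$, $b_{i-1}a_i$, $b_{i-1}b_i$ of $H$ (all present by Definition~\ref{ladder}). Arc preservation then yields that $\alpha_{j,i-1}\beta_{j,i}$ is an arc; since the only inneighbor of $b_i$ in $H$ is $b_{i-1}$, the assumption $\beta_{j,i}=b_i$ forces $\alpha_{j,i-1}=b_{i-1}$.

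The main step is a downward induction on $k\in\{0,1,\dots,n-1\}$ establishing
\[
\alpha_{n-k,i}=b_i\qquad\text{for all }i\in\{0,1,\dots,n-1-k\}.
\]
The base case $k=0$ follows by applying $(\star)$ to the identity $\beta_{n,i}=b_i$ for each $i=1,\dots,n$. For the inductive step, the chain identity gives $\beta_{n-k-1,i}=\alpha_{n-k,i}=b_i$ throughout the inductive range, and then reapplying $(\star)$ produces $\alpha_{n-k-1,i}=b_i$ for $i=0,\dots,n-2-k$, losing one level at the top each time.

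Setting $k=n-1$ yields $\alpha_{1,0}=b_0$, which directly contradicts $\alpha_{1,0}=a_0$ coming from the identity $f_1(x,y,y)=x$. The only real obstacle here is purely bookkeeping: verifying that the three arcs used to derive $(\star)$ are indeed arcs of $H$ (immediate from Definition~\ref{ladder}) and tracking the index ranges through the induction. The proof works uniformly for all $n\ge 1$; the case $n=1$ is just the base case applied with $i=1$, giving $\alpha_{1,0}=b_0$ directly from $\beta_{1,1}=b_1$.
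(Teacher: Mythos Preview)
Your proof is correct and is essentially the contrapositive of the paper's argument: both apply $f_j$ to the arc triple $(a_{i-1},b_{i-1},b_{i-1})\to(a_i,a_i,b_i)$ and use that $a_{i-1}b_i\notin A(H)$, but the paper propagates $f_i(a_i,a_i,b_i)=a_i$ forward along the single diagonal $i=1,\dots,n$ (reaching a contradiction at $f_n$), whereas you propagate $f_j(a_i,b_i,b_i)=b_i$ backward in $j$ over a shrinking range of $i$ (reaching a contradiction at $f_1$). The paper's version is a little leaner since it tracks one index instead of two, but the content is the same.
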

\begin{proof}
	For ladder $H$ we use the same notation as in Definition~\ref{ladder}. Suppose for contradiction that $f_1,\dots,f_n$ is an HM-chain of conservative polymorphisms of $H$. We show by induction that $f_i(a_i,a_i,b_i) = a_i$ for $1 \leq i \leq n$, and this will contradict the definition of an HM-chain of length $n$ requiring that $f_n(a_n,a_n,b_n) = b_n$. Since $a_0a_1,b_0a_1,b_0b_1$ are arcs, $f_1(a_0,b_0,b_0)f_1(a_1,a_1,b_1)$ is an arc of $H$. Since $a_0 = f_1(a_0,b_0,b_0)$ (by definition), it follows that $f_1(a_1,a_1,b_1) = a_1$, so the base case holds. Assume the induction hypothesis holds for index $i$. Then $a_i = f_i(a_i,a_i,b_i) = f_{i+1}(a_i,b_i,b_i)$, and since $f_{i+1}(a_i,b_i,b_i)f_{i+1}(a_{i+1},a_{i+1},b_{i+1})$ is an arc of $H$, this arc can only be $a_ia_{i+1}$, so $f_{i+1}(a_{i+1},a_{i+1},b_{i+1}) = a_{i+1}$, and we are done.
\end{proof}

\begin{lemma}\label{ladder_pol}
	Let $H$ be a ladder of height $n \geq 1$. Then $H$ admits an HM-chain of conservative polymorphisms $f_1,f_2,\dots,f_{n+1}$ of length $n+1$.
\end{lemma}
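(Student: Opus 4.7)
The plan is to construct the chain explicitly by first building a family of binary operations that interpolate between the two projections. I will define $n+2$ conservative binary operations $\alpha_0, \alpha_1, \ldots, \alpha_{n+1}$ on $V(H)$ with $\alpha_0 = \pi_1$ (first projection) and $\alpha_{n+1} = \pi_2$, and then set $f_i(x,y,z) = \alpha_i(x,z)$ if $x = y$, and $f_i(x,y,z) = \alpha_{i-1}(x,z)$ if $x \neq y$, for $1 \leq i \leq n+1$. Since $\alpha_j(x,x) = x$ holds by conservativity for every $j$, the two branches agree when $x=y$, so $f_i(x,y,y) = \alpha_{i-1}(x,y)$ and $f_i(x,x,y) = \alpha_i(x,y)$ hold in all cases. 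The HM-chain identities $f_1(x,y,y) = x$, $f_{n+1}(x,x,y) = y$, and $f_i(x,x,y) = f_{i+1}(x,y,y)$ then follow immediately from $\alpha_0 = \pi_1$, $\alpha_{n+1} = \pi_2$, and the matched indexing.

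The $\alpha_i$ for $1 \leq i \leq n$ will be defined as follows: $\alpha_i(x,x) = x$; $\alpha_i(x,y) = x$ whenever $\ell(x) \neq \ell(y)$; and on same-level pairs, $\alpha_i(a_\ell, b_\ell) = b_\ell$ iff $\ell \leq i-1$ (else $a_\ell$), while $\alpha_i(b_\ell, a_\ell) = b_\ell$ iff $\ell \leq n-i$ (else $a_\ell$). That each $\alpha_i$ is a polymorphism is a short check: since arcs of $H$ shift every vertex's level by exactly one, whether two coordinates are at the same level or different levels is preserved across one arc step, so the piecewise definition never straddles a boundary. On different-level pairs the output is always the first coordinate, so arc-preservation is immediate. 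On same-level pairs, the thresholds are calibrated so that the single transition from the $b$-row output to the $a$-row output occurs exactly at the boundary where the required arc is $b_\ell a_{\ell+1}$, which is present in the ladder; once the output lands in the $a$-row it stays there, matching the forced arc $a_\ell a_{\ell+1}$.

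The main task is verifying that each $f_i$ itself is a polymorphism. Given a componentwise arc $(x_1, y_1, z_1) \to (x_2, y_2, z_2)$, I split into cases by whether $x_j = y_j$ at each step $j = 1, 2$. When the two equality patterns agree at both steps, arc preservation reduces directly to the polymorphism property of $\alpha_i$ or $\alpha_{i-1}$. The delicate cases are when the pattern flips: either $x_1 = y_1 = b_\ell$ with out-neighbors $\{x_2, y_2\} = \{a_{\ell+1}, b_{\ell+1}\}$ splitting, or $\{x_1, y_1\} = \{a_\ell, b_\ell\}$ collapsing into $x_2 = y_2 = a_{\ell+1}$. In these two cases one must verify an arc from $\alpha_i(x_1, z_1)$ to $\alpha_{i-1}(x_2, z_2)$ or the reverse. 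The essential point is that the threshold of $\alpha_i$ on same-level pairs exceeds that of $\alpha_{i-1}$ by exactly one, matching the one-level shift of a single arc, so the required arc always exists. This case analysis is the main obstacle --- elementary but requiring attention --- and the specific choice of thresholds $i-1$ and $n-i$ is precisely what allows the HM-chain of length exactly $n+1$ to fit, mirroring the obstruction in Lemma~\ref{no_HM_ladder}.
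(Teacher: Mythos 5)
Your same-level construction is sound: restricted to triples $x,y,z$ lying in a common level, the operations $f_i(x,y,z)=\alpha_i(x,z)$ (if $x=y$) and $\alpha_{i-1}(x,z)$ (if $x\neq y$) with your thresholds do satisfy the HM identities and preserve arcs --- this is essentially the paper's table of twelve cases repackaged through binary operations, and the ``thresholds shift by one per level'' observation is exactly the right reason it works. The gap is in your extension to triples that are \emph{not} all on one level. You force $\alpha_{n+1}=\pi_2$ (needed for $f_{n+1}(x,x,y)=y$ as an identity over all of $V(H)$, including pairs at different levels), but every other $\alpha_i$ returns the \emph{first} coordinate on different-level pairs. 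Since $f_{n+1}$ mixes $\alpha_{n+1}$ and $\alpha_n$ according to whether $x=y$, and that equality pattern can flip along an arc, $f_{n+1}$ fails to be a polymorphism. Concretely, take $n=2$ and the arcs $x_1x_2=b_0a_1$, $y_1y_2=b_0b_1$, $z_1z_2=a_1a_2$. Then $f_3(b_0,b_0,a_1)=\alpha_3(b_0,a_1)=a_1$ (second coordinate), while $f_3(a_1,b_1,a_2)=\alpha_2(a_1,a_2)=a_1$ (first coordinate, different levels), and $a_1a_1$ is not an arc. A symmetric failure occurs in the collapsing case, e.g.\ $f_3(a_0,b_0,a_1)=a_0$ followed by $f_3(a_1,a_1,a_2)=a_2$, and $a_0a_2$ is not an arc.

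The fix is to handle off-level triples at the level of the ternary operations rather than uniformly inside the $\alpha_i$'s, which is what the paper does: it first reduces to same-level triples and then extends by an \emph{asymmetric} rule ($F_1$ returns $x$ when $y,z$ share a level and $z$ otherwise; $F_i$ for $i\geq 2$ returns $z$ whenever the triple is not level-homogeneous). One checks separately that this extension preserves arcs (level-homogeneity of a triple is invariant under taking one arc step in each coordinate, so the extension never mixes branches) and satisfies the identities. If you adopt such an extension for the off-level triples and keep your $\alpha$-based definition only on level-homogeneous ones, your argument goes through.
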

\begin{proof}
	For ladder $H$ we use the same notation as in Definition~\ref{ladder}. Notice that $H$ is leveled. We argue now that it is sufficient to define polymorphisms for vertices $x,y,z$ that are all in the same vertex level $L_j$ of $T$, where $0 \leq j \leq n$. For suppose that $F_1',F_2',F_3' : L_0^3 \cup \dots \cup L_m^3 \rightarrow T$ are operations that are edge-preserving, conservative, and satisfy all required identities. Then we can extend these to full operations with the same properties:
	\[F_1(x,y,z) = \begin{cases}
	F_1'(x,y,z) &\text{ if $x,y,z \in L_j$ for some $0 \leq j \leq m$, else}\\
	x &\text{ if $y,z \in L_j$ for some $0 \leq j \leq m$,}\\
	z &\text{otherwise}
	\end{cases}\]
	and for $2 \leq i \leq n+1$,
	\[F_i(x,y,z) = \begin{cases}
	F_i'(x,y,z) &\text{ if $x,y,z \in L_j$ for some $0 \leq j \leq m$,}\\
	z &\text{ otherwise.}
	\end{cases}\]
	The claimed properties of this extension are easy to check.
	
	So let $x,y,z \in L_j$ for some $0 \leq j \leq n$. We define the operations $f_i(x,y,z)$ for each $1 \leq i \leq n+1$ as shown below. (This definition is inspired by the definition of an HM-chain in Lemma~5.2 in \cite{soda_lhom}.)  Recall that $f_i$ is conservative, so $f_i(x,x,x)$ is always required to be $x$. Also note that we won't discuss the cases in the proofs below which involve $f_i(x,x,x)$, since these cases are trivial to analyze.
	\begin{center}
		\begin{minipage}{0.75\textwidth}
			\begin{center}
				\begin{eqnarray}
					f_i(a_j,a_j,b_j)  &= a_j  & \text{if $j+1 > i$} \\
					&= b_j & \text{if $j+1 \leq i$}\\
					f_i(a_j,b_j,b_j)  &= b_j  & \text{if $j+1 < i$} \\
					&= a_j &\text{if $j+1 \geq i$}\\
					f_i(b_j,a_j,b_j)  &= a_j &\text{if $j+1 > i > n-j+1$}\\
					&= b_j & \text{otherwise}\\
					f_i(b_j,b_j,a_j)  &= b_j & \text{if $n - j+1 > i$}\\
					&= a_j & \text{if $n - j+1 \leq i$}\\
					f_i(b_j,a_j,a_j)  &= a_j & \text{if $n - j+1 < i$}\\
					&= b_j & \text{if $n - j+1 \geq i$}\\
					f_i(a_j,b_j,a_j)  &= b_j  & \text{if $n-j+1 > i > j+1$}\\	
					&= a_j  & \text{otherwise.}
				\end{eqnarray}
			\end{center}
		\end{minipage}
	\end{center}
	
	We claim that these $f_j$ form an HM-chain. By lines $4$ and $10$ of the definition of $f_i$, $f_1(x,y,y) = x$. By lines $2$ and $8$, $f_{n+1}(x,x,y) = y$. If $f_i(a_j,a_j,b_j) = a_j$, then $j+1 > i$ by line $1$, so $j+1 \geq i +1$, thus $f_{i+1}(a_j,b_j,b_j) = a_j$ by line $4$. Similarly, if $f_i(a_j,a_j,b_j) = b_j$, then $j+1 \leq i$ by line $2$ so, $j+1 < i+1$, and therefore $f_{i+1}(a_j,b_j,b_j) = b_j$ by line $3$. To sum up, $f_i(x,x,y) = f_{i+1}(x,y,y)$.
	
	It remains to show that each $f_i$ is arc preserving. In the first column of Table~\ref{pol_table}, we specify some $f_i(x,y,z)f_i(x',y',z')$ where $xx', yy', zz'$ are arcs, $x,y,z \subseteq \{a_j,b_j\}$ and $x',y',z' \subseteq \{a_{j+1},b_{j+1}\}$. We can assume that $f_i(x,y,z) = a_j$, since if $f_i(x,y,z) = b_j$, then $f_i(x,y,z)f_i(x',y',z')$ is an arc of $H$ because both $b_jb_{j+1}$ and $b_ja_{j+1}$ are arcs. It is straightforward to check that Table~\ref{pol_table} covers all cases.
	
	\begin{table}[h!tb]
		\caption{Cases in the proof of Lemma~\ref{ladder_pol}.}\label{pol_table}
		\begin{center}
	\noindent\begin{tabular}{ll}
		$f_i(a_j,b_j,b_j)f_i(a_{j+1},b_{j+1},b_{j+1})$ & \begin{minipage}{0.63\textwidth}Since $f_i(a_j,b_j,b_j) = a_j$ by assumption, $j+1 \geq i$ by line $4$. Therefore $j+2 \geq i$, so by line $4$, $f_i(a_{j+1},b_{j+1},b_{j+1}) = a_{j+1}$.\vspace{\gap}\end{minipage}\\
		\hline	
		$f_i(a_j,b_j,b_j)f_i(a_{j+1},a_{j+1},b_{j+1})$ & \begin{minipage}{0.63\textwidth}\vspace{\gap}We note that $j+1 \geq i$ as above, so $j + 2 > i$, and line $1$ gives that $f_i(a_{j+1},a_{j+1},b_{j+1}) = a_{j+1}$.\vspace{\gap}\end{minipage}\\
		\hline
		$f_i(a_j,b_j,b_j)f_i(a_{j+1},b_{j+1},a_{j+1})$  & \begin{minipage}{0.63\textwidth}\vspace{\gap}$j+1 \geq i \Rightarrow j + 2 \geq i$, so by line $11$ $f_i(a_{j+1},b_{j+1},a_{j+1}) = a_{j+1}$.\vspace{\gap}\end{minipage}\\
		\hline
		$f_i(b_j,a_j,b_j)f_i(a_{j+1},a_{j+1},b_{j+1})$  & \begin{minipage}{0.63\textwidth}\vspace{\gap}Since $f_i(b_j,a_j,b_j) = a_j$, $j+1 > i > n - j +1$ by line $5$, we have that $j+2 > i$, and therefore $f_i(a_{j+1},a_{j+1},b_{j+1}) = a_{j+1}$ by line $1$.\vspace{\gap}\end{minipage}\\
		\hline
		$f_i(b_j,a_j,b_j)f_i(b_{j+1},a_{j+1},a_{j+1})$  & \begin{minipage}{0.63\textwidth}\vspace{\gap}Since $f_i(b_j,a_j,b_j) = a_j$, $j+1 > i > n - j +1$ by line $5$, we have that $n - j < i$, and therefore $f_i(b_{j+1},a_{j+1},a_{j+1}) = a_{j+1}$ by line $9$.\vspace{\gap}\end{minipage}\\
		\hline
		$f_i(b_j,a_j,b_j)f_i(b_{j+1},a_{j+1},b_{j+1})$  & \begin{minipage}{0.63\textwidth}\vspace{\gap}Since $f_i(b_j,a_j,b_j) = a_j$, $j+1 > i > n - j +1$ by line $5$, we have that $j+2 > i > n - j$, and therefore $f_i(b_{j+1},a_{j+1},b_{j+1}) = a_{j+1}$ by line $5$.\vspace{\gap}\end{minipage}\\
		\hline
		$f_i(b_j,b_j,a_j)f_i(b_{j+1},b_{j+1},a_{j+1})$  & \begin{minipage}{0.63\textwidth}\vspace{\gap}Since $f_i(b_j,b_j,a_j) = a_j$, $n-j+1 \leq i$ by line $8$, we have that $n-j \leq i$, and therefore $f_i(b_{j+1},b_{j+1},a_{j+1}) = a_{j+1}$ by line $8$.\vspace{\gap}\end{minipage}\\
		\hline
		$f_i(b_j,b_j,a_j)f_i(b_{j+1},a_{j+1},a_{j+1})$  & \begin{minipage}{0.63\textwidth}\vspace{\gap}Since $f_i(b_j,b_j,a_j) = a_j$, $n-j+1 \leq i$ by line $8$, we have that $n - j < i$, and therefore $f_i(b_{j+1},a_{j+1},a_{j+1}) = a_{j+1}$ by line $9$.\vspace{\gap}\end{minipage}\\
		\hline
		$f_i(b_j,b_j,a_j)f_i(a_{j+1},b_{j+1},a_{j+1})$  & \begin{minipage}{0.63\textwidth}\vspace{\gap}Since $f_i(b_j,b_j,a_j) = a_j$, $n-j+1 \leq i$ by line $8$, we have that $n - j > i$ is violated in line $11$, so $f_i(a_{j+1},b_{j+1},a_{j+1}) = a_{j+1}$ by line $12$.\vspace{\gap}\end{minipage}\\
		\hline
		$f_i(b_j,a_j,a_j)f_i(b_{j+1},a_{j+1},a_{j+1})$  & \begin{minipage}{0.63\textwidth}\vspace{\gap}Since $f_i(b_j,a_j,a_j) = a_j$, $n-j+1 < i$ by line $9$, we have that $n - j < i$, so $f_i(b_{j+1},a_{j+1},a_{j+1}) = a_{j+1}$ by line $9$.\vspace{\gap}\end{minipage}\\
		\hline
		$f_i(a_j,b_j,a_j)f_i(a_{j+1},b_{j+1},a_{j+1})$  & \begin{minipage}{0.63\textwidth}\vspace{\gap}Since $f_i(a_j,b_j,a_j) = a_j$, $n-j+1 \leq i$ or $i \leq j + 1$ by line $12$, so we have that $n - j \leq i$ or $i \leq j + 2$. Therefore by line $12$, we have that $f_i(b_{j+1},a_{j+1},a_{j+1}) = a_{j+1}$.\vspace{\gap}\end{minipage}\\
		\hline
		$f_i(a_j,a_j,b_j)f_i(a_{j+1},a_{j+1},b_{j+1})$  & \begin{minipage}{0.63\textwidth}\vspace{\gap}Since $f_i(a_j,a_j,b_j) = a_j$, $j+1 > i$ by line $1$, so $j+2 > i$, and therefore line $1$ gives that $f_i(a_{j+1},a_{j+1},b_{j+1}) = a_{j+1}$.\vspace{\gap}\end{minipage}
	\end{tabular}
	\end{center}
	\end{table}
\end{proof} 
\section{A faster recognition algorithm}\label{faster}

An algorithm that recognizes digraphs containing no circular $\mathrm{N}$ is given in \cite{soda_lhom}. However, \cite{soda_lhom} only shows that the algorithm runs in polynomial time. In Appendix A, we show that a direct implementation of this algorithm when inputs are restricted to oriented trees is guaranteed to run in $O(|V(T)|^8)$ time. The running time of the algorithm in this paper is $O(|V(T)|^3)$.

\begin{theorem}\label{recognition_alg}
	Let $T$ be an oriented tree. Then there is a $O(|V(T)|^3)$ algorithm that decides whether $T$ contains a $\mathrm{Z_6}$ or a fuzzy $\mathrm{N}$ as induced subgraphs. (Equivalently, whether $T$ contains a circular $N$.)
\end{theorem}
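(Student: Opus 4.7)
The plan is to turn the inductive characterization of Theorem~\ref{construction_theorem} into a recursive recognition procedure \textsc{Check}$(T')$. On input $T'$ the procedure first handles the base case $|V(T')|=1$ by answering YES. Otherwise it computes the vertex and arc levels of $T'$, and searches for an arc level $\al_\alpha$ consisting of a single component $R$; if no such level exists it returns NO, which is correct by Lemma~\ref{technical_lemma}. If such $R$ is found, \textsc{Check} verifies via Lemma~\ref{h1} that $R$ is an in-spider or an out-spider (returning NO otherwise, since this would force a \z6), reads off the root $v_0$ of $R$, and splits $T'$ exactly as in the proof of Lemma~\ref{d1}: $T_0$ is the up-component (down-component) of $T'$ at $v_0$ when $R$ is an in-spider (out-spider), and $T_1,\dots,T_n$ are the connected components of $T' \setminus V(T_0)$. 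The procedure then checks that $v_0$ lies in the appropriate extreme level of $T_0$, that each $T_i$ contains exactly one join vertex $v_i$, and that $v_i$ satisfies the conditions of Definition~\ref{constructible}; if any check fails it returns NO, and otherwise it recursively invokes \textsc{Check} on each $T_i$ and returns YES if and only if every recursive call does.

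Correctness follows by induction on $|V(T')|$. If \textsc{Check} returns YES, the decomposition produced during the call, together with the inductive hypothesis applied to each $T_i$, witnesses $T'$ as the up- or down-join of constructible trees, so $T'$ is constructible by Definition~\ref{constructible}. Conversely, if $T'$ is constructible, then Lemmas~\ref{technical_lemma}, \ref{h1} and \ref{d1} guarantee that every check in the top-level call on $T'$ succeeds with \emph{any} arc level of $T'$ that has a single component, and by induction all recursive calls also succeed. Combining this with Theorem~\ref{construction_theorem}, \textsc{Check}$(T)$ returns YES if and only if $T$ contains no induced \z6\ or \fN, which by Theorem~\ref{circ_N_ind} is further equivalent to the absence of a circular $\mathrm{N}$.

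For the running time, each per-call task -- computing the vertex and arc levels, enumerating the components of each arc level, testing the spider structure of a height-$1$ tree, extracting the up- or down-component of $v_0$ via a tree traversal, and verifying the join conditions for each $T_i$ -- can be implemented in $O(|V(T')|)$ time by standard tree traversals, so the conservative bound $O(|V(T')|^2)$ per call is more than enough. The recursion tree has depth at most $|V(T)|$, and at each level the subtrees form a partition of a subset of $V(T)$, so $\sum_{T' \text{ at level } \ell} |V(T')|^2 \leq \bigl(\sum_{T'} |V(T')|\bigr)^2 \leq |V(T)|^2$; multiplying by the depth yields the claimed $O(|V(T)|^3)$. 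The main obstacle is not conceptual but bookkeeping: one has to ensure that when recursing into $T_i$ the required level information is either inherited from $T$ or recomputed in time linear in $|V(T_i)|$, so that the per-level amortization really sums to $O(|V(T)|^2)$ and the cube is paid only across the depth.
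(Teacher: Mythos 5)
Your proof is correct, but it takes a genuinely different route from the paper's. The paper's algorithm is a direct pattern search: for each of the $O(|V(T)|^2)$ ordered pairs $u,v$ it extracts the unique path $P(u,v)$ by BFS and scans it once with a height counter, detecting either a $\mathrm{Z_6}$ or the four-vertex witness $b_1,t_1,b_2,t_2$ of Lemma~\ref{fuzzy_N_present}; correctness needs only that lemma, and the $O(|V(T)|^3)$ bound is immediate. You instead turn Theorem~\ref{construction_theorem} into a top-down decomposition test, with correctness resting on Lemmas~\ref{technical_lemma}, \ref{h1}, \ref{d1} and \ref{d2} (and on the observation, which you state and which does hold, that the argument of Lemma~\ref{d1} goes through for \emph{any} single-component arc level, so the algorithm's choice of $\alpha$ is immaterial). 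Both routes are sound. Yours buys two things: the per-call work is really $O(|V(T')|)$, so the honest bound is $O(|V(T)|^2)$ rather than $O(|V(T)|^3)$, and the algorithm outputs the decomposition of Definition~\ref{constructible} itself, which is precisely the certificate the logspace algorithm of Section~4 consumes; the price is that it depends on essentially all of the structural machinery of Section~3, whereas the paper's version is self-contained. Two small points to tidy up in a full write-up: state the induction as ``\textsc{Check}$(T')$ accepts iff $T'$ contains no induced $\mathrm{Z_6}$ or fuzzy $\mathrm{N}$'' (rather than ``iff constructible''), so that Lemma~\ref{technical_lemma} and the heredity of the forbidden-subgraph condition to the induced subtrees $T_0,\dots,T_n$ can be invoked directly without detouring through Lemma~\ref{d2} at every level; and make explicit that a component $R$ of an arc level of a tree is an \emph{induced} subgraph of $T'$, so that Lemma~\ref{h1} (in contrapositive form) really does justify rejecting when $R$ is not a spider.
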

\begin{proof}
	For each $u,v \in V(T)$, we find the unique oriented path from $u$ to $v$ in $T$ using, for example, breadth first search. For each such path $P(u,v)$, we test whether $P(u,v)$ is a $\mathrm{Z_6}$, or if not, we run the test in the next paragraph. Clearly, checking whether $P(u,v)$ is a $\mathrm{Z_6}$ takes constant time. If $\mathrm{Z_6}$ is found, we output YES.
	
	Otherwise, we run the following test on $P(u,v)$. Suppose that $P(u,v) = a_0 \dots a_n$ (where $a_0 = u$ and $a_n = v$). We traverse $P(u,v)$ from $a_0$ to $a_n$. We initialize a counter $c$ to $0$ at $a_0$, and increase and decrease $c$ every time we move from $a_i$ to $a_{i+1}$: if $a_ia_{i+1}$ is  forward arc, we increase $c$ by $1$, and if it is a backward arc, we decrease $c$ by $1$. If $c$ becomes negative at any step, then we abandon the computation on $P(u,v)$ (and we move on to analyze $P(u'v')$ for the next pair $u'v' \in V(T)$). As we traverse $P(u,v)$, we keep track of the maximum value $M$ of the counter $c$. Assume that we obtained the maximum value of $M$. If $M \leq 1$, then we abandon the computation on $P(u,v)$, and we move on to the next path $P(u',v')$. Otherwise $M \geq 2$. We set $c$ to $0$ again, and we traverse $P(u,v)$ again starting at $a_0$ as before, increasing and decreasing the value of $c$ at each step. We set $b_1 = a_0$. If $a_{k_1}$ is the first vertex where $c$ attains the value $M$, we set $t_1 = a_{k_1}$. If $a_{k_2}$ is the first vertex after $a_{k_1}$ where $c$ attains value $0$, then we set $b_2 = a_{k_2}$. If $a_{k_3}$ is the first vertex after $a_{k_2}$ where $c$ attains value $M$, then we set $t_2 = a_{k_3}$. If no such $a_{k_1}$, $a_{k_2}$ and $a_{k_3}$ exist, then we move on to the next path $P(u',v')$. Otherwise, we output YES.
	
	If we finished testing all paths $P(u,v)$ and we never output YES, then we output NO. This completes the description of the algorithm.
	
	If we output YES, then either $T$ contains a $\mathrm{Z}_6$, or the vertices $b_1,t_1,b_2,t_2$ for some $P(u,v)$ (defined above) satisfy the conditions of Lemma~\ref{fuzzy_N_present}, so $T$ contains a fuzzy $\mathrm{N}$.
		
		Conversely, suppose that $T$ contains a $\mathrm{Z}_6$ or a fuzzy $\mathrm{N}$ having first vertex $a$ and last vertex $b$. Then since we cycled through all pairs of vertices of $u,v \in V(T)$, if $T$ contains an induced $\mathrm{Z}_6$ with first vertex $a$ and last vertex $b$, the algorithm detects this $\mathrm{Z}_6$ when $u = a$ and $v = b$. Similarly, if $T$ contains an induced fuzzy $\mathrm{N}$ with first vertex $a$ and last vertex $b$, then the algorithm finds $b_1,t_1,b_2,t_2$ satisfying the conditions of Lemma~\ref{fuzzy_N_present} when $u = a$ and $v = b$.
		
		We cycle through all pairs $u,v \in V(T)$. For each such pair, we run a BFS to find $P(u,v)$, which takes time $O(|V(T)|)$ (recall that $T$ is an oriented tree). We traverse $P(u,v)$ and keep track only of a constant amount of data during these traversals. Overall the running time is $O(|V(T)|^3)$.
\end{proof}

\section{Conclusions and open problems}
We sharpened results of \cite{soda_lhom} regarding LHOM($H$) in the special case when $H$ is an oriented tree. The next natural question is how far can we push these results: Is there an inductive construction and forbidden subgraph characterization of digraphs $H$ for which LHOM($H$) in $L$? If an inductive characterization exists, can it be used to provide a simpler logspace algorithm for LHOM($H$)?
\bibliography{oriented_trees.bib}
\newpage
\section{Appendix A}\label{paths}

For the sake of completeness, we give two characterizations of oriented paths that contain no $\mathrm{Z_6}$ or fuzzy $\mathrm{N}$ as an induced subgraph. Let this class of oriented paths be denoted by $\cP$. The first characterization is a special case of the inductive characterization of oriented trees. The other characterization gives an explicit ``template'' such that all oriented paths in $\cP$ must obey this template. Roughly, this template specifies in what way certain fuzzy paths can be concatenated so that the resulting path is in $\cP$.

\subsection{Oriented Paths}

\begin{definition}\label{path_constructible}
	For $i \in \{1,2\}$, let $P_i$ be either the empty digraph, or an oriented path that has a single vertex $v_i$ in its top vertex level having outdegree $0$. The operation of taking the disjoint union of $P_1,P_2$ and a single new vertex $v_0$, and then adding arcs $v_1v_0$ (if $P_1$ is non-empty) and $v_2v_0$ (if $P_2$ is non-empty) to the resulting digraph is called taking the \emph{top up-join} of $P_1$ and $P_2$.
	
	Similarly, for $i \in \{1,2\}$, let $P_i$ be either the empty digraph, or an oriented path that has a single vertex $v_i$ in its bottom vertex level having indegree $0$. The operation of taking the disjoint union of $P_1,P_2$ and a single new vertex $v_0$, and then adding arcs $v_1v_0$ (if $P_1$ is non-empty) and $v_2v_0$ (if $P_2$ is non-empty) to the resulting digraph is called taking the \emph{bottom up-join} $P_1$ and $P_2$.

	We can similarly define the \emph{top down-join} and \emph{bottom down-join} of two oriented paths.
	
	If $P$ is an oriented path with a single vertex, we say that $P$ is \emph{constructible}. Inductively, if $P_1$ and $P_2$ are constructible oriented paths (possibly empty), then their top up-join, bottom up-join, top down-join, bottom down-join (when allowed) are also \emph{constructible}.
\end{definition}
Clearly, Definition~\ref{path_constructible} this is a special case of Definition~\ref{constructible}. Therefore this construction does not produce an oriented path that contains an induced \z6\ or a \fN\ as an induced subgraph. The converse is not difficult to prove, and a proof (using different terminology) can also be found in \cite{phd}.

\begin{theorem}
	An oriented path $P$ does not contain an induced $\mathrm{Z_6}$ or fuzzy $\mathrm{N}$ if and only if $P$ is constructible.
\end{theorem}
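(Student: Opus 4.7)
My plan is to prove the two directions separately. For the easy direction, I will observe that each of the four path-specific join operations in Definition~\ref{path_constructible} is a special case of the general up-join / down-join of Definition~\ref{constructible}: take the central tree $T_0$ to be the single vertex $\{v_0\}$ and the join vertex $v_i$ to be the unique top (resp.\ bottom) vertex of $P_i$, which then lies in the top (bottom) vertex level of $P_i$ and thereby satisfies the first clause of the join-vertex condition. Hence any path-constructible path is tree-constructible, and Theorem~\ref{construction_theorem} yields the absence of induced $\mathrm{Z_6}$ and fuzzy $\mathrm{N}$.

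For the hard direction, I will proceed by strong induction on $|V(P)|$, with the base case $|V(P)|=1$ being immediate. In the inductive step, the goal is to find a vertex $v_0 \in V(P)$ whose removal disconnects $P$ into at most two induced subpaths $P_1, P_2$ such that each neighbor $v_i$ of $v_0$ in $P$ is the unique vertex of $P_i$ at its top (or bottom) level. Once such a $v_0$ is found, $P$ is realised as an appropriate path-specific join; by the inductive hypothesis each $P_i$ (which inherits the absence of $\mathrm{Z_6}$ and fuzzy $\mathrm{N}$) is path-constructible, and the proof is complete.

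To locate $v_0$ I will combine two complementary strategies. First, I will attempt \emph{endpoint peeling}: if some endpoint $v_0$ of $P$ has its unique neighbor $v_1$ occupying the level $\ell(v_1)$ alone in $P$, then the appropriate path-specific join (top or bottom, up or down, determined by the relative position of $v_0$ and $v_1$) applies with one empty subpath. If this fails at both endpoints of $P$, I will invoke Lemma~\ref{technical_lemma} to obtain an arc level $\al_\alpha$ with a single component $R$. Because $P$ is an oriented path (cycle-free with max degree $2$), $\al_\alpha$ is a disjoint union of oriented paths of height $1$, so $R$ itself is such a path; the absence of induced $\mathrm{Z_6}$ forces $R \cong \mathrm{Z}_i^s$ with $2 \leq i \leq 5$. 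By Lemma~\ref{h1}, $R$ is an in- or out-spider, and I will take $v_0$ to be its root (a local extremum of $P$). Removing $v_0$ yields at most two subpaths, and I will verify that the relevant extremum level of each subpath is uniquely occupied by the neighbor of $v_0$, by exploiting both the uniqueness of $R$ in $\al_\alpha$ and the structural implications of endpoint-peeling having failed at both ends of $P$.

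The main obstacle I anticipate is establishing that these two strategies together always succeed. When endpoint-peeling fails at both endpoints, additional vertices are forced to appear at the levels of the endpoint-neighbors, and a priori these extra vertices might also obstruct the central-vertex decomposition. The argument that this cannot happen proceeds by case analysis on the shape $\mathrm{Z}_i^s$ of $R$: the combination of endpoint-peeling failure, the uniqueness of $R$ at $\al_\alpha$, and the forbidden $\mathrm{Z_6}$ and fuzzy $\mathrm{N}$ substructures leaves no room for an obstructing extra vertex at the relevant extremum level, so the central-vertex $v_0$ yields the required decomposition. The bulk of the technical work lies in this case analysis.
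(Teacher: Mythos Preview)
Your easy direction is correct and matches the paper's one-line remark.  One small quibble: for the bottom up-join (and by symmetry the top down-join) the join vertex $v_i$ lies in the \emph{bottom} level of $P_i$, so it satisfies the \emph{second} clause of the join-vertex condition in Definition~\ref{constructible} (unique vertex at its level with positive out-degree), not the first clause as you state.  This does not affect correctness.

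For the hard direction the paper gives no proof at all---it simply says the converse ``is not difficult'' and points to \cite{phd}---so there is nothing to compare against.  Your plan is reasonable in spirit but has a real gap.  Lemma~\ref{technical_lemma} only asserts the existence of \emph{some} arc level with a single component; there may be several, and your spider-root decomposition can fail for some of them even when endpoint peeling has already failed.  Concretely, take $P=a_0a_1a_2a_3a_4$ with arcs $a_0a_1,\,a_2a_1,\,a_2a_3,\,a_3a_4$ (levels $0,1,0,1,2$).  Endpoint peeling fails at both ends because $a_1$ and $a_3$ share level $1$.  Both $\al_0$ and $\al_1$ have a single component.  For $\al_1$ the component is the single arc $a_3a_4$; its in-spider root $a_4$ \emph{is} a local maximum of $P$, but removing it leaves $P_1=a_0a_1a_2a_3$ whose top level is $\{a_1,a_3\}$, so $v_1=a_3$ is not uniquely extremal and no path-join applies; the out-spider root $a_3$ is not even a local extremum of $P$.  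Thus your claimed case analysis cannot succeed for \emph{this} choice of $\al_\alpha$, and ``endpoint-peeling failure'' is precisely the same obstruction, not a remedy.  (For $\al_0$ the in-spider root $a_1$ does give a valid bottom up-join, so a correct argument must include a rule for selecting the right arc level, or a proof that at least one choice always works; neither is in your proposal.)  Relatedly, your assertion that the spider root is automatically ``a local extremum of $P$'' is unjustified: it is a local extremum of $R$, but in $P$ it may have a neighbour in the adjacent arc level.
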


Now we give the template characterization.

\begin{definition}\label{wave}
	An oriented path $W$ is a \emph{wave} if $W$ is $\zz_k$ for some $1 \leq k \leq 5$, or $W$ is of the form $Q_1 A_1 Q_2 A_2 \dots Q_n A_n$, for some $n \geq 1$, where $A_i$ and $Q_i$ are defined as follows.	Let $P_1,\dots,P_n$ be minimal fuzzy paths such that for each $1 \leq i \leq n-1$, $height(P_i) >  height(P_{i+1})$. Then for each $1 \leq i \leq n$,
	\begin{itemize}
		\item if $i$ is odd, then $Q_i$ is of the form $P_i$, and $A_i$ is of the form $\zz_1$ or $\zz_3^{f=1}$, except that $A_n$ (if $n$ is odd) can be $\zz_j^{f=1}$ for any $j \leq 4$;	
		\item if $i$ is even, then $Q_i$ is of the form $\bar{P_i}$, and $A_i$ is $\zz_1$ or $\zz_3^{f=0}$, except that $A_n$ (if $n$ is even) can be $\zz_j^{f=0}$ for any $j \leq 4$.
	\end{itemize}
\end{definition}

\begin{theorem}\label{oriented_path_characterisation}
	Let $P$ be an oriented path that does not contain a $\mathrm{\zz_6}$ or a fuzzy $\mathrm{N}$ as an induced subgraph. Then $P$ has the form $\bar{U}AV$, or the form $r(\bar{U}AV)$, where $U$ and $V$ are waves, and $A$ is either of the form $\zz_1$ or $\zz_3^{f=0}$,
\end{theorem}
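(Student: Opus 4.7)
The plan is to prove Theorem~\ref{oriented_path_characterisation} by strong induction on $|V(P)|$. For the base case $height(P) \leq 1$, we must have $P = \mathrm{Z}_k$ for some $k \leq 5$ (else $\mathrm{Z}_6$ appears as an induced subgraph); such a $P$ is a wave, and we take $\bar{U}$ and $A$ to each be a single vertex identified with the first vertex of $V = P$ (applying $r(\cdot)$ if $P$ starts at its top level), so that $U = V = P$ trivially satisfies the template.

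For the inductive step with $h := height(P) \geq 2$, I would first apply $r(\cdot)$ if necessary to reduce to a favorable bottom configuration: either $|L_0^P| = 1$, or $|L_0^P| = 2$ with the two bottom vertices separated in $P$ by exactly one $L_1^P$-vertex (so a $\mathrm{Z}_3^{f=0}$-subpath spans them). This reduction rests on Lemma~\ref{fuzzy_N_present} together with the no-$\mathrm{Z}_6$ hypothesis. If both $L_0^P$ and $L_h^P$ contain three or more extremal vertices with interleaving, a four-vertex witness $b_1, t_1, b_2, t_2$ at levels $0$ and $h$ directly produces a fuzzy $\mathrm{N}$ (the forbidden-level conditions being vacuous at the extremes). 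If no such interleaving exists (all top vertices preceding all bottom vertices, say), then the resulting cluster of consecutive level-$L_h$ or $L_0$ vertices forces an induced $\mathrm{Z}_6$. The remaining ``non-adjacent pair'' case at one extreme is eliminated by a symmetric argument on the other extreme.

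Having reduced to the favorable configuration, I define $A$ as either $\mathrm{Z}_1$ (the unique bottom vertex) or the $\mathrm{Z}_3^{f=0}$-subpath spanning the two adjacent bottom vertices, and split $P = \bar{U}AV$. Both $U$ and $V$ then start at a bottom vertex of $P$ and contain exactly one vertex in their respective bottom levels. The auxiliary content is to show that any $Q \in \cP$ starting at its unique bottom vertex is a wave: if $height(Q) \leq 1$ then $Q = \mathrm{Z}_k$ for $k \leq 5$; otherwise let $u$ be the first vertex of $Q$ at the maximum level $h' = height(Q)$, and let $Q_1$ be the prefix of $Q$ up to $u$. By Lemma~\ref{fuzzy_if_bt}, $Q_1$ is a fuzzy path; by the choice of $u$ and the single-bottom-vertex hypothesis, $Q_1$ is minimal. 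After $Q_1$, the remainder of $Q$ cannot re-ascend to level $L_{h'}^Q$ (Lemma~\ref{fuzzy_N_present} would supply a fuzzy $\mathrm{N}$ from the paired peaks), so it descends via a linker $A_1 \in \{\mathrm{Z}_1, \mathrm{Z}_3^{f=1}\}$ (any longer oscillation at this level being ruled out by $\mathrm{Z}_6$) and continues as a strictly shorter $\cP$-path starting at its new bottom, to which the induction hypothesis applies.

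The main obstacle will be the case analysis in reducing to the favorable bottom configuration and in verifying the parity-dependent alternation $Q_i = P_i$ (odd $i$) versus $Q_i = \bar{P_i}$ (even $i$) in the wave decomposition; this is enforced by the orientation of each linker $A_i$, which is in turn determined by whether the path ``turns around'' at a top or bottom extremum. Additionally, the terminal linker $A_n$ (which may extend to $\mathrm{Z}_j^{f=0/1}$ for $j \leq 4$) requires a separate case analysis based on where $Q$ ends relative to its extremal levels, and the strict height decrease $height(P_i) > height(P_{i+1})$ must be checked by another application of Lemma~\ref{fuzzy_N_present} to each prospective pair of successive peaks of equal or increasing height.
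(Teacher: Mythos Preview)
Your overall strategy is sound and leads to the same decomposition as the paper, but there are two genuine gaps.

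\textbf{The reduction step is under-justified.} You claim that after possibly applying $r(\cdot)$ you may assume $|L_0^P|\le 2$ with the two bottom vertices (if two) spanning a $\zz_3^{f=0}$. Your sketch only treats the case ``both $L_0$ and $L_h$ contain three or more vertices with interleaving''; it does not handle, e.g., $|L_0|=2$ non-adjacent together with $|L_h|=2$ non-adjacent, nor the case $|L_0|\ge 3$ with no $b,t,b,t$ interleaving. These cases \emph{can} be dispatched, but each requires a separate application of Lemma~\ref{fuzzy_N_present} at a carefully chosen intermediate level (not at levels $0$ and $h$): for instance, when the order along $P$ is $b_1,b_2,t_1,t_2$ with $P(t_1,t_2)$ dipping to some level $m\le h-2$, one takes the last level-$m$ vertex of the fuzzy path $P(b_2,t_1)$ as $b_1'$ and a level-$m$ vertex inside $P(t_1,t_2)$ as $b_2'$, obtaining a fuzzy~$\mathrm N$ confined to $[m,h]$. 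The paper sidesteps all of this by proving instead that $P$ has at most two minimal fuzzy subpaths of height $h$ (a single three-line argument via Lemma~\ref{fuzzy_N_present}), and anchoring the split there; the bottom-level configuration then falls out \emph{a posteriori} once $U$ and $V$ are shown to be waves.

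\textbf{The auxiliary induction is mis-stated.} After peeling off $Q_1A_1$ from $Q$, the remainder $Q'$ begins at the \emph{top} vertex of $A_1$ (at level $h'$), not at a ``new bottom''. So your induction hypothesis ``$Q\in\cP$ starting at its unique bottom vertex is a wave'' does not apply directly to $Q'$. You need either to alternate the hypothesis between ``unique bottom, first vertex'' and ``unique top, first vertex'' (applying $r(\cdot)$ at each step), or to peel two segments $Q_1A_1Q_2A_2$ at a time. The paper's iterative construction (choosing each $A_i$, $Q_{i+1}$ by maximal indices and proving $height(Q_i)>height(Q_{i+1})$ in one pass) avoids this bookkeeping.

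In short: your plan works, but the reduction demands a full case analysis on the relative order of extremal vertices, and the wave induction must alternate top/bottom. The paper's route through ``at most two max-height fuzzy subpaths'' is cleaner precisely because it gives the anchor for $Q_1$ (and $Q_1'$) directly, without first controlling $|L_0|$.
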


\begin{proof}
	Assume that $P = a_1\dots a_n$. \emph{To simplify notation, we write $P(i,j)$ instead of $P(a_i,a_j)$ (denoting the subpath of $P$ starting at $a_i$ and ending at $a_j$) in this proof.} We decompose $P$ as $\bar{U}AV$ or $r(\bar{U}AV)$ for some $U = Q_1' A_1' Q_2' A_2' \dots Q_{m'}' A_{m'}'$ and $V = Q_1 A_1 Q_2 A_2 \dots Q_m A_m$, and $A$, where we are using the notation in Definition~\ref{wave}.
	
	Let the levels of $P$ be $L_0,\dots,L_h$, and assume that $h \geq 2$, since otherwise $P$ is clearly a $\zz_k$ for some $1 \leq k \leq 5$. Let $P_1$ be a minimal fuzzy subpath of $P$ of maximum height. We observe first that there can be at most $2$ such paths. If there are more, we choose three arbitrary such subpaths $R_1, R_2, R_3$. Since $height(R_1) = height(R_2) = height(R_3) = h$, $R_1, R_2, R_3$ are edge-disjoint. So we can assume w.l.o.g\ that when we traverse $P$ from first to last vertex, we first traverse the arcs of $R_1$, then the arcs of $R_2$, and finally the arcs of $R_3$. Since we have three paths, we can find two among them such that they both have their first vertices in $L_0$, or they both have their last vertices in $L_h$. Assume w.l.o.g.\ that $R_1$ and $R_2$ are such paths. Set $b_1$ and $t_1$ be the first and last vertices of $P_1$, and $b_2$ and $t_2$ be the first and last vertices of $R_2$. Clearly, $b_1,t_1,b_2,t_2$ satisfy the conditions of Lemma~\ref{fuzzy_N_present}. This contradicts that $P$ contains no fuzzy $\mathrm{N}$ as an induced subgraph.
	
	Therefore we assume first that $P$ contains only one minimal fuzzy subpath of maximum height $h$. Let this subpath be $P_1 = P(i_1,j_1)$, where $i_1 < j_1$. We define $V$. We can assume w.l.o.g.\ that $P_1$ is an upward path, in which case we show that $P$ has the form $\bar{U}AV$. (If $P_1$ is a downward path, then we can work with $r(P)$ and proceed the same way: we show that $r(P)$ has the form $\bar{U}AV$, and therefore $P$ has the form $r(\bar{U}AV)$.)
	
	We set $Q_1 = P_1$. We find $i_2$ ($j_1 \leq i_2$) maximal such that $P(j_1,i_2)$ has height at most $1$, and $P(j_1,i_2)$ is either $Z_1$ or $Z_3^{f=0}$. We set $A_1$ to be $P(j_1,i_2)$. Then we find $j_2$ ($i_2 \leq j_2$) maximal such that $\bar{P}(i_2,j_2)$ is a minimal fuzzy path, and set $Q_2 = \bar{P}(i_2j_2)$. Notice that $\bar{P}(i_2,j_2)$ can be chosen to be minimal, since otherwise either $A_1$ was chosen to be a $\zz_1$ instead of $\zz_3^{f=0}$, or if $A_1$ is a $\zz_3^{f=0}$ and $\bar{P}(i_2,j_2)$ cannot be chosen to be minimal, then $P$ contains a $\zz_6$, a contradiction. Then we find $A_2$, similarly to the way we found $A_1$. We keep on defining $Q_i$ and $A_i$ this way until the following condition applies. For some $\ell$, $Q_{\ell} = P(i_\ell,j_\ell)$ (or $Q_{\ell} = \bar{P}(i_\ell,j_\ell)$) is defined, and the rest of $P$, i.e., $P(j_\ell, n)$ has height at most $1$ ($\star$). Then we set $A_\ell = P(j_\ell, n)$, and this completes the construction of $V$.	We show below that $height(Q_i) > height(Q_{i+1})$, which clearly that condition ($\star$) will eventually occur. Therefore the definition of $V$ is valid. Notice that $A_\ell$ is some $\zz$ on at most $4$ vertices, since otherwise $A_\ell$ together with the last arc of $Q_\ell$ would form a $\zz_6$.
	
	We already saw that the $A_i$ have the right form. To prove that $V$ is a wave, it remains to show that $height(Q_i) > height(Q_{i+1})$ for each $1 \leq i \leq m-1$. By the choice of $Q_1$ we have that $height(Q_1) > height(Q_2)$. Assume inductively that $height(Q_i) > height(Q_{i+1})$, and assume for contradiction that $height(Q_{i+1}) \leq height(Q_{i+2})$. As before, it is easy to use Lemma~\ref{fuzzy_N_present} to show the presence of a \fN\ in $P$, which gives a contradiction.

	If the subpath $a_{j_1-2}a_{j_1-1}a_{j_1}$ has height $1$, then we choose $A$ to be this subpath. Otherwise, $A$ is just $a_{j_1}$. $U$ is the subpath $a_t a_{t-1} \dots a_1$, where $t = j_1$ if $A$ is $a_{j_1}$, and $t = j_1-2$ otherwise. As for $V$, we can show that $U$ is a wave.
		
	Assume now that $P$ contains two minimal fuzzy subpaths $R_1$ and $R_2$ of maximum height $h$. Suppose w.l.o.g.\ that the arcs of $R_1$ appear before the arcs of $R_2$ in $P$. We also assume w.l.o.g.\ the $R_1$ is a downward path (if $R_1$ is upward path, we can work with $r(P)$ as above). Then $R_2$ must be an upward path such either $R_1R_2$ is a subpath of $P$, or $R_1AR_2$ is a subpath of $P$, where $A$ is a $\zz_3^{f=0}$. This $A$ corresponds to the $A$ in $\bar{U}AV$ the statement of the lemma. We define $V$ to be the subpath of $P$ starting with $R_2$ and ending at $a_n$, and $\bar{U}$ to be the subpath of $P$ starting at$a_1$ and ending with $R_1$. The proof that $U$ and $V$ are waves is similar to the proof of the first case above.
\end{proof}

\subsection{Running time analysis}

We give a running time analysis of the recognition algorithm in Theorem~6.8 of \cite{soda_lhom} when run an oriented tree. We call this algorithm the N-algorithm. Note that we did not attempt to optimize the implementation of this algorithm, and it is possible that the running time could be improved with some work.

The N-algorithm first produces a digraph $G^{++}$ with vertex set $|V(H)|^3$. If $H$ is a tree, then $G^{++}$ cannot have a cycle (see the definition of $G^{++}$ in \cite{soda_lhom}), so it must be a forest. It follows that $G^{++}$ has at most $O(|V(H)|^3)$ arcs. Then for each $x,y \in V(H)$, the N-algorithm finds the set $S_{(x,x,y)}$ of all vertices reachable from $(x,x,y)$ ignoring certain arcs of $G^{++}$, and finds the set $S_{(x,y,y)}$ of all vertices reachable from $(x,y,y)$ ignoring some other certain arcs of $G^{++}$. Using BFS, each such search could take $O(|V(H)|^3)$ time. The size of each $S_{(x,x,y)}$ and $S_{(x,y,y)}$ could be $O(|V(H)|^3)$. Then the N-algorithm checks if $S_{(x,x,y)} \cap S_{(x,y,y)}$ is non-empty, for each $x, y\in V(H)$. This can be done in time $O(|V(H)|^2 \cdot (|V(H)|^3)^2) = O(|V(H)|^8)$.
\section{Appendix B: alternative proofs}

\begin{lemma}\label{d2}
	If an oriented tree $T$ is constructible, then $T$ contains neither a $\mathrm{Z_6}$ nor a fuzzy $\mathrm{N}$ as an induced subgraph.
\end{lemma}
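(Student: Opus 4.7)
The plan is induction on the construction of $T$ from Definition~\ref{constructible}. The base case (a single vertex) is trivial. For the inductive step, assume $T$ is the up-join of constructible trees $T_0, T_1, \dots, T_n$ with central vertex $v_0$ at level $\ell_0 := \ell(v_0)$ and join vertices $v_1, \dots, v_n$; the down-join case is symmetric. By the inductive hypothesis, each $T_j$ contains no induced $\mathrm{Z_6}$ or fuzzy $\mathrm{N}$, and I want to transfer this to $T$. First I would pin down the structure of the ``seam'' arc level $\al_{\ell_0-1}^T$: its arcs are the seam arcs $v_1 v_0, \dots, v_n v_0$ together with, for each $i$, any arcs of $T_i$ between $T_i$-levels $\ell(v_i)$ and $\ell(v_i)+1$. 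Condition~(2) of Definition~\ref{constructible} forces $v_i$ to be the unique vertex of $L_{\ell(v_i)}^{T_i}$ with positive out-degree in $T_i$ (vacuously when $v_i$ sits in the top level of $T_i$), so every arc of $\al_{\ell_0-1}^T$ has tail $v_i$ for some $i$, and all these local out-arcs are glued through the common head $v_0$. Hence $\al_{\ell_0-1}^T$ is a single connected in-spider with body-root $v_0$ and legs rooted at $v_1, \dots, v_n$. For every other arc level $\al_\ell^T$, each connected component is contained in some $T_j$. I will also use that every arc level is an induced subgraph of $T$ (no arcs exist within a single level of an oriented tree).

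The $\mathrm{Z_6}$ case is then immediate. An induced $\mathrm{Z_6}$ has height $1$ and so lives in some $\al_\ell^T$ as an induced subgraph. If $\ell \ne \ell_0-1$, by connectivity it lies in one $T_j$, contradicting the inductive hypothesis. If $\ell = \ell_0-1$, it would have to embed in the in-spider; but a $\mathrm{Z_6}$ has two distinct vertices of in-degree~$2$, whereas in an in-spider only $v_0$ has in-degree~$\ge 2$ (distinct legs share no leaf), a contradiction.

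For the fuzzy $\mathrm{N}$ case, suppose $X \subseteq T$ is an induced fuzzy $\mathrm{N}$. Since all seam arcs share $v_0$, the path $X$ uses at most two of them; if it uses none, then $X$ lies inside a single $T_j$ by connectivity and we contradict the inductive hypothesis. Otherwise $v_0 \in V(X)$, and I would apply Lemma~\ref{fuzzy_N_present} to $X$ to extract witnesses $b_1, t_1, b_2, t_2$, so that $Z := X(b_1, t_2)$ stays within $L_\alpha^T, \dots, L_\beta^T$ with $\beta \ge \alpha+2$. The plan is then to show that $Z$ itself is contained in a single $T_j$. If the level window $[\alpha, \beta]$ avoids $\ell_0$ or $\ell_0-1$, then $Z$ uses no seam arc and sits inside one $T_j$. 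Otherwise the in-spider rigidity (only $v_0$ has multiple in-neighbours; distinct legs are pairwise disjoint), combined with the monotone level profile of the three minimal fuzzy subpaths making up $X$, should force the whole witness configuration $(b_1, t_1, b_2, t_2)$ onto one side of $v_0$. Applying Lemma~\ref{fuzzy_N_present} in the reverse direction inside that $T_j$ then produces an induced fuzzy $\mathrm{N}$ in $T_j$, contradicting the inductive hypothesis.

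The main obstacle is exactly this last sub-case: controlling what $X$ can look like when it actually crosses the seam. One must carefully track how $v_0$, appearing as a ``peak'' of $X$ at level $\ell_0$, fits into the $P_1 T' \bar{P}_2 B P_3$ decomposition of a fuzzy $\mathrm{N}$; the in-spider rigidity (that the two sides of $v_0$ in $X$ must pass through distinct $v_i, v_j$ at the lower seam level) is what should ultimately prevent the two halves of $X$ around $v_0$ from jointly realizing three minimal fuzzy subpaths of equal height~$\ge 2$, and this verification requires a detailed case analysis on where $v_0$ sits in the decomposition of $X$ and on whether $X$ uses one or two seam arcs.
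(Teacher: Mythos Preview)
Your overall plan --- induction on the construction, identification of the seam arc level $\al_{\ell_0-1}^T$ as a single in-spider, and the $\mathrm{Z_6}$ case --- is correct and matches the paper's proof essentially line for line.

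The fuzzy $\mathrm{N}$ case is where your proposal diverges and where the genuine gap lies. Your strategy is to extract Lemma~\ref{fuzzy_N_present} witnesses $b_1,t_1,b_2,t_2$ from $X$ and argue that they can be confined to a single $T_j$, then re-apply Lemma~\ref{fuzzy_N_present} there. But as you yourself note, when $X$ actually crosses the seam this is left as a ``detailed case analysis,'' and in fact the claim that the witnesses end up ``on one side of $v_0$'' is not right as stated: if $X$ is the fuzzy $\mathrm{N}$ and $v_0$ is an interior vertex of $X$, then with the natural witnesses one has $Z=X(b_1,t_2)=X$, which is not on one side of $v_0$. So the route through Lemma~\ref{fuzzy_N_present} does not close easily.

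The paper avoids this detour entirely. It works directly with the decomposition $N=P_1\,T'\,\bar P_2\,B\,P_3$ and shows that a fuzzy $\mathrm{N}$ using a seam arc can extend into neither $L_{\ell_0+1}$ nor $L_{\ell_0-2}$, hence has height at most $1$, contradicting the definition. The key observation you are missing is this: $P_1$ and $\bar P_2$ share a top anchor $t_1$ (the last vertex of $P_1$, joined to the first vertex of $\bar P_2$ through $T'$), and both $P_1$ and $\bar P_2$ span the full height of $N$. If $N$ reaches $L_{\ell_0+1}$, then $t_1$ lies in some $T_j$ at level $\geq \ell_0+1$, and both $P_1$ and $\bar P_2$ must descend from there to level $\ell_0-1$. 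But by the constructibility constraint, the only vertex through which one can cross from level $\ell_0-1$ to level $\ell_0$ inside $T_j$ is $v_j$ (or $v_0$, when $j=0$); hence both $P_1$ and $\bar P_2$ pass through that single vertex as well as through $t_1$, producing a cycle in the tree $T$. The symmetric argument with $\bar P_2$, $P_3$, and their shared bottom anchor $b_2$ rules out extending into $L_{\ell_0-2}$. This ``two subpaths forced through a common gateway vertex $\Rightarrow$ cycle in a tree'' contradiction is the missing idea; it replaces your anticipated case analysis with a two-line argument and never needs Lemma~\ref{fuzzy_N_present} at all.
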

\begin{proof}
	Assume that $T$ is the up-join of $T_0,\dots,T_n$, and $T_0,\dots,T_n$ do not contain a $\mathrm{Z_6}$ or fuzzy $\mathrm{N}$ as an induced subgraph. Clearly, $T$ can contain a $\mathrm{Z_6}$ only in $\al_{\ell(v_0)-1}$. Note that by Definition~\ref{L_construction}, each $T_i$ contains at most one component in $\al_{\ell(v_0)-1}$, which is an out-star since $T_i$ contains at most one vertex in $L_{\ell(v_0)-1}$ with out-degree more than $0$. It follows that level $\al_{\ell(v_0)-1}$ of $T$ is an in-spider, and therefore it does not contain a $\mathrm{Z_6}$ as an induced subgraph.
	
	Suppose that $T$ contains a fuzzy $\mathrm{N}$ denoted by $N$. Let $P_1, \bar{P}_2, P_3$ the subpaths of $N$ as defined in Definition~\ref{def_fuzzyN}. Let $t_1$ be the last vertex of $P_1$, and $b_2$ be the first vertex of $P_2$. Let $v_0,v_1,\dots,v_n$ be the central and join vertices of $T$. Since $T_0,\dots,T_n$ do not contain a fuzzy $\mathrm{N}$, $N$ must contain an arc of the form $v_iv_0$, where we assume without loss of generality that $i=1$. It follows that $N$ must contain vertices both in $L_{\ell(v_0)}$ and $L_{\ell(v_1)}$.
	
	We show that $N$ cannot contain a vertex both in $L_{\ell(v_0)+1}$ and $L_{\ell(v_1)-1}$, contradicting that $N$ has height at least $2$. Assume first that $N$ contains a vertex in $L_{\ell(v_0)+1}$. By the definition of a fuzzy $\mathrm{N}$, then each of $P_1$, $\bar{P}_2$ and $P_3$ must contain a vertex in $L_{\ell(v_0)+1}$. Consider $j$ such that $t_1$ is a vertex of $T_j$. Suppose first that $j \neq 0$. Then both $P_1$ and $\bar{P}_2$ have their first vertex in $T_j$. Also, both $P_1$ and $\bar{P}_2$ must contain a vertex in $L_{\ell(v_1)}$. Since $v_j$ is the only vertex of $T_j$ in $L_{\ell(v_1)}$ with more than zero outneighbours in $T_j$, both $P_1$ and $\bar{P}_2$ must contain $v_j$. But then $T_j$ must contain a cycle, contradicting that $T_j$ is a tree. If $j = 0$, then since $v_0$ is the only vertex of $T_0$ with inneighbours in $L_{\ell(v_1)}$, both $P_1$ and $\bar{P}_2$ must contain a $v_0$, and thus $T_0$ contains a cycle. This is impossible.
		
	Assume therefore that $N$ contains a vertex in $L_{\ell(v_1)-1}$. We find $j$ such that $b_2$ is a vertex of $T_j$. As above, it follows that both $\bar{P}_2$ and $P_3$ contain vertex $v_j$, indicating that $T_j$ contains a cycle. This is a contradiction.
	
	The analysis is analogous when $T$ is obtained using a down-join operation.	
\end{proof}

The proof of the following lemma is easy, or it can be extracted from \cite{soda_lhom}.
\begin{lemma}\label{imp_prop}
	Let $H$ be a digraph. Then if $H$ contains a circular $\mathrm{N}$ with congruent walks $X = x_0 x_1\dots x_n$, where $x_0=x_n=x$, $Y = y_0 y_1 \dots y_n$, where $y_0 = y_n = y$, and $Z = z_0 z_1 \dots z_n$, where $z_0 = y$ and $z_n = x$, then $H$ has the following \emph{implication property}. Let $P = p_0 p_1 \dots p_n$ be an oriented path congruent to $X$ (and hence also to $Y$ and $Z$) with lists $L(p_i) = \{x_i, y_i, z_i\}$, for each $0 \leq i \leq n$. Then there are list homomorphisms $\varphi_{xx}$, $\varphi_{yy}$, and $\varphi_{yx}$, each from $P$ to $H$ such that
	\begin{itemize}
		\item $\varphi_{xx}(p_0) = \varphi_{xx}(p_n)=x$,
		\item $\varphi_{yy}(p_0) = \varphi_{yy}(p_n)=y$,
		\item $\varphi_{yx}(p_0)=y$ and $\varphi_{yx}(p_n)=x$,
	\end{itemize}
	and there is no list homomorphism $\varphi_{xy}$ from $P$ to $H$ such that $\varphi_{xy}$ maps $p_0$ to $x$ and $p_n$ to $y$.
\end{lemma}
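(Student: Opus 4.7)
The plan is to verify the three existence claims directly using the given walks, and then prove nonexistence by a pigeonhole-style argument that exploits the avoid/protect conditions.

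For existence, I would define $\varphi_{xx}(p_i) := x_i$, $\varphi_{yy}(p_i) := y_i$, and $\varphi_{yx}(p_i) := z_i$. Each map respects the lists because $x_i, y_i, z_i \in L(p_i)$, and each is arc-preserving because $X$, $Y$, $Z$ are walks in $H$ congruent to $P$. The prescribed endpoint values then follow from $x_0 = x_n = x$, $y_0 = y_n = y$, $z_0 = y$, and $z_n = x$.

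For nonexistence (assuming $x \neq y$, which is the case of interest), I would suppose for contradiction that $\varphi_{xy}$ is a list homomorphism with $\varphi_{xy}(p_0) = x$ and $\varphi_{xy}(p_n) = y$. Let $i$ be the largest index with $\varphi_{xy}(p_i) = x_i$; then $0 \leq i < n$. The arc from $\varphi_{xy}(p_i) = x_i$ to $\varphi_{xy}(p_{i+1}) \in \{y_{i+1}, z_{i+1}\}$ is faithful in the direction prescribed by $P$, so if $\varphi_{xy}(p_{i+1}) = y_{i+1}$ as a vertex of $H$ we would obtain a faithful $X$-to-$Y$ arc, contradicting avoidance. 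Hence $\varphi_{xy}(p_{i+1}) = z_{i+1}$ (with $z_{i+1} \neq y_{i+1}$), producing a faithful arc $x_i z_{i+1}$. Dually, I would let $k$ be the largest index $\geq i+1$ with $\varphi_{xy}(p_k) = z_k$; this exists by the previous step, and $k < n$ because $\varphi_{xy}(p_n) = y \neq x = z_n$. The maximality of $i$ forces $\varphi_{xy}(p_{k+1}) \neq x_{k+1}$, and the maximality of $k$ forces $\varphi_{xy}(p_{k+1}) \neq z_{k+1}$, so $\varphi_{xy}(p_{k+1}) = y_{k+1}$ and the arc $z_k y_{k+1}$ is faithful. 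Since $k \geq i+1 > i$, the two faithful arcs $x_i z_{i+1}$ and $z_k y_{k+1}$ contradict the hypothesis that $Z$ protects $Y$ from $X$.

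The main subtlety I anticipate is that the vertices $x_j, y_j, z_j$ may coincide for various $j$, making it ambiguous to describe ``which walk $\varphi_{xy}(p_j)$ lies on'' from its value alone. I would sidestep this by phrasing every step in terms of the literal equalities $\varphi_{xy}(p_j) = x_j$, $\varphi_{xy}(p_j) = y_j$, or $\varphi_{xy}(p_j) = z_j$ between vertices of $H$; under this phrasing, a coincidence of walk-vertices only identifies additional arcs as faithful, which \emph{strengthens} rather than weakens the forbidden-arc conclusions drawn from the avoid and protect hypotheses, so the argument goes through uniformly.
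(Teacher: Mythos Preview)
Your argument is correct and is precisely the natural one; the paper itself omits a proof of this lemma, saying only that it ``is easy, or it can be extracted from \cite{soda_lhom}.'' Your handling of the coincidence issue is also sound: at each step the exclusions $\varphi_{xy}(p_j)\neq x_j$, $\varphi_{xy}(p_j)\neq y_j$, $\varphi_{xy}(p_j)\neq z_j$ are literal vertex inequalities in $H$, so whenever two of $x_j,y_j,z_j$ happen to coincide the argument simply reaches a contradiction one step earlier.
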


\begin{lemma}\label{cN->z}
	If an oriented tree $T$ contains a circular $\mathrm{N}$, then it contains a $\mathrm{Z_6}$ or a fuzzy $\mathrm{N}$ as an induced subgraph.
\end{lemma}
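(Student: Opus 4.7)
The plan is a direct construction: I will extract specific vertices of $T$ from the congruent walks of the \cN\ and verify that they form either an induced \z6\ or, via Lemma~\ref{fuzzy_N_present}, an induced \fN.

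Let $X = x_0\cdots x_n$, $Y = y_0\cdots y_n$, $Z = z_0\cdots z_n$ be the congruent walks witnessing the \cN, with $x_0 = x_n = x$, $y_0 = y_n = y$, $z_0 = y$, $z_n = x$. First I would establish a level analysis. Every closed walk in a tree traverses each undirected edge equally often in each direction, so $net(X) = net(Y) = 0$; congruence forces $net(Z) = 0$, hence $\ell(x) = \ell(y)$. Writing $f(i)$ for the partial net length of the common pattern, this yields the stronger fact that at every step all three walks lie at the same absolute level of $T$: $\ell(x_i) = \ell(y_i) = \ell(z_i) = \ell(x) + f(i)$. Moreover $x \ne y$, because otherwise the homomorphism $\varphi_{xx}$ of Lemma~\ref{imp_prop} would already play the role of the forbidden $\varphi_{xy}$.

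The heart of the argument then splits on the range of the height profile $f$. In the \emph{flat case} $\max f - \min f \le 1$, the three walks live in only two levels of $T$; I claim that an induced \z6\ is present on the vertices $x_0, x_1, y_0, y_1, z_0, z_1$ (or a small subset thereof), because avoidance of $Y$ by $X$ together with protection of $Y$ by $Z$ rules out the chords that would otherwise prevent the subgraph from being induced. In the \emph{non-flat case} $\max f - \min f \ge 2$, let $s$ be the index provided by the protection condition. I would choose $b_1$ and $t_1$ to be $z$-vertices at a minimum and a maximum of $f$ on the interval $[0,s]$, and $b_2, t_2$ to be $x$- or $y$-vertices at the corresponding minimum and maximum on $[s+1,n]$; the level identity above places them at the correct levels, and the avoidance and protection conditions ensure that the unique oriented path of $T$ through these four vertices stays within the level interval $[\ell(b_1), \ell(t_1)]$, because any escape would produce a faithful arc contradicting either avoidance (between $X$ and $Y$) or protection (between $X$ and $Z[0,s]$, or between $Z[s+1,n]$ and $Y$). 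Lemma~\ref{fuzzy_N_present} then yields the desired \fN.

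The main obstacle will be in coping with walks that make many backtracks in $T$ and pass through the candidate vertices several times. I would first pass to a minimal \cN, minimizing $n$ and then the number of edges used by $X \cup Y \cup Z$, so that after minimization, indices at which $f$ takes the same value correspond to the same vertex in each walk; this pins the key vertices down uniquely and makes the verification of the hypotheses of Lemma~\ref{fuzzy_N_present} routine.
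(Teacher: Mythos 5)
Your overall strategy is genuinely different from the paper's: the paper proves the contrapositive, combining the implication property of a circular $\mathrm{N}$ (Lemma~\ref{imp_prop}) with the up-join/down-join decomposition of a tree having no induced $\mathrm{Z_6}$ or fuzzy $\mathrm{N}$, and splices $\varphi_{xx}$, $\varphi_{yy}$, $\varphi_{yx}$ through the bottleneck vertex $v_0$ to manufacture the forbidden $\varphi_{xy}$. Your direct extraction, however, has concrete gaps at both of its main steps. In the flat case, the claimed $\mathrm{Z_6}$ on $x_0,x_1,y_0,y_1,z_0,z_1$ cannot be right: $z_0=y_0$, so these are at most five distinct vertices, and already in the canonical example where $T$ is itself the $\mathrm{Z_6}$ with arcs $01,21,23,43,45$ and $X=01010$, $Y=45454$, $Z=43210$ (as in Lemma~\ref{z->cN}), the set $\{x_0,x_1,y_0,y_1,z_1\}=\{0,1,3,4,5\}$ omits vertex $2$ and induces the arc $01$ disjoint from $3\leftarrow 4\to 5$ --- not a $\mathrm{Z_6}$. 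Locating the $\mathrm{Z_6}$ genuinely requires analyzing the height-$1$ component containing all three walks (e.g., via the contrapositive of Lemma~\ref{h1} one must show that spiders admit no circular $\mathrm{N}$), and that argument is absent.

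In the non-flat case there are three independent problems. First, four vertices of a tree need not lie on a single oriented path, so condition (1) of Lemma~\ref{fuzzy_N_present} is not automatic and you never exhibit the path $P$ to which that lemma is applied. Second, condition (2) needs $\ell(b_1)=\ell(b_2)$ and $\ell(t_1)=\ell(t_2)$, but the minimum (maximum) of $f$ on $[0,s]$ need not equal the minimum (maximum) of $f$ on $[s+1,n]$. Third, avoidance and protection only forbid certain arcs $x_iy_{i+1}$, $x_iz_{i+1}$, $z_jy_{j+1}$ between walk vertices at consecutive indices; they say nothing about the level profile of the interior of the tree path joining your four chosen vertices, which is what condition (3) demands. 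Finally, the minimization you invoke to ``cope with backtracks'' cannot deliver the property you want: even in the minimal example above, $Z=43210$ has $f=(0,1,0,1,0)$ and visits the three distinct vertices $4,2,0$ at the indices where $f=0$ --- any walk that actually travels from $y$ to $x$ must do this --- so equal values of $f$ do not pin down vertices, and the verification you call routine is not available.
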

\begin{proof}
	We show that if $T$ contains neither an induced $\mathrm{Z_6}$ nor a fuzzy $\mathrm{N}$, then $T$ cannot have the implication property. This implies by Lemma~\ref{imp_prop} that $T$ cannot have a circular $\mathrm{N}$.
	
	Since we are assuming that $T$ has no induced $\mathrm{Z_6}$ or fuzzy $\mathrm{N}$, $T$ is the up-join (or down-join) of some trees $T_0,\dots,T_n$. We suppose that the central vertex is $v_0$, and the join vertices are $v_1,\dots,v_n$. We assume that $T$ is the up-join and note that the analysis for down-join is similar. We inductively assume that we already showed that none of $T_0,\dots,T_n$ contains a circular $\mathrm{N}$ (since they do not contain an induced $\mathrm{Z_6}$ or fuzzy $\mathrm{N}$). Therefore the fuzzy $\mathrm{N}$ in $T$ must contain an arc $v_iv_0$ for some $1 \leq i \leq n$. Let $X$ (with first vertex $x$), $Y$ (with first vertex $y$) and $Z$ be the congruent walks making up the circular $\mathrm{N}$ in $T$, and let $P = p_0 \dots p_n$ be a path congruent to $X$. Let $\varphi_{xx}$, $\varphi_{yy}$, and $\varphi_{yx}$ be the list homomorphisms associated with the implication property from Lemma~\ref{imp_prop}. Informally, all arguments below will use the fact that $v_0$ is a ``bottleneck vertex'' in $T$.
	
	Suppose first that $x \in V(T_{j_x})$ and $y \in V(T_{j_y})$. \emph{Assume that $j_x \neq j_y$}. Let $p_\alpha$ and $p_{\beta}$ be vertices of $P$ such that
	\begin{itemize}
		\item $\alpha$ is minimum such that $\varphi_{yx}(p_\alpha) = v_0$,
		\item if $p_{\alpha}p_{\alpha + 1}$ is a forward arc,
		\begin{itemize}
			\item then let $\beta$ be minimum such that $\alpha < \beta$, $\varphi_{yx}(p_\beta) = v_0$, and $p_{\beta}p_{\beta+1}$ is a backward arc
			\item otherwise set $\beta = \alpha$.
		\end{itemize}
	\end{itemize}
	It is not difficult to see that such $\alpha$ and $\beta$ must exist. This is because $P$ is a path, so the image of $P$ in $T$ must contain a path between $y$ and $x$, and such a path must enter $T_0$ through $v_0$ and it must also leave $T_0$ through $v_0$. Observe that it follows from the above definition that the image of $P(p_\alpha,p_\beta)$ under $\varphi_{yx}$ is entirely in $T_0$.
	
	We modify $\varphi_{xx}$ as
	\[\varphi_{xx}'(u)=
	\begin{cases}
	\varphi_{xx}(u) \text{ if $u$ is not a vertex of $P(p_\alpha,p_\beta)$}\\
	\varphi_{yx}(u) \text{ if $u$ is a vertex of $P(p_\alpha,p_\beta)$}
	\end{cases}
	\]
	We claim that $\varphi_{xx}'$ is a list homomorphism from $P$ to $T$. Clearly, we only need to check that arcs $p_{\alpha-1}p_{\alpha}$ and $p_{\beta}p_{\beta+1}$ are mapped to an arc of $T$. Since $p_{\alpha}$ has indegree $1$ (by minimality of $\alpha$), and $p_{\alpha}$ is mapped to a vertex in level $L_{\ell(v_0)}$, $p_{\alpha-1}$ must be mapped to $v_i$ for some $1 \leq i \leq n$. (Recall that $v_1,\dots,v_n$ are the only vertices of $T_1,\dots,T_n$ in level $L_{\ell(v_0-1)}$ with outdegree greater than zero.) So $\varphi_{xx}'(p_{\alpha-1}) = v_i$ and $\varphi_{xx}'(p_{\alpha}) = v_0$, and thus $p_{\alpha-1}p_{\alpha}$ is mapped to an arc of $T$. An analogous argument shows that $p_{\beta}p_{\beta+1}$ is also mapped to an arc. In addition, we can show in a similar way that we can construct $\varphi_{yy}'$ that agrees with $\varphi_{yy}$ everywhere, except that it takes the value of $\varphi_{yx}$ on vertices of $P(p_\alpha,p_\beta)$.
	
	Now we can construct $\varphi_{xy}$, the list homomorphism that maps $p_0$ to $x$ and $p_n$ to $y$ showing that $T$ does not have the implication property. The function $\varphi_{xy}$ on $P(p_0,p_{\alpha-1})$ is $\varphi_{xx}'$.  On $P(p_\alpha, p_\beta)$, $\varphi_{xy}$ is $\varphi_{yx}$, and on $P(p_{\beta+1},p_n)$, $\varphi_{xy}$ is $\varphi_{yy}'$. Since $\varphi_{xx}'(p_\alpha) = \varphi_{yx}(p_\alpha) = v_0$, and $\varphi_{yx}(p_\beta) = \varphi_{yy}'(p_n) = v_0$, $\varphi_{xy}$ is a list homomorphism from $P$ to $T$, giving a contradiction.
	
	Assume therefore that $j_x = j_y$. Then if $\varphi_{yx}$ maps any vertex of $P$ to $v_0$, then we can use the same argument as above.  If $\varphi_{yx}$ does not map any vertex of $P$ to $v_0$, then at least one of $\varphi_{xx}$ or $\varphi_{yy}$ must map a vertex of $P$ to $v_0$, since otherwise there would be a circular $\mathrm{N}$ entirely in $T_{j_x }$, contradicting the induction hypothesis. A similar argument can be used to construct a list homomorphism $\varphi_{xy}$ mapping $p_0$ to $x$ and $p_n$ to $y$ from $\varphi_{xx}$ and $\varphi_{yy}$. The existence of $\varphi_{xy}$ gives a contradiction.
		
	The remaining cases can be analyzed in a very similar way, and therefore we give only brief arguments. Assume that $x,y \in T_0$. Let $\varphi$ be any list homomorphism associated with the implication property. Then we have that $\varphi(p_0), \varphi(p_n) \in \{x,y\}$.
	
	Let $\alpha$ and $\beta$ be such that
	\begin{itemize}
		\item $\alpha$ is minimum such that $\varphi(p_{\alpha}) \in L_{\ell(v_0)}$, and $p_{\alpha} p_{\alpha+1}$ is a backward arc. (Note that such an $\alpha$ must exist since otherwise there is a \cN\ entirely in $T_0$.)
		\item $\beta$ is maximum such that $\varphi(p_\beta) \in L_{\ell(v_0)}$, and $p_{\beta-1} p_\beta$ is a forward arc.
	\end{itemize}
	Note that $\varphi$ (any of the list homomorphisms associated with the implication property) maps $p_{\alpha}$ and $p_{\beta}$ to $v_0$. Therefore let $\varphi_{xy}$ be the function that is $\varphi_{xx}$ on vertices of $P(p_0,p_\beta)$, and $\varphi_{yy}$ on vertices of $P(\beta+1,n)$. Since $\varphi_{xx}(p_\beta) = v_0 = \varphi_{yy}(p_\beta)$, $\varphi_{xy}$ is a list homomorphism, and we have the desired contradiction.
	
	Assume next that $x \in V(T_0)$ and $y \in V(T_q)$ for some $1 \leq q \leq n$. (The case when $x \in V(T_0)$ and $y \in V(T_q)$ can be analyzed similarly.) We define $\alpha$ and $\beta$ as in the previous case. When $\varphi$ (above) is $\varphi_{xx}$, $\varphi_{xx}(p_{\alpha}) = \varphi_{xx}(p_{\beta}) = v_0$. Then let $\varphi_{xy}$ be the function that is $\varphi_{xx}$ on vertices of $P(p_0,p_\beta)$, and $\varphi_{yy}$ on vertices of $P(p_{\beta+1},p_n)$. As before, we can check that $\varphi_{xy}$ is a list homomorphism, and we have the desired contradiction.
\end{proof}
	
	\begin{lemma}\label{no_HM}
		Let $T$ be an oriented tree. If $T$ contains a $\mathrm{Z_6}$ or a fuzzy $\mathrm{N}$ as an induced subgraph, then $T$ does not have conservative polymorphisms $f_1$, $f_2$ and $f_3$ that form a Hagemann-Mitschke chain.
	\end{lemma}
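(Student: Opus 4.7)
The plan is to derive a contradiction by promoting the circular $\mathrm{N}$ obstruction into an explicit ``missing'' list homomorphism that an HM chain of length $3$ would nonetheless construct. Concretely, I would first apply Lemma~\ref{z->cN} to conclude that $T$ contains a circular $\mathrm{N}$ consisting of congruent walks $X$ (from $x$ to $x$), $Y$ (from $y$ to $y$), $Z$ (from $y$ to $x$), with $x \neq y$ (distinctness is automatic from the construction in Lemma~\ref{z->cN}). Then Lemma~\ref{imp_prop} yields a path $P = p_0 p_1 \dots p_n$ congruent to $X$, lists $L(p_j) = \{x_j, y_j, z_j\}$, and list homomorphisms $\varphi_{xx}, \varphi_{yy}, \varphi_{yx} : P \to T$ realizing three of the four possible endpoint assignments, while there is no list homomorphism sending $p_0 \mapsto x$ and $p_n \mapsto y$.

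Next, assume for contradiction that $T$ admits a conservative HM chain $f_1, f_2, f_3$. For each $i$, define $\psi_i : V(P) \to V(T)$ by
\[
\psi_i(p_j) \;=\; f_i\bigl(\varphi_{xx}(p_j),\ \varphi_{yx}(p_j),\ \varphi_{yy}(p_j)\bigr) \;=\; f_i(x_j, z_j, y_j).
\]
Since each $f_i$ is a polymorphism and $X, Y, Z$ are congruent walks, each $\psi_i$ is a homomorphism $P \to T$; conservativity of $f_i$ forces $\psi_i(p_j) \in \{x_j, y_j, z_j\} = L(p_j)$, so each $\psi_i$ is in fact a list homomorphism.

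Now I would compute endpoint values using that $(x_0, z_0, y_0) = (x, y, y)$ and $(x_n, z_n, y_n) = (x, x, y)$. Setting $a := f_2(x,y,y) = f_1(x,x,y)$ and $b := f_3(x,y,y) = f_2(x,x,y)$ (these equalities use the middle HM identities), the outer identities $f_1(x,y,y) = x$ and $f_3(x,x,y) = y$ give the telescoping picture
\[
\psi_1(p_0) = x,\ \psi_1(p_n) = a; \quad \psi_2(p_0) = a,\ \psi_2(p_n) = b; \quad \psi_3(p_0) = b,\ \psi_3(p_n) = y.
\]
By conservativity $a, b \in \{x, y\}$, and a three-case split closes the argument: if $a = y$ then $\psi_1$ witnesses the forbidden $\varphi_{xy}$; if $a = x$ and $b = y$ then $\psi_2$ does; and if $a = b = x$ then $\psi_3$ does. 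Any outcome contradicts the implication property, proving no such HM chain exists.

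The only non-obvious step is setting up the right polymorphism-input alignment: feeding the triple $(\varphi_{xx}, \varphi_{yx}, \varphi_{yy})$ (in that order) into $f_i$ is what makes the HM identities collapse the endpoints into the telescoping sequence $x \to a \to b \to y$. Once that alignment is noticed, the case analysis is essentially mechanical, and everything else reduces to citing Lemma~\ref{z->cN} and Lemma~\ref{imp_prop}.
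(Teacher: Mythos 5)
Your proof is correct, but it takes a more self-contained route than the paper. The paper's proof of this lemma is a two-line citation: it invokes Theorem~\ref{circ_N_ind} to get a circular $\mathrm{N}$ and then appeals to the external result in the SODA 2014 paper that a digraph containing a circular $\mathrm{N}$ admits no conservative HM-chain of any length. You instead unpack that second step inside the paper's own machinery: after obtaining the circular $\mathrm{N}$ via Lemma~\ref{z->cN}, you pass to the implication property of Lemma~\ref{imp_prop} and run the telescoping argument, feeding the triple of list homomorphisms (in the order $\varphi_{xx}, \varphi_{yx}, \varphi_{yy}$, i.e., $f_i(x_j, z_j, y_j)$) into each $f_i$. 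Each $\psi_i$ is indeed a list homomorphism (congruence of $X,Y,Z$ plus the polymorphism condition handles both forward and backward arcs, and conservativity keeps values in the lists), the HM identities chain the endpoints as $x \to a \to b \to y$ with $a, b \in \{x,y\}$ by conservativity, and every case produces the forbidden $\varphi_{xy}$. The aside about $x \neq y$ is not actually needed: the contradiction is directly between the list homomorphism you construct and the nonexistence clause of Lemma~\ref{imp_prop}. What your version buys is a proof that does not lean on the black-box claim from \cite{soda_lhom} (at the cost of only covering length $3$, which is all the lemma asserts; the same telescoping clearly extends to any length). The paper's version buys brevity and the stronger ``no HM-chain of any length'' conclusion for free.
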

	\begin{proof}
		If $T$ contains a $\mathrm{Z_6}$ or a fuzzy $\mathrm{N}$ as an induced subgraph, then $T$ contains a circular $\mathrm{N}$ by Theorem~\ref{circ_N_ind}. If $T$ contains a circular $\mathrm{N}$, then $T$ does not have conservative polymorphisms that form a Hagemann-Mitschke chain of any length by \cite{soda_lhom}.
	\end{proof} 
\end{document}